\definecolor{TSUYUKUSA}{RGB}{46, 169, 223}
\definecolor{KURENAI}{RGB}{203, 27, 69}
\newtheorem{theorem}{Theorem}[section]
\newtheorem{fact}[theorem]{Fact}
\newtheorem{lemma}[theorem]{Lemma}
\newtheorem{definition}[theorem]{Definition}
\newcommand{\fig}[1]{\hyperref[fig:#1]{Figure~\ref*{fig:#1}}}
\newcommand{\eq}[1]{\hyperref[eq:#1]{(\ref*{eq:#1})}}
\newcommand{\lem}[1]{\hyperref[lem:#1]{Lemma~\ref*{lem:#1}}}
\newcommand{\thm}[1]{\hyperref[thm:#1]{Theorem~\ref*{thm:#1}}}
\newcommand{\defi}[1]{\hyperref[def:#1]{Definition~\ref*{def:#1}}}
\newcommand{\app}[1]{\hyperref[sec:#1]{Appendix~\ref*{sec:#1}}}
\newcommand{\fct}[1]{\hyperref[fact:#1]{Fact~\ref*{fact:#1}}}
\newcommand{\sect}[1]{\hyperref[sec:#1]{Section~\ref*{sec:#1}}}
\newcommand{\subsec}[1]{\hyperref[subsec:#1]{Subsection~\ref*{subsec:#1}}}
\newcommand{\itm}[2]{\hyperref[itm:#1]{#2}}
\newcommand{\clm}[1]{\hyperref[clm:#1]{Claim~\ref*{clm:#1}}}
\newcommand{\rmk}[1]{\hyperref[rmk:#1]{Remark~\ref*{rmk:#1}}}
\newcommand{\tbl}[1]{\hyperref[tbl:#1]{Table~\ref*{tbl:#1}}}
\newtheorem*{rep@theorem}{\rep@title}
\newcommand{\newreptheorem}[2]{%
\newenvironment{rep#1}[1]{%
 \def\rep@title{#2 \ref*{##1}}%
 \begin{rep@theorem}}%
 {\end{rep@theorem}}}
\definecolor{lightcyan}{RGB}{0.88,1,1}
\definecolor{darkgreen}{RGB}{0, 128, 0}
\definecolor{darkblue}{RGB}{0, 0, 128}
\newcommand{\N}{\mathbb{N}}  %
\newcommand{\R}{\mathbb{R}} %
\DeclareMathOperator*{\E}{\mathbb{E}}  %
\newcommand{\A}{\mathcal{A}}  %
\newcommand{\F}{\mathcal{F}} %
\newcommand{\NN}{\mathcal{N}} %
\newcommand{\EE}{\mathcal{E}}  %
\newcommand{\bit}[1]{\{0,1\}^{#1}} %
\newcommand{\expec}[1]{\E\!\Br{#1}} %
\newcommand{\expect}[2]{\E_{\substack{#1}}\!\Br{#2}} %
\newcommand{\sign}[1]{\mathrm{sign}\!\br{#1}}
\newcommand{\grad}[2]{\nabla^{#1}{#2}}
\newcommand{\br}[1]{\left(#1\right)} %
\newcommand{\Br}[1]{\left[#1\right]} %
\newcommand{\st}[1]{\left\{#1\right\}} %
\newcommand{\abs}[1]{\left|#1 \right|} %
\newcommand{\norm}[1]{\left\lVert #1 \right\rVert} %
\newcommand{\poly}[1]{\mathrm{poly}\!\br{#1}} %
\newcommand{\id}{\ensuremath{\mathds{1}}} %
\begin{document}


\title{A Pseudorandom Generator for Functions of Low-Degree Polynomial Threshold Functions}

\author{Penghui Yao\thanks{\scriptsize State Key Laboratory for Novel Software Technology, New Cornerstone Science Laboratory, Nanjing University, Nanjing 210023, China. Email:
    \texttt{phyao1985@gmail.com}.}~\thanks{\scriptsize Hefei National
    Laboratory, Hefei 230088, China.} %
  \and Mingnan Zhao\thanks{\scriptsize State Key Laboratory for Novel Software Technology, New Cornerstone Science Laboratory, Nanjing University, Nanjing 210023, China. Email:
    \texttt{mingnanzh@gmail.com}.}  }
\date{April 15, 2025}
\maketitle

\thispagestyle{empty}
\begin{abstract}
  Developing explicit \emph{pseudorandom generators} (PRGs) for prominent categories of Boolean functions is a key focus in computational complexity theory.
In this paper, we investigate the PRGs against the
functions of degree-$d$
\emph{polynomial threshold functions} (PTFs)
over Gaussian space.
Our main result is an explicit construction of PRG
with seed length $\poly{k,d,1/\epsilon}\cdot\log n$
that can fool \emph{any} function of $k$
degree-$d$ PTFs with probability at least $1-\varepsilon$.
More specifically, we show that the summation of
$L$ independent $R$-moment-matching Gaussian vectors
$\epsilon$-fools functions of $k$ degree-$d$ PTFs,
where $L=\poly{ k, d, \frac{1}{\epsilon}}$ and $R = O({\log \frac{kd}{\epsilon}})$.
The PRG is then obtained by applying an appropriate discretization
to Gaussian vectors with bounded independence.

\end{abstract}


\setcounter{page}{1}

\section{Introduction}
\label{sec:intro}
In computational complexity theory,
derandomization is a powerful technique that aims to
reduce randomness in algorithms without sacrificing efficiency or accuracy.
A versatile approach for derandomization is to design explicit \emph{pseudorandom generators} (PRGs) for notable families of Boolean functions.
A PRG for a family of Boolean functions is able to consume few random bits
and produce a distribution over high-dimensional vectors,
which is indistinguishable from a target distribution, such as the uniform distribution over Boolean cube, by any function in the family. In this paper, we concern ourselves with the Gaussian distribution over $\R^n$.
Formally,
\begin{definition}
	Let $\F\subseteq\st{f:\R^n \to \{0,1\}}$ be a family of Boolean functions.
	A function $G:\{0,1\}^r \to \R^n$ is a pseudorandom generator for $\F$ with error $\epsilon$ over Gaussian distribution $\NN\br{0,1}^n$ if for any $f\in\F$,
	\[
		\abs{ \expect{s\sim_u\st{0,1}^r}{f(G(s))} 
		 - \expect{x\sim\NN\br{0,1}^n}{f(x)}		
		} \leq \epsilon \enspace. 
	\]
	We call $r$ the seed length of $G$. We also say $G$ $\epsilon$-fools $\F$ over the Gaussian distribution.
\end{definition}

There has been a considerable amount of research
developing PRGs for various Boolean function families,
including halfspaces, polynomial threshold functions and intersections of halfspaces.
Let $\mathrm{sign}:\R\to\st{0,1}$ be the function such that $\sign{x}=1$ iff $x\geq0$.
A \emph{halfspace} is a Boolean function of the form $f(x) = \sign{a_1x_1+\cdots+a_nx_n-b}$ for some $a_1,\cdots,a_n,b\in\R$.
Halfspaces are a fundamental class of Boolean functions which have found significant applications in machine learning, complexity theory, theory of approximation and more.
A very successful series of work produced PRGs that $\epsilon$-fools halfspaces with seed length poly-logarithmic in $n$ and $\epsilon^{-1}$ over both Boolean space \cite{Ser06, DGJ+10, MZ13, GKM18} and Gaussian space \cite{KM15}.
\emph{Polynomial threshold functions} (PTFs) are functions of the form $f(x) = \sign{p(x)}$ where $p$ is a polynomial. We call $f$ is a degree-$d$ PTF if $p$ is a degree-$d$ polynomial. PTFs are natural generalization for halfspaces since a halfspace is a degree-$1$ PTF.
An explicit PRG that $\epsilon$-fools PTFs over Boolean space has been achieved with seed length
$ (d/\epsilon)^{O(d)}\cdot\log n $ \cite{MZ13}.
As for Gaussian space, a sequence of work \cite{DKN10, Kan11a, Kan11b, Kan12, MZ13, Kan14, Kan15, OST20, KM22} succeeds in giving a PRG with seed length polynomial in $d$, $\epsilon^{-1}$ and $\log n$ \cite{OST20, KM22}.
Another extension for halfspaces is \emph{intersections} of $k$ halfspaces
which are polytopes with $k$ facets.
A line of work \cite{GOWZ10, HKM13, ST17, CDS19, OST22} results in PRGs with seed length polynomial in $\log k$, $\log n$ and $1/\epsilon$ over Boolean space \cite{OST22} and over Gaussian space \cite{CDS19}.

%

Considering the prosperity of PRGs for these functions families, we commence designing PRGs for
\emph{functions of degree-$d$ polynomial threshold functions}.
\begin{definition}
	We say a function $F:\R^n\to\{0,1\}$ is a function of $k$ degree-$d$ PTFs if there exist $k$ polynomials $p_1,\dots,p_k:\R^n \to \R$ of degree $d$ and a Boolean function $f:\bit{k}\to\st{0,1}$ such that
	\[
		F(x) = f\!\br{\sign{p_1\!(x)},\dots,\sign{p_k\!(x)}}\enspace .
	\]
\end{definition}
\noindent This family consumes all three function families we discussed above. For example, it includes intersections of halfspaces by setting $d=1$ and $f(x)=x_1\cdots x_k$.
The research on PRGs for functions of PTFs is driven by several motivations beyond its fundamental role in derandomization tasks.
For instance, the collection of satisfying assignments of an intersection of $k$ degree-$2$ PTFs corresponds to the feasible solutions set of an $\{0,1\}$-integer quadratic programing \cite{NW06} with $k$ constraints.
The investigation into the structure of these sets has been a central focus of extensive research in areas including learning theory, counting, optimization, and combinatorics.

In this work, we consider building explicit PRGs for functions of degree-$d$ PTFs over Gaussian space. Before presenting our main result, we briefly revisit relevant prior work on fooling functions of halfspaces.

\begin{table}[t]
	\centering
	\caption{Related Work on PRGs for Intersections of PTFs}
	\label{tbl:related}
	\renewcommand{\arraystretch}{1.8}
	\newcolumntype{C}[1]{>{\centering\arraybackslash}p{#1}}
	
	\begin{tabular}{C{2.5cm}cc}
	\hline
	\noalign{\vskip 0mm}
	\hline
	\noalign{\vskip 0mm}
	\hline
	Reference & Function Family  & Seed length \\
	\hline
	\cite{GOWZ10}   & Monotone functions of $k$ halfspaces & $O((k\log (k/\epsilon) + \log n)\cdot \log (k/\epsilon))$ \\
	\hline
	\cite{HKM13}   & Intersections of $k$ $\delta$-regular halfspaces & \makecell{$O(\log n\log k/\epsilon)$\\for $\delta \leq \epsilon^5/(\log^{8.1}\! k \log (1/\epsilon)) $} \\
	\hline
	\cite{ST17}   & Intersections of $k$ weight-$t$ halfspaces & $\poly{\log n, \log k, t, 1/\epsilon}$ \\
	\hline
	\cite{OST22}    & Intersections of $k$ halfspaces & \makecell{$\mathrm{polylog}\ m\cdot{\epsilon}^{-(2+\delta)}\cdot\log n$\\for any absolute constant $\delta\in(0,1)$}\\
	\hline
	\cite{CDS19}    & \makecell{Intersections of $k$ halfspaces\\Arbitrary functions of $k$ halfspaces} & \makecell{$O(\log n + \poly{\log k,1/\epsilon})$\\$O(\log n + \poly{ k,1/\epsilon})$}\\
	\hline
	\cite{DKN10}    & Intersections of $k$ degree-$2$ PTFs & $O(\log n\cdot\poly{k,1/\epsilon})$ \\
	\hline
	\noalign{\vskip 0mm}
	\hline
	\noalign{\vskip 0mm}
	\hline
	\end{tabular}
\end{table}

\subsection{Prior Work}

The related work is summarized in \tbl{related}.
Gopalan, O’Donnell, Wu and Zuckerman \cite{GOWZ10} constructed PRGs for \emph{monotone functions of halfspaces}.
They modified the PRG for halfspaces in \cite{MZ13} and showed the modified PRG $\epsilon$-fools any monotone function of $k$ halfspaces over a broad class of \emph{product distributions} with seed length $O((k\log (k/\epsilon) + \log n)\cdot \log (k/\epsilon))$. When $k/\epsilon \leq \log^c n$ any $c>0$, the seed length can be further improved to $O(k\log (k/\epsilon) + \log n)$.

Harsha, Klivans and Meka \cite{HKM13} considered designing PRGs for intersections of \emph{regular} halfspaces (i.e., halspaces with low influence). 
A halfspace $f(x) = \sign{a_1x_1+\cdots+a_nx_n-b}$ is \emph{$\delta$-regular} if $\sum_i a_i^4\leq \delta^2 \sum_{i} a_i^2$.
They gave an explicit PRG construction for intersections of $k$ $\delta$-regular halfspaces over \emph{proper} and \emph{hypercontractive} distributions with seed length $O(\log n\log k/\epsilon)$ when $\delta$ is no more than a threshold.
Their proof is based on developing an invariance principle for intersections of regular halfspaces via a generalization of the well-known Lindeberg method \cite{Lin22} and an anti-concentration result of polytopes in Gaussian space from \cite{KOS08}.

By extending the approach of \cite{HKM13} and combing the results on bounded independence fooling CNF formulas \cite{Baz09, Raz09}, Servedio and Tan \cite{ST17} designed an explicit PRG that $\epsilon$-fools intersections of $k$ \emph{weight}-$t$ halfspaces over Boolean space with $\poly{\log n, \log k, t, 1/\epsilon}$ seed length. A halfspace $f(x) = \sign{a_1x_1+\cdots+a_nx_n-b}$ is said to be weight-$t$ if each $a_i$ is an integer in $[-t,t]$.

As for intersections of $k$ general halfspaces, O’Donnell, Servedio and Tan \cite{OST22} gave a PRG construction over Boolean space with a polylogarithmic seed length dependence on $k$ and $n$. Their proof involves a novel invariance principle for intersections of arbitrary halfspaces and a Littlewood–Offord style anticoncentration inequality for polytopes over Boolean space.

Concurrently, Chattopadhyay, De and Servedio \cite{CDS19} proposed a simple PRG that $\epsilon$-fools intersections of $k$ general halfspaces over Gaussian space, building upon the concept of
\emph{Johnson-Lindenstrauss transform} \cite{JL86, KMN11}. The seed length is $O(\log n + \poly{\log k,1/\epsilon})$. Additionally, they show that the same PRG with seed length $O(\log n + \poly{k,1/\epsilon})$ is able to fool arbitrary functions of $k$ halfspaces.

Speaking of fooling functions of PTFs, the study by Diakonikolas, Kane and Nelson \cite{DKN10} stands out as the sole work that constructs a PRG for intersections of $k$ degree-$2$ PTFs. Their PRG is specific to degree $d\leq 2$ with a $O(\log n\cdot\poly{k,1/\epsilon})$ seed length.

\subsection{Main Result}


In this work, we investigate the PRGs fooling any function of low-degree PTFs. The main result is the following.

\begin{theorem}\label{thm:main}(Informal version of \thm{main_formal})
	There exists an explicit PRG
	$\epsilon$-fools any function of $k$ degree-$d$ PTFs over Gaussian space
	with seed length $\poly{k,d,1/\epsilon}\cdot\log n$.
\end{theorem}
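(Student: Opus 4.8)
The plan is to split \thm{main} into an analytic core — an idealized random object fools the family — and a combinatorial step that makes that object explicit. For the idealized object I would take $\widetilde{x}=\tfrac{1}{\sqrt{L}}\br{Y^{(1)}+\dots+Y^{(L)}}$, where $Y^{(1)},\dots,Y^{(L)}$ are independent copies of a random vector on $\R^n$ whose joint moments up to total degree $R$ agree with those of $\NN\br{0,1}^n$, with $L=\poly{k,d,1/\epsilon}$ and $R=O\br{\log\tfrac{kd}{\epsilon}}$. The combinatorial step is then routine: realize each $Y^{(j)}$ by drawing its $n$ coordinates $R$-wise independently, each coordinate a fixed finitely-supported variable matching the first $R$ moments of $\NN\br{0,1}$; such a variable is explicit, a standard $R$-wise independent generator over a field of size $\poly{n}$ costs $O(R\log n)$ bits, and the $L$ copies cost $O(LR\log n)=\poly{k,d,1/\epsilon}\cdot\log n$ altogether. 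Because the analysis below only ever invokes the degree-$\le R$ moment profile of the $Y^{(j)}$, the discretized generator inherits the guarantee, so everything rests on the analytic core.

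For the core, fix $F(x)=f\br{\sign{p_1(x)},\dots,\sign{p_k(x)}}$ with $\deg p_i\le d$, normalized so that each $p_i$ has unit variance under $\NN(0,1)^n$. Writing a genuine Gaussian as $\tfrac1{\sqrt L}\br{g^{(1)}+\dots+g^{(L)}}$ with $g^{(j)}$ i.i.d.\ $\NN(0,1)^n$, and setting $P_i\br{z^{(1)},\dots,z^{(L)}}\defeq p_i\!\br{\tfrac1{\sqrt L}\sum_j z^{(j)}}$, a degree-$d$ polynomial on $\R^{nL}$, it suffices to bound
\[
\abs{\,\E_{g}\br{f\br{\sign{P_1(g)},\dots,\sign{P_k(g)}}}-\E_{Y}\br{f\br{\sign{P_1(Y)},\dots,\sign{P_k(Y)}}}\,},
\]
where $g=(g^{(1)},\dots,g^{(L)})$ is $\NN(0,1)^{nL}$ and $Y=(Y^{(1)},\dots,Y^{(L)})$ has independent $R$-moment-matching blocks, hence is itself $R$-moment-matching on $\R^{nL}$. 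The payoff of the summation is a \emph{regularity boost}: a direct Hermite computation on $p_i\br{\tfrac1{\sqrt L}\sum_j z^{(j)}}$ shows that every individual coordinate of every block carries at most a $\poly{d}/L$ fraction of the variance of $P_i$, so the tuple $(P_1,\dots,P_k)$ is $\tau$-regular with $\tau=\poly{d}/L$ however non-regular the original $p_i$ were.

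I would then run a Lindeberg-style hybrid over the $L$ blocks, swapping one block at a time between its Gaussian and its moment-matching version, applied not to $(t_1,\dots,t_k)\mapsto f\br{\sign{t_1},\dots,\sign{t_k}}$ directly but to a mollification $\widetilde f$ of it at scale $\delta$ — a smooth, $O(1/\delta)$-Lipschitz surrogate that agrees with it outside a $\delta$-tube around the $k$ coordinate hyperplanes of $\R^k$. Each swap perturbs the argument of $\widetilde f\br{P_1,\dots,P_k}$ by $\tfrac1{\sqrt L}\br{Y^{(t)}-g^{(t)}}$; a low-order Taylor expansion together with the degree-$\le R$ moment matching cancels the leading contributions, and $\tau$-regularity bounds the per-block remainder by $\poly{1/\delta}\cdot\tau^{\Omega(1)}$, giving a total multidimensional-invariance error of $\poly{1/\delta}\cdot\tau^{\Omega(1)}$ — this is where a vector-valued invariance principle against moment-matching distributions, with $f$ coupling all $k$ signs, is required. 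Finally the mollification is removed once, under a genuine Gaussian, via $\abs{\E_g\br{F(g)}-\E_g\br{\widetilde f(P_1(g),\dots,P_k(g))}}\le\sum_{i\in[k]}\Pr\br{\abs{p_i(g)}\le\delta}$, which the Carbery--Wright anti-concentration inequality bounds after a union bound over $i$. Picking $\delta$, then $L$, then $R$ in that order pushes the sum of the two errors below $\epsilon$ with $L,R$ in the stated ranges, and the triangle inequality assembles the claim.

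I expect the main obstacle to be the quantitative accounting in this invariance step. First, it must be genuinely vector-valued for an \emph{arbitrary} outer $f$, so the swapping error has to be tracked through the joint law of $(P_1,\dots,P_k)$ rather than one polynomial at a time, and incurring only a polynomial — not exponential — loss in $k$ there is exactly what separates functions of $k$ PTFs from a single one. Second, and more delicate, the crude Carbery--Wright mollification bound contributes $k\cdot O\br{d\,\delta^{1/d}}$, which on its face would force $\delta$ exponentially small in $d$ and hence $L$ exponential in $d$; obtaining the claimed $\poly{d}$ dependence seems to require replacing the single-scale tube by a finer, regularity-exploiting decomposition of the discontinuity that leverages $\tau=\poly{d}/L$, all while keeping $R=O\br{\log\tfrac{kd}{\epsilon}}$. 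By comparison, the regularity-boost lemma and the seed-length accounting for the discretization are routine.
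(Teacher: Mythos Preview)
Your high-level split into an analytic core plus a discretization step matches the paper, and so does the hybrid over $L$ blocks. The genuine gap is exactly the one you flag yourself and then leave unresolved: mollifying the outer map $(t_1,\dots,t_k)\mapsto f(\sign{t_1},\dots,\sign{t_k})$ at scale $\delta$ and paying Carbery--Wright once gives an anti-concentration cost $k\cdot O(d\,\delta^{1/d})$, which forces $\delta=(\epsilon/kd)^{\Theta(d)}$. The per-block invariance remainder then scales like $\poly{1/\delta}\cdot\tau^{\Omega(1)}$, so you need $\tau$ --- hence $L$ --- to be $(kd/\epsilon)^{\Omega(d)}$, not $\poly{k,d,1/\epsilon}$. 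Your ``regularity boost'' is real (each block coordinate carries a $1/L$ share of influence), but it does nothing to the Carbery--Wright exponent, and the ``finer, regularity-exploiting decomposition'' you allude to is precisely the missing idea, not a routine refinement. As written, the proposal reproduces the $(d/\epsilon)^{O(d)}\cdot\log n$ seed length of \cite{MZ13}, not the claimed polynomial dependence.

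The paper avoids this barrier by abandoning the sign-mollification paradigm entirely and instead adopting the Kane--Meka mollifier from \cite{KM22}: rather than smoothing $f\circ\mathrm{sign}$, it multiplies $F$ by
\[
G(x)=\prod_{i=1}^{k}\prod_{t=0}^{d-1}\rho\!\br{\log\frac{\norm{\nabla^{t}p_i(x)}^2}{16\epsilon^2\norm{\nabla^{t+1}p_i(x)}^2}},
\]
a smooth cutoff that equals $1$ when every derivative order of every $p_i$ is controlled by the previous one and vanishes otherwise. The point is that (i) by \lem{good_event}, $G(y)=1$ with probability $1-O(\epsilon kd^3)$ under a true Gaussian, so $\E[F]\approx\E[FG]$ with only \emph{polynomial} loss in $d$; and (ii) in the hybrid, at a ``good'' $x$ all the signs $\sign{p_i(x+\sqrt{\lambda}y)}$ are essentially frozen (so $F$ is constant and one only has to fool $G$, a smooth function of low-degree polynomials), while at a ``bad'' $x$ one shows $G(x+\sqrt{\lambda}y)=0$ with overwhelming probability under both the true and the moment-matching block. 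This is what lets the paper take $\lambda=1/L$ with $L=\poly{k,d,1/\epsilon}$ and $R=O(\log(kd/\epsilon))$. Your proposal never introduces such a derivative-based mollifier, and without it the polynomial-in-$d$ bound does not go through. A smaller discrepancy: the paper's discretization is not via finitely-supported moment-matching coordinates but via Box--Muller on $2dR$-wise independent truncated uniforms, with a separate lemma (\lem{discrete}) transferring the fooling guarantee; your ``the analysis only uses degree-$\le R$ moments'' shortcut would need its own justification, and in any case the analytic core must be fixed first.
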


The proof is inspired by the PRG proposed in \cite{Kan11b} and the work \cite{KM22}. This theorem follows from two components.

\paragraph{(1) Bounded independence fools functions of $k$ degree-$d$ PTFs.}
Consider the continuous random vector
$Y = \frac{1}{\sqrt{L}} \sum_{i=1}^{L} Y_i$
where $Y_i$ is a $R$-wise independent standard Gaussian vector of length $n$.
Every $Y_{i,j}$ is a standard Gaussian variable and for any degree-$R$ polynomial $f$,
$\expec{f(Y_i)} = \expect{y\sim\NN(0,1)^n}{f(y)}$.
We will prove that
\begin{theorem}(Informal version of \thm{cont})
	With $R = O({\log \frac{kd}{\epsilon}})$ and $L=\poly{k, d, \frac{1}{\epsilon}}$, the distribution of $Y$
	$\epsilon$-fools any function of $k$ degree-$d$ PTFs over Gaussian space.
\end{theorem}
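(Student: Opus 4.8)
The plan is to run a Lindeberg-type hybrid argument while never forgetting the structure $F=f(\sign{p_1},\dots,\sign{p_k})$: treating $F$ as an arbitrary function of $x\in\R^n$ would cost factors of $\sqrt n$, whereas keeping the $k$ polynomials explicit keeps the estimates dimension-free. Write the genuine Gaussian as $G=\frac1{\sqrt L}\sum_{i=1}^L G_i$ with the $G_i$ fully independent, and pass from $Y=\frac1{\sqrt L}\sum_{i=1}^L Y_i$ to $G$ one block at a time; a single step only compares $\widetilde F(Z+\frac1{\sqrt L}Y_t)$ with $\widetilde F(Z+\frac1{\sqrt L}G_t)$, where $Z$ gathers the already-replaced blocks $G_1,\dots,G_{t-1}$ and the not-yet-replaced $Y_{t+1},\dots,Y_L$ (so $Z$ is independent of the $t$-th block) and $\widetilde F$ is a smooth surrogate for $F$.

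First I would set up that surrogate. Rescale each nonzero $p_i$ so that $\E_G[p_i(G)^2]=1$ (this leaves $\sign{p_i}$ unchanged), fix a scale $\tau>0$, take a smooth $\psi_\tau\colon\R\to[0,1]$ agreeing with $\sign{\cdot}$ outside $[-\tau,\tau]$ with $\norm{\psi_\tau^{(m)}}_\infty\le(C/\tau)^m$, and let $\hat f$ be the multilinear extension of $f$ to $[0,1]^k$; put $\widetilde F=\widetilde f_\tau(p_1,\dots,p_k)$ with $\widetilde f_\tau(z)=\hat f(\psi_\tau(z_1),\dots,\psi_\tau(z_k))$. Then $\widetilde F=F$ off $\bigcup_i\{|p_i|<\tau\}$, and because $\hat f$ is multilinear one gets $\norm{D^\alpha\widetilde f_\tau}_\infty\le(C'/\tau)^{|\alpha|}$ for $|\alpha|\le R+1$ with an absolute $C'$ (no $2^k$ blow-up). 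By Carbery--Wright, $\Pr_G[|p_i(G)|\le 2\tau]\le O(d)\,\tau^{1/d}$, so $|\E[F(G)]-\E[\widetilde F(G)]|\le k\cdot O(d)\,\tau^{1/d}$; and $|\E[F(Y)]-\E[\widetilde F(Y)]|\le\sum_i\Pr_Y[|p_i(Y)|\le 2\tau]$, each term of which is handled by dominating the indicator of $\{|p_i|\le 2\tau\}$ by a smooth bump of $p_i$ (derivatives $\le(C/\tau)^m$) and then applying the replacement estimate below to transfer its expectation from $Y$ to $G$, where Carbery--Wright applies again. Everything thus reduces to bounding $|\E[h(p(Y))]-\E[h(p(G))]|$ for smooth $h\colon\R^k\to[0,1]$ with $\norm{D^\alpha h}_\infty\le B^{|\alpha|}$, $B=\Theta(1/\tau)$.

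For the replacement estimate, at the $t$-th hybrid step I would condition on $Z$ and Taylor-expand $w\mapsto h(p(Z+\frac1{\sqrt L}w))$ about $w=0$ to order $R$. Each term of degree $m\le R$ is $L^{-m/2}$ times a degree-$m$ polynomial in the $t$-th block's coordinates, hence involves at most $R$ distinct coordinates of that block, hence has equal expectation under $Y_t$ and $G_t$ by $R$-wise independence; only the order-$(R+1)$ remainder survives. Expanding that remainder via Fa\`a di Bruno, it is a sum of products of derivatives $D^\beta h$ — contributing at most $B^{R+1}$ in total — with iterated directional derivatives of the $p_i$, each of which carries a $1/\sqrt L$, for an overall $L^{-(R+1)/2}$; the remaining polynomial quantities (iterated directional derivatives of $p_i$, evaluated at a point that is itself a normalized block sum and paired against one block increment) are bounded in $L^1$ by constants depending only on $k$ and $d$, using Gaussian hypercontractivity together with the gradient-energy bound $\sum_j\norm{\partial_j p_i}_{L^2(G)}^2\le d\,\E_G[p_i(G)^2]$ and its higher-order analogues. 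Summing over the $L$ steps gives $|\E[h(p(Y))]-\E[h(p(G))]|\le\poly{k,d}\cdot B^{R+1}\cdot L^{-(R-1)/2}$ with dimension-free constants; taking $R=\Theta(\log(kd/\epsilon))$ and $L=\poly{k,d,1/\epsilon}$, with $\tau$ chosen so the anti-concentration terms are at most $\epsilon/3$, then yields the theorem.

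The hard part is the last step: making the order-$(R+1)$ remainder of the composition $h\circ(p_1,\dots,p_k)$ dimension-free with tight enough constants. The Fa\`a di Bruno bookkeeping must be arranged so that the dependence on $h$ collapses to a single $B^{R+1}$, each differentiation of a $p_i$ visibly emits a $1/\sqrt L$, and the surviving iterated directional derivatives of the $p_i$ — evaluated at an intermediate bounded-independence block sum and contracted with a block increment — have $L^1$ norm bounded in $k,d$ only; the last point is exactly where hypercontractivity and the energy estimates for low-degree Gaussian polynomials carry the argument, and tracking the accumulated exponents there is what forces $R$ to be logarithmic. A secondary difficulty, visible already in the interaction of the anti-concentration and replacement bounds, is that the naive Carbery--Wright scale $\tau=\epsilon^{\Theta(d)}$ makes $B$, and hence $L$, depend badly on $d$; removing it calls for a preliminary regularity (critical-index) decomposition of the $p_i$, dispatching the few "critical" directions by brute force and running the moment-matching analysis only on the regular remainder, for which anti-concentration already holds at polynomial scale.
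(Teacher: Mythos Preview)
Your hybrid architecture is sound and your observation that the multilinear extension avoids a $2^k$ blow-up is correct, but the approach has a genuine gap that you yourself flag and then hand-wave past. With the sign-smoothing mollifier $\psi_\tau$, Carbery--Wright forces $\tau=(\epsilon/(Ckd))^{d}$ in order to make the anti-concentration error $k\cdot O(d)\tau^{1/d}\le\epsilon/3$, so $B=\Theta(1/\tau)=(Ckd/\epsilon)^{d}$. Your single-step remainder is then, at best, of order $(B/\sqrt L)^{\,R+1}$ times factors polynomial in $k,d$; summed over $L$ steps this is only small if $\sqrt L>B$, i.e.\ $L\ge(Ckd/\epsilon)^{2d}$. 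No choice of $R$ rescues this: if $B/\sqrt L\ge1$ the remainder does not decay at all. So the argument as written yields $L$ exponential in $d$, not $\poly{k,d,1/\epsilon}$, and the informal theorem is not proved.

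The proposed fix --- a ``regularity (critical-index) decomposition'' so that anti-concentration holds at polynomial scale on the regular part --- is not a proof sketch but a wish. For $d\ge2$ there is no off-the-shelf decomposition of this kind that gives $\tau=\poly{\epsilon,1/k,1/d}^{-1}$; getting polynomial-in-$d$ seed length for a \emph{single} PTF was exactly the content of the Kane--Meka line of work, and their solution is not a regularity reduction followed by naive sign-smoothing. The paper takes that route: instead of smoothing the sign, it multiplies $F$ by the mollifier
\[
G(x)=\prod_{i=1}^{k}\prod_{t=0}^{d-1}\rho\!\Bigl(\log\tfrac{\|\nabla^{t}p_i(x)\|^2}{16\epsilon^2\|\nabla^{t+1}p_i(x)\|^2}\Bigr),
\]
which equals $1$ exactly when all derivative ratios are controlled. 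At a ``good'' hybrid center $x$ (all $\|\nabla^{t+1}\phi_i(x)\|\le\epsilon^{-1}\|\nabla^{t}\phi_i(x)\|$), each $p_i(x+\sqrt\lambda\,y)$ has the same sign as $\phi_i(x)$ except with probability $2^{-R}$, so $F$ is essentially constant and one only needs to fool $G$; at a ``bad'' $x$, $G$ vanishes with probability $1-2^{-\Omega(R)}$. Fooling $G$ itself is done by Taylor-expanding in the \emph{scalar} variables $s_i^t=\|\nabla^t p_i(x+\sqrt\lambda y)\|^2$ around their $\phi$-values, and the derivative-ratio control is precisely what makes those increments small without ever invoking a $\tau$ of order $\epsilon^d$. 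That is the idea your proposal is missing; the ``hard part'' you identify cannot be closed by Fa\`a di Bruno bookkeeping alone.
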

\noindent The prior work \cite{KM22} shows that bounded independence fools a single low-degree polynomial threshold function. This generalizes their work to the case of functions of $k$ low-degree PTFs.

\paragraph{(2) Discretization of bounded independence Gaussians.}
An explicit PRG construction requires a discrete approximation to Gaussian vectors with bounded independence.
The idea is to use a finite entropy random variable $X$ to approximate $Y$.
Previous work \cite{Kan11b} uses the idea that a single Gaussian variable can be produced by two uniform random variables in $[0,1]$ through the Box–Muller transform \cite{Bm58}.
Therefore bounded independence Gaussian variables $Y_i$ can be generated by using bounded independence uniform random variables.
Then by truncating these uniform $[0,1]$ random variables to a sufficient precision, we obtain vectors $X_i$ that serve as a discrete approximation of $Y_i$.
We prove that $X$ also fools functions of $k$ degree-$d$ PTFs as long as $X$ is a good approximation to $Y$.
\begin{lemma}(Informal version of \lem{discrete})
	If $X_{i,j}$ and $Y_{i,j}$ are sufficiently close with high probability, then $X$ also fools functions of $k$ degree-$d$ PTFs.
\end{lemma}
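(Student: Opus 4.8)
The plan is to couple $X$ and $Y$ on a common probability space, show that $F(X)=F(Y)$ except with small probability, and combine with \thm{cont}. The coupling is the one built into the construction itself: $Y_i$ and $X_i$ are obtained from the \emph{same} $R$-wise independent block of uniform-$[0,1]$ variables via the Box--Muller transform, using the exact values for $Y_i$ and their truncations for $X_i$; under it $\abs{X_{i,j}-Y_{i,j}}\le\delta$ for all $i,j$ outside an event of probability at most $\gamma$, and the hypothesis of the lemma is exactly that $\delta,\gamma$ can be driven down. Since $F$ is $\{0,1\}$-valued, $\abs{\expec{F(X)}-\expec{F(Y)}}\le\prob{}{F(X)\ne F(Y)}$. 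Write $F(x)=f\!\br{\sign{p_1(x)},\dots,\sign{p_k(x)}}$ and normalise $\norm{p_\ell}_2\defeq\E_{y\sim\NN(0,1)^n}\Br{p_\ell(y)^2}^{1/2}=1$ (legitimate because $\sign{\cdot}$ is positively homogeneous and a global sign change of $p_\ell$ can be absorbed into $f$). A $\sign{\cdot}$ disagreement on the $\ell$-th coordinate forces $p_\ell(X),p_\ell(Y)$ to opposite sides of $0$, hence $\abs{p_\ell(Y)}\le\abs{p_\ell(X)-p_\ell(Y)}$, so for any threshold $\tau>0$ a union bound gives
\[
\prob{}{F(X)\ne F(Y)}\le\sum_{\ell=1}^{k}\br{\prob{}{\abs{p_\ell(Y)}\le\tau}+\prob{}{\abs{p_\ell(X)-p_\ell(Y)}>\tau}}\enspace .
\]

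\paragraph{The anticoncentration term.}
The key observation is that $x\mapsto\mathds{1}\!\Br{\abs{p_\ell(x)}\le\tau}$ is itself a function of two degree-$d$ PTFs: it equals $g\!\br{\sign{\tau-p_\ell(x)},\,\sign{\tau+p_\ell(x)}}$ with $g$ the two-bit AND. Hence \thm{cont}, applied with $k=2$ (the chosen $R=O(\log(kd/\epsilon))$ and $L=\poly{k,d,1/\epsilon}$ are more than enough to give error $\epsilon/(4k)$ there), yields
\[
\prob{}{\abs{p_\ell(Y)}\le\tau}\le\prob{y\sim\NN(0,1)^n}{\abs{p_\ell(y)}\le\tau}+\tfrac{\epsilon}{4k}\enspace ,
\]
and the genuine-Gaussian probability is $O(d)\,\tau^{1/d}$ by the Carbery--Wright anticoncentration inequality (using $\norm{p_\ell}_2=1$). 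So it suffices to pick $\tau=\br{\epsilon/(Ckd)}^{d}$ for a universal constant $C$, giving $O(d)\,\tau^{1/d}\le\epsilon/(4k)$ and $\log(1/\tau)=\poly{d,\log(kd/\epsilon)}$.

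\paragraph{The continuity term.}
Condition on the event $\EE$ that $\norm{Y}_\infty\le B$ and $\max_{i,j}\abs{X_{i,j}-Y_{i,j}}\le\delta$; the first part fails with probability at most $2n\,e^{-B^2/2}$ since each coordinate of $Y=\tfrac1{\sqrt L}\sum_i Y_i$ is marginally a standard Gaussian, and the second with probability at most $\gamma$. On $\EE$ we have $\norm{X}_\infty\le 2B$ (for $\delta\le B/\sqrt L$) and $\norm{X-Y}_1\le\sqrt L\,n\,\delta$. Expanding $p_\ell$ in the Hermite basis, $\norm{p_\ell}_2=1$ forces every coefficient to have modulus at most $1$ and there are at most $(n+1)^d$ of them, so a crude termwise estimate gives a \emph{deterministic} Lipschitz bound $\sup_{\norm{x}_\infty\le 2B}\norm{\nabla p_\ell(x)}_\infty\le M$ with $\log M=O(d\log n)+\poly{d,\log B}$. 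Then $\abs{p_\ell(X)-p_\ell(Y)}\le M\norm{X-Y}_1\le M\sqrt L\,n\,\delta\le\tau$ once $\delta\le\tau/(M\sqrt L\,n)$, so the continuity term vanishes on $\EE$. Taking $B=O\!\br{\sqrt{\log(nk/\epsilon)}}$, $\gamma\le\epsilon/4$, and $\delta\le\tau/(M\sqrt L\,n)$ — which forces only $\log(1/\delta)=O(d\log n)+\poly{d,\log(kd/\epsilon)}$ — we obtain
\[
\abs{\expec{F(X)}-\expec{F(Y)}}\le\gamma+2n\,e^{-B^2/2}+k\br{O(d)\,\tau^{1/d}+\tfrac{\epsilon}{4k}}\le\epsilon\enspace ,
\]
and combining with the $\epsilon$-fooling guarantee of \thm{cont} shows $X$ is $O(\epsilon)$-fooling; after rescaling $\epsilon$ by a constant and absorbing the $O(\log(1/\delta))$ truncation bits per Box--Muller input, the total seed length is $\poly{k,d,1/\epsilon}\cdot\log n$.

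\paragraph{Main obstacle.}
The delicate step is anticoncentration for $p_\ell(Y)$: since $Y$ is only $R$-wise independent one cannot apply Carbery--Wright directly, and an $R$ merely logarithmic in $kd/\epsilon$ looks too weak. The way around this is to recognise ``$\abs{p_\ell(\cdot)}\le\tau$'' as a function of two degree-$d$ PTFs, which is already fooled by \thm{cont}, thereby transferring the classical Gaussian anticoncentration to the bounded-independence setting at no extra cost. A secondary point that needs care — but no new idea — is the bookkeeping that keeps the truncation precision at $\log(1/\delta)=O(d\log n)+\poly{d,\log(kd/\epsilon)}$: the normalised polynomial has up to $n^{O(d)}$ Hermite coefficients, so its Lipschitz constant on the relevant box is $n^{O(d)}$ and $\delta$ must be polynomially small in $n^{d}$, $L$ and $1/\epsilon$; since seed length grows only like $\log(1/\delta)$ this is harmless, but one must verify that no genuine $\poly{n}$ factor (as opposed to $\poly{\log n}$) creeps into the seed length.
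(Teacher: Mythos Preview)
Your argument is correct, and it takes a genuinely different route from the paper's proof of \lem{discrete}.

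\paragraph{The paper's approach.} Rather than coupling $X$ and $Y$ and bounding $\Pr[F(X)\neq F(Y)]$ directly, the paper shifts each polynomial by a small offset, defining $q_i(x)=p_i(x)+\delta(nL)^{d/2}(\log(1/\delta))^d$ and $\widetilde F(x)=f(\sign{q_1(x)},\dots,\sign{q_k(x)})$. It then shows (a) $\E[F(X)]\le \E[\widetilde F(Y)]+O(2^{2k}nL\delta)$ via the perturbation bound \lem{close}, and (b) $\E[\widetilde F(y)]\le \E[F(y)]+k2^{2k}d\,\delta^{1/d}\sqrt{nL}\log(1/\delta)$ via Carbery--Wright applied only to the \emph{true} Gaussian $y$. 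To pass from the single function $F$ to the monotone comparisons needed in (a) and (b), the paper decomposes $f(x)=\sum_{S\subseteq[k]} c_S\prod_{i\in S} x_i$ with $\sum_S|c_S|\le 2^{2k}$, which is where the $2^{2k}$ factors come from.

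\paragraph{What each approach buys.} Your main new idea is the observation that $\mathds{1}\!\bigl[|p_\ell(\cdot)|\le\tau\bigr]$ is itself a function of two degree-$d$ PTFs, so the fooling hypothesis (or \thm{cont}) transfers Carbery--Wright from $y$ to $Y$ for free. This lets you run a clean union bound over the $k$ polynomials and avoid the monomial decomposition of $f$ entirely, replacing the paper's $2^{2k}$ factors by a single factor of $k$. Since these factors only enter through $\log(1/\delta)$ this saves roughly a factor of $k$ in the truncation precision $M$ (and hence in the final seed length), though both bounds are $\poly{k,d,1/\epsilon}\cdot\log n$. The paper's approach, on the other hand, keeps anticoncentration entirely on the true-Gaussian side and never needs to re-invoke fooling on an auxiliary $2$-PTF function; this makes the lemma slightly more self-contained as a black box (its hypothesis is used only once, on $\widetilde F$). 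Your crude Lipschitz bound with $\log M=O(d\log n)+\poly{d,\log B}$ is a bit looser than the paper's use of \lem{close} (which gives $\delta\, n^{d/2}\,O(B)^d$), but as you note this only affects $\log(1/\delta)$ additively and is harmless for the seed length.

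One small point: when $k=1$ your anticoncentration indicator is a function of \emph{two} PTFs, so strictly you need the fooling hypothesis for functions of $\max(k,2)$ PTFs; this is of course free since the same $Y$ works (and any single PTF is trivially a function of two).
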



\paragraph{Acknowledgment.} PY and MZ were supported by
National Natural Science Foundation of China (Grant No. 62332009 and 12347104),
Innovation Program for Quantum Science and Technology (Grant No. 2021ZD0302901),
NSFC/RGC Joint Research Scheme (Grant No. 12461160276),
Natural Science Foundation of Jiangsu Province (Grant No. BK20243060),
and the New Cornerstone Science Foundation.

\section{Preliminary}
\label{sec:prelim}
\paragraph{Basic Notation.}
For $n\in \N$, $[n]$ denotes the set $\st{1,2,\cdots,n}$.
For $\alpha\in \R^n$ and $i\in[n]$, $\alpha_i$ denotes the $i$-th coordinate of $\alpha$, $\abs{\alpha} = \sum_{i=1}^n \abs{\alpha_i}$ and $\norm{\alpha}_{\infty} = \max_{1\leq i\leq n} \abs{\alpha_i}$.
For $\alpha, \beta\in \R^n$, $\alpha-\beta$ denotes the vector $v$ such that $v_i = \alpha_i-\beta_i$ for all $i\in[n]$, and $\alpha^{\beta} = \prod_{i=1}^n \alpha_i^{\beta_i}$.
For $\alpha\in \N^n$, $\alpha! = \prod_{i=1}^n \alpha_i!$.
When it is clear from the context, we will use both subscript and superscript as indices.
\vspace{-1em}
\paragraph{Derivatives and Multidimensional Taylor Expansion.}
For a function $f:\R^n \to \R$ and $\alpha\in \N^n$, we use $\partial^{\alpha}\!f$ to denote the partial derivative taken $\alpha_i$ times in the $i$-th coordinate and define $\norm{\grad{t}{f(x)}} = \sqrt{\sum_{\alpha\in\N^n, \abs{\alpha}=t} \br{\partial^{\alpha} f(x)}^2}$.
For $f(a,b): \R^n \times \R^n \to \R$ and $\alpha,\beta\in \N^n$, we use $\partial^{\alpha}_a\partial^{\beta}_b\!f$ to denote the partial derivative taken $\alpha_i$ times in $a_i$ and $\beta_i$ times in $b_i$.
Using these notations, one has:
\begin{theorem}[Multidimensional Taylor's Theorem]\label{thm:taylor}
	Let $d\in \N$ and $f:\R^n \to \R^n$ be a $\mathcal{C}^{d+1}$ function.
	Then for all $x,y\in\R^n$,
	\[
		f(y) = \sum_{\alpha\in\N^n, \abs{\alpha}\leq d}	\frac{\partial^{\alpha} f(x)}{\alpha!} (y-x)^{\alpha} + \sum_{\alpha\in\N^n, \abs{\alpha}= d+1}	\frac{\partial^{\alpha} f(z)}{\alpha!} (y-x)^{\alpha}
	\]
	where $z = cx+(1-c)y$ for some $c\in(0,1)$.
\end{theorem}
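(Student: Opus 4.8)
The plan is to reduce the multivariate identity to the classical one-variable Taylor theorem with Lagrange remainder by restricting $f$ to the line segment joining $x$ and $y$. Fix $x,y\in\R^n$ and define $g:[0,1]\to\R$ by $g(t)=f\br{x+t(y-x)}$ (I first treat $f$ as scalar-valued; the vector-valued case is discussed below). Since $f$ is $\mathcal{C}^{d+1}$ and the map $t\mapsto x+t(y-x)$ is affine, $g$ extends to a $\mathcal{C}^{d+1}$ function on an open neighborhood of $[0,1]$, so the one-dimensional theorem applies to it.

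The key computation is an exact formula for the higher derivatives of $g$. Iterating the chain rule gives $\frac{\d}{\d t}\,\partial^{\alpha}f\br{x+t(y-x)}=\sum_{i=1}^{n}(y_i-x_i)\,\partial^{\alpha+\mathbf{e}_i}f\br{x+t(y-x)}$, where $\mathbf{e}_i\in\N^n$ is the $i$-th unit vector. A short induction on $j$, using the multinomial identity $\sum_{i}\binom{j}{\beta-\mathbf{e}_i}=\binom{j+1}{\beta}$ (valid for $\abs{\beta}=j+1$, with $\binom{j}{\gamma}:=j!/\gamma!$), yields for every $0\le j\le d+1$
\[
 g^{(j)}(t)=\sum_{\alpha\in\N^n,\ \abs{\alpha}=j}\frac{j!}{\alpha!}\,\partial^{\alpha}f\br{x+t(y-x)}\,(y-x)^{\alpha}.
\]
In particular $g^{(j)}(0)/j!=\sum_{\abs{\alpha}=j}\partial^{\alpha}f(x)\,(y-x)^{\alpha}/\alpha!$.

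Now apply the one-variable Taylor theorem with Lagrange remainder to $g$ on $[0,1]$: there exists $c'\in(0,1)$ with $g(1)=\sum_{j=0}^{d}g^{(j)}(0)/j!+g^{(d+1)}(c')/(d+1)!$. Substituting the formula above, using $g(1)=f(y)$, summing the first batch over $j=0,\dots,d$ to get $\sum_{\abs{\alpha}\le d}\partial^{\alpha}f(x)\,(y-x)^{\alpha}/\alpha!$, and writing the remainder as $\sum_{\abs{\alpha}=d+1}\partial^{\alpha}f(z')\,(y-x)^{\alpha}/\alpha!$ with $z'=x+c'(y-x)$, we obtain exactly the claimed identity after relabeling $c:=1-c'\in(0,1)$, so that $z'=cx+(1-c)y=:z$.

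The only subtlety is that $f$ is stated to be $\R^n$-valued, whereas the Lagrange remainder is proved via Rolle's theorem and does not automatically furnish a single $c$ valid for all output coordinates at once. The clean resolution is to read and use the theorem coordinatewise — which is all that is needed downstream, where it is applied to the scalar polynomials $p_1,\dots,p_k$ — or equivalently to permit the point $z$ in the remainder to depend on the output coordinate. Everything else is routine multinomial bookkeeping, concentrated in the induction of the second step.
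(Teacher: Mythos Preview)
Your argument is the standard and correct proof: restrict to the segment, compute the derivatives of the one-variable function via the multinomial chain rule, and invoke the one-dimensional Lagrange remainder. The paper does not supply a proof of this theorem at all; it is stated in the preliminaries as a well-known fact, so there is nothing to compare against. Your observation about the codomain is apt: the statement's $f:\R^n\to\R^n$ is almost certainly a typo for $f:\R^n\to\R$, since the only application in the paper (the Taylor expansion of the scalar mollifier $g(s,r)$ in \lem{molli}) is to a real-valued function, and as you note a single Lagrange point $z$ cannot in general be chosen uniformly across output coordinates.
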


\vspace{-1em}
\paragraph{Bump Function.}
Consider the bump function $\Psi:\R\to\R$ defined by
$
	\Psi(x) = \begin{cases}
		e^{\frac{1}{x^2-1}}, &\text{ if } \abs{x}<1,\\
		0, &\text{ if } \abs{x}\geq 1.
	\end{cases}
$
It is well known that this function is infinitely differentiable and the derivatives are bounded.
\begin{fact}\label{fact:deri1}
	For all $t\in \N$, $\abs{\Psi^{(t)}(x) } \leq t^{(3+o(1))t}$.
\end{fact}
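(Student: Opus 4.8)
The plan is to realize $\Psi$ as the restriction to $\R$ of the holomorphic function $\Psi(z)=e^{1/(z^2-1)}$ on $\C\setminus\{-1,1\}$ and to bound its derivatives by Cauchy's estimate on a disc whose radius is tuned to the distance to the nearest singularity.

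First I would dispose of the trivial cases. Since $\Psi$ is even, $\abs{\Psi^{(t)}(-x)}=\abs{\Psi^{(t)}(x)}$, and for $\abs{x}\ge 1$ the function $\Psi$ vanishes on a one-sided neighbourhood of $x$, so, being $\mathcal{C}^\infty$, all of its derivatives at such $x$ are $0$. Hence it suffices to bound $\abs{\Psi^{(t)}(x_0)}$ for $x_0\in[0,1)$; set $\delta\defeq 1-x_0\in(0,1]$, the distance from $x_0$ to the nearer pole $z=1$.

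The heart of the argument will be the circle estimate $\max_{\abs{z-x_0}=\delta/2}\abs{\Psi(z)}\le e^{-1/(12\delta)}$. To prove it I would write $z=x_0+\tfrac{\delta}{2}e^{i\theta}$, use $\abs{\Psi(z)}=\exp\!\br{\operatorname{Re}\br{\tfrac{1}{z^2-1}}}$ and $\operatorname{Re}\br{\tfrac{1}{z^2-1}}=\operatorname{Re}(z^2-1)/\abs{z^2-1}^2$, and then check by an elementary computation (using $x_0^2-1=-2\delta+\delta^2$ and $\cos\theta,\cos 2\theta\le 1$) that $\operatorname{Re}(z^2-1)\le-\tfrac{3\delta}{4}$ on this circle, while $\abs{z^2-1}=\abs{z-1}\,\abs{z+1}\le\tfrac{3\delta}{2}\cdot 2=3\delta$; dividing gives $\operatorname{Re}\br{\tfrac{1}{z^2-1}}\le-\tfrac{1}{12\delta}$. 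I expect this to be the main obstacle: it is exactly the point where one must verify that the essential singularities of $e^{1/(z^2-1)}$ at $\pm1$ do not inflate $\abs{\Psi}$ on the chosen contour. If the radius were a larger multiple of $\delta$, or if $\operatorname{Re}\br{\tfrac{1}{z^2-1}}$ failed to stay $\le-\Omega(1/\delta)$ around the whole circle, the exponentially small factor needed to offset the blow-up of $\delta^{-t}$ would be lost.

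Granting the circle estimate, Cauchy's inequality gives $\abs{\Psi^{(t)}(x_0)}\le t!\,(\delta/2)^{-t}\,e^{-1/(12\delta)}=2^t\,t!\,\delta^{-t}\,e^{-1/(12\delta)}$, and the rest is a one-variable optimization: $\delta\mapsto\delta^{-t}e^{-1/(12\delta)}$ is maximized over $(0,1]$ at $\delta=1/(12t)$ with value $(12/e)^t t^t$, so $\abs{\Psi^{(t)}(x_0)}\le 2^t\,t!\,(12/e)^t t^t\le(24/e)^t t^{2t}$ by $t!\le t^t$. Since $(24/e)^t=t^{o(t)}$, this is $t^{(2+o(1))t}$, which is in particular $\le t^{(3+o(1))t}$, as claimed — indeed slightly stronger. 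I would also note that complex analysis can be avoided: applying Fa\`a di Bruno's formula to $\Psi=e^{h}$ with $h(x)=1/(x^2-1)$, using $\abs{h^{(j)}(x)}\le j!\,\delta^{-(j+1)}$ and $\Psi(x)\le e^{-1/(2\delta)}$, the factors $(j!)^{m_j}$ cancel against the denominators in the formula, the number of integer partitions of $t$ is $t^{o(t)}$ by Hardy--Ramanujan, and the same optimization over $\delta$ yields the exponent $3+o(1)$ directly from these cruder bounds.
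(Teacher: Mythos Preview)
Your argument is correct, and in fact yields the sharper bound $\abs{\Psi^{(t)}(x)}\le t^{(2+o(1))t}$. The circle estimate checks out: writing $c=\cos\theta$ one finds $\operatorname{Re}(z^2-1)=-2\delta+\tfrac{3}{4}\delta^2+(1-\delta)\delta c+\tfrac{1}{2}\delta^2 c^2$, a convex quadratic in $c\in[-1,1]$ whose maxima at $c=\pm1$ are both $\le-\tfrac{3}{4}\delta$ for $\delta\in(0,1]$; together with $\abs{z^2-1}\le 3\delta$ this gives $\operatorname{Re}\!\br{1/(z^2-1)}\le-\tfrac{1}{12\delta}$ as you claim, and the disc avoids the singularities $\pm1$ since $\abs{z-1}\ge\delta/2$ and $\abs{z+1}\ge 2-\tfrac{3}{2}\delta>0$.

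The paper takes a different, purely real-variable route: it shows by induction that $\Psi^{(t)}(x)=\dfrac{P_t(x)}{(1-x^2)^{2t}}\Psi(x)$ for polynomials $P_t$ of degree at most $3t$ satisfying an explicit recursion, bounds $\norm{P_t}_1\le 20^t t!$ from that recursion, and separately optimizes $\Psi(x)/(1-x^2)^{2t}\le(2t/e)^{2t}$. Multiplying gives $t^{(3+o(1))t}$. Your Cauchy-estimate approach is shorter, avoids tracking the combinatorics of the $P_t$, and wins an exponent (the $\delta^{-t}$ from the radius, rather than the $\delta^{-2t}$ coming from the $(1-x^2)^{2t}$ denominator, is what buys you $2$ instead of $3$). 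The paper's proof, on the other hand, is entirely elementary and makes the algebraic structure of the derivatives explicit; your Fa\`a di Bruno sketch is closer in spirit to it and, as you note, recovers the same $3+o(1)$.
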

\noindent Let $\rho$ be the smooth univariate function defined by
$
	\rho(x) = \begin{cases}
		1, &\text{ for }x\geq 1,\\
		e\cdot e^{\frac{1}{(t-1)^2-1}} &\text{ for } 0<x<1,\\
		0, &\text{ for }x\leq 0.
	\end{cases}
$

\noindent It is easy to see $\rho$ is obtained from $\Psi$ via translation, stretch, and concatenation.
We have
\begin{fact}\label{fact:deri2}
	For all $t\in \N$, $\abs{\rho^{(t)}(x)} \leq t^{(3+o(1))t}$.
\end{fact}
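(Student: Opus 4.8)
The plan is to reduce \fct{deri2} to \fct{deri1} by exploiting the stated construction of $\rho$ from $\Psi$ "via translation, stretch, and concatenation." Concretely, I would first make the relationship explicit. On the interval $(0,1)$ the formula $\rho(x) = e\cdot e^{1/((x-1)^2-1)}$ is obtained from $\Psi$ by the affine substitution $u = x-1$, since $\Psi(x-1) = e^{1/((x-1)^2-1)}$ for $x-1\in(-1,1)$, i.e. $x\in(0,2)$; restricting to $x\in(0,1)$ and multiplying by the constant $e$ (which normalizes the value at the matching point) gives exactly the middle branch. On $x\ge 1$ we have $\rho\equiv 1$ and on $x\le 0$ we have $\rho\equiv 0$, both constant. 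So $\rho(x) = e\cdot\Psi(x-1)$ for $x\in(0,1)$, $\rho$ is constant outside $[0,1]$, and one checks $\rho$ is $C^\infty$ at the gluing points $x=0$ and $x=1$ because all derivatives of $\Psi$ vanish at $\pm 1$ (this is the standard fact that makes $\Psi$ a smooth bump).

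Given this, the derivative bound is immediate by cases. For $x\notin[0,1]$, $\rho$ is locally constant, so $\rho^{(t)}(x)=0$ for all $t\ge 1$, and the bound holds trivially. For $x\in(0,1)$, the chain rule for the affine map $x\mapsto x-1$ gives $\rho^{(t)}(x) = e\cdot\Psi^{(t)}(x-1)$, and since $\abs{x-1}<1$ we may invoke \fct{deri1} to get $\abs{\rho^{(t)}(x)} = e\cdot\abs{\Psi^{(t)}(x-1)} \le e\cdot t^{(3+o(1))t}$. Finally the constant factor $e$ is absorbed into the $o(1)$ in the exponent: $e\cdot t^{(3+o(1))t} = t^{(3+o(1))t + \log_t e} = t^{(3+o(1))t}$ since $\log_t e = o(1)\cdot t \to 0$ relative to the leading term. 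At the boundary points $x\in\{0,1\}$ themselves, one-sided derivatives from the constant side are $0$ and from the $\Psi$ side are bounded as above (the two agree, both being $0$ for $t\ge 1$ since $\Psi^{(t)}(\pm1)=0$), so the bound extends to all $x\in\R$.

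The only genuine content beyond bookkeeping is verifying that the stated piecewise $\rho$ really is globally $C^\infty$ — i.e. that the branches match to infinite order at $x=0$ and $x=1$ — and making sure the constant multiplier and the affine reparametrization are the \emph{only} modifications (no stretching of the $x$-variable that would introduce a growing factor like $c^t$). I expect this smoothness/gluing check to be the main, though modest, obstacle: it amounts to recalling that $\lim_{x\to 1^-}\Psi^{(t)}(x) = 0$ for every $t$ (a consequence of the essential-singularity structure of $e^{1/(x^2-1)}$, where every derivative is a rational function times $\Psi$ and the exponential decay dominates any pole), and that the left value $\rho(1^-) = e\cdot\Psi(0) = e\cdot e^{-1} = 1$ matches the right value $\rho(1^+)=1$ — which is precisely why the factor $e$ appears in the definition. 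Once smoothness is in hand, the bound on $\abs{\rho^{(t)}}$ follows from \fct{deri1} with no loss, since reflection/translation and multiplication by a constant do not change the $t^{(3+o(1))t}$ growth rate.
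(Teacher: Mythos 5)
Your proposal is correct and matches the paper's proof, which likewise observes that $\rho$ is constant outside $(0,1)$ and equals $e\cdot\Psi(1-x)$ (equivalently $e\cdot\Psi(x-1)$, since $\Psi$ is even) on $(0,1)$, then invokes \fct{deri1}. The extra smoothness/gluing discussion you include is fine but not needed for the derivative bound itself.
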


\begin{fact}\label{fact:deri}
	Let $r(u,v) \coloneq \rho( \log u-\log v + c )$ for some constant $c$. Then we have that for all $n,m\in \N$,
	$ \abs{ \frac{ \partial^n \partial^m r(u,v) }{ \partial u^n \partial v^m } } \leq \frac{(n+m)^{6(n+m)}}{\abs{u}^n \abs{v}^m} $.
\end{fact}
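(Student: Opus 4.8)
The plan is to write $r(u,v)=\rho\bigl(g(u,v)\bigr)$ with $g(u,v)=a(u)+b(v)+c$, where $a(u)=\log u$ and $b(v)=-\log v$ (the statement being understood on the domain $u,v>0$, where $|u|=u$ and $|v|=v$), and to control $\partial_u^{n}\partial_v^{m}r$ via the generalized chain rule. The crucial simplification is that $g$ is additively separable, so every mixed partial $\partial_u^{i}\partial_v^{j}g$ with $i,j\ge 1$ vanishes. Applying the one–variable Fa\`{a} di Bruno formula first in $u$ (with $v$ frozen) and then in $v$, I expect the clean double sum
\[
	\frac{\partial^{n}\partial^{m} r(u,v)}{\partial u^{n}\partial v^{m}} \;=\; \sum_{\pi\,\vdash\,[n]}\;\sum_{\sigma\,\vdash\,[m]} \rho^{(|\pi|+|\sigma|)}\!\bigl(g(u,v)\bigr)\;\prod_{B\in\pi} a^{(|B|)}(u)\;\prod_{B'\in\sigma} b^{(|B'|)}(v),
\]
the sums being over all set partitions $\pi$ of $[n]$ and $\sigma$ of $[m]$; I would justify this identity by a short double induction on $n$ and $m$ (or by quoting the multivariate Fa\`{a} di Bruno formula specialized to a separable inner function).

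Next I would estimate each factor. From $a^{(k)}(u)=(-1)^{k-1}(k-1)!/u^{k}$ and $b^{(k)}(v)=(-1)^{k}(k-1)!/v^{k}$ one gets
\[
	\prod_{B\in\pi}\bigl|a^{(|B|)}(u)\bigr| \;=\; \frac{\prod_{B\in\pi}(|B|-1)!}{|u|^{n}} \;\le\; \frac{n!}{|u|^{n}},
\]
using $\prod_{B\in\pi}(|B|-1)!\le\prod_{B\in\pi}|B|!\le n!$ (the relevant multinomial coefficient is at least $1$), and symmetrically $\prod_{B'\in\sigma}\bigl|b^{(|B'|)}(v)\bigr|\le m!/|v|^{m}$. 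Since $|\pi|+|\sigma|\le n+m$, \fct{deri2} together with monotonicity of $t\mapsto t^{(3+o(1))t}$ gives $\bigl|\rho^{(|\pi|+|\sigma|)}(g)\bigr|\le (n+m)^{(3+o(1))(n+m)}$, and the number of terms in the double sum is at most the product of Bell numbers $B_{n}B_{m}\le n^{n}m^{m}$.

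Combining these with the elementary bounds $n!\,m!\le n^{n}m^{m}\le (n+m)^{n+m}$ and $B_{n}B_{m}\le (n+m)^{n+m}$, the estimate should collapse to
\[
	\left|\frac{\partial^{n}\partial^{m} r(u,v)}{\partial u^{n}\partial v^{m}}\right| \;\le\; (n+m)^{n+m}\cdot (n+m)^{(3+o(1))(n+m)}\cdot\frac{(n+m)^{n+m}}{|u|^{n}|v|^{m}} \;=\; \frac{(n+m)^{(5+o(1))(n+m)}}{|u|^{n}|v|^{m}},
\]
which lies below $(n+m)^{6(n+m)}/(|u|^{n}|v|^{m})$ once $n+m$ exceeds a small absolute constant, the finitely many remaining cases being checked directly. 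I do not expect a genuine obstacle here: the whole argument is bookkeeping, and the only thing requiring care is organizing the separable Fa\`{a} di Bruno expansion so that no spurious mixed-derivative terms appear, and tracking the combinatorial count and the products of factorials tightly enough that the final exponent comes out at $6$ rather than larger.
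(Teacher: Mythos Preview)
Your proposal is correct and follows essentially the same route as the paper: both proofs apply Fa\`{a} di Bruno to $r=\rho\circ g$ with $g(u,v)=\log u-\log v+c$, exploit the separability of $g$ so that only pure $u$- and $v$-derivatives of the inner function survive, and then bound each ingredient crudely (derivatives of $\log$, the bound on $\rho^{(t)}$ from \fct{deri2}, and a count of terms). The only cosmetic difference is that the paper indexes the Fa\`{a} di Bruno sum by partition types $(a_1,\dots,a_n)$, $(b_1,\dots,b_m)$ with $\sum i\,a_i=n$, $\sum i\,b_i=m$, whereas you index by set partitions $\pi\vdash[n]$, $\sigma\vdash[m]$; the resulting estimates are the same up to the exponent ($5+o(1)$ versus $6$), and both comfortably yield the stated bound.
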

\noindent We include the proof for the above three facts in \app{fact_bump} for self-containment.

\paragraph{Gaussian Space and the Gaussian Noise Operator}
We denote by $y\sim \NN(0,1)^n$ that $y = (y_1,\dots,y_n)\in\R^n$ is a random vector
whose components are independent standard Gaussian variables (i.e., with mean $0$ and variance $1$).
We say a random vector $Y\in\R^n$ is a $k$-wise independent standard Gaussian vector if every component of $Y$ is a standard Gaussian variable and
$\E[p(Y)] = \E_{y\sim \NN(0,1)^n}[p(y)]$ for all polynomials $p:\R^n\to \R$ with degree at most $k$.
For a function $f:\R^n\to \R$ on Gaussian space and $1\leq p\leq \infty$, the $p$-norm is denoted by
$
	\norm{f}_p = \br{\expect{y\sim \NN(0,1)^n}{\abs{f(y)}^p}}^{1/p}.
$
For $\rho\in[0,1]$, the \emph{Gaussian noise operator} $U_{\rho}$ is the operator on the space of functions $f:\R^n\to\R$ defined by
$
	U_{\rho}f(x) = \expect{y\sim \NN(0,1)^n}{f(\rho x+\sqrt{1-\rho^2}y)}.
$

The \emph{probabilists' Hermite polynomials}~\cite[Section 11]{O14} $\st{H_j}_{j\in\N}$ are defined by
\[
	H_j(y) = \frac{(-1)^{j}}{\varphi(y)}\cdot \frac{\mathrm{d}^j\varphi(y)}{\mathrm{d}\ y^j}
\]
where $\varphi(y)= \frac{1}{\sqrt{2\pi}}e^{-\frac{y^2}{2}}$.
The \emph{univariate Hermite polynomials} $\st{h_j}_{j\in\N}$ are defined by normalization: $h_{j} = \frac{1}{\sqrt{j!}}H_j$.
For a multi-index $\alpha\in\N^n$, the \emph{(multivariate) Hermite polynomial} $h_\alpha:\R^n\to \R$ is
\[
	h_\alpha(y) = \prod_{j=1}^n h_{\alpha_j}(y_j) \enspace.
\]
The degree of $h_\alpha$ is $\abs{\alpha}$.
The Hermite polynomials $\st{h_\alpha}_{\alpha\in\N^n}$ form an orthonormal basis for the functions over Gaussian space: $\expect{y\sim \NN(0,1)^n}{h_\alpha(y)h_\beta(y)} = 1$ iff $\alpha=\beta$, and every degree-$d$ polynomial $f:\R^n\to \R$ can be uniquely expanded as
\[
	f(y) = \sum_{\alpha\in\N^n, \abs{\alpha}\leq d} \widehat{f}(\alpha) h_{\alpha}(y)\enspace.
\]
We can also expand the function $f(x+\sqrt{\lambda} y)$ in the Hermite basis in a manner similar to Taylor expansion.
\begin{lemma}[Lemma 16 in \cite{KM22}]\label{lem:expanding}
Suppose $f(y) = \sum_{\alpha\in\N^n} \widehat{f}(\alpha) h_{\alpha}(y)$, we have
\[
	f(x+\sqrt{\lambda} y) =
	\sum_{\alpha\in\N^n} \frac{\partial^{\alpha}\phi(x)}{\sqrt{\alpha!}} \lambda^{\abs{\alpha}/2} h_{\alpha}(y) \enspace,
\]
where $\phi(x) = U_{\sqrt{1-\lambda}} f\br{\frac{x}{\sqrt{1-\lambda}}}$.
\end{lemma}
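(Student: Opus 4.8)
The plan is to compute the Hermite expansion of the function $y\mapsto f(x+\sqrt{\lambda}\,y)$ directly, treating $x$ as a fixed parameter, and to check that its $\alpha$-th coefficient equals $\frac{\partial^{\alpha}\phi(x)}{\sqrt{\alpha!}}\lambda^{\abs{\alpha}/2}$. Since in the intended application $f$ is a polynomial of some degree $d$, and both the noise operator $U_{\sqrt{1-\lambda}}$ and the rescaling $x\mapsto x/\sqrt{1-\lambda}$ preserve degree, the right-hand sum is finite ($\partial^{\alpha}\phi\equiv 0$ once $\abs{\alpha}>d$) and $y\mapsto f(x+\sqrt\lambda y)$ is itself a polynomial with a unique Hermite expansion, so there are no convergence issues; for general $L^2$ functions one truncates the Hermite expansion of $f$ and passes to the limit in $L^2$. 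Hence it suffices to show, for every $\alpha\in\N^n$,
\[
	\expect{y\sim\NN(0,1)^n}{f(x+\sqrt{\lambda}\,y)\,h_{\alpha}(y)} \;=\; \frac{\partial^{\alpha}\phi(x)}{\sqrt{\alpha!}}\,\lambda^{\abs{\alpha}/2}\enspace.
\]

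First I would invoke the Gaussian integration-by-parts (Stein-type) identity for Hermite polynomials: for any sufficiently smooth $g:\R^n\to\R$ and any multi-index $\alpha$,
\[
	\expect{y\sim\NN(0,1)^n}{g(y)\,h_{\alpha}(y)} \;=\; \frac{1}{\sqrt{\alpha!}}\,\expect{y\sim\NN(0,1)^n}{\partial^{\alpha} g(y)}\enspace,
\]
which follows coordinatewise (by Fubini) from the one-dimensional fact $\E[g(Y)H_n(Y)]=\E[g^{(n)}(Y)]$, itself an easy induction on $n$ starting from $\E[g(Y)Y]=\E[g'(Y)]$. Applying this with $g(y)=f(x+\sqrt{\lambda}\,y)$ and using $\partial^{\alpha}_{y}\!\big[f(x+\sqrt\lambda y)\big]=\lambda^{\abs{\alpha}/2}(\partial^{\alpha}f)(x+\sqrt\lambda y)$ reduces the target to the single identity $\E_{y}[(\partial^{\alpha}f)(x+\sqrt\lambda y)]=\partial^{\alpha}\phi(x)$.

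The remaining step is bookkeeping with the noise operator. Matching the definition $U_{\rho}h(z)=\E_{y}[h(\rho z+\sqrt{1-\rho^{2}}y)]$ against $\E_{y}[h(x+\sqrt\lambda y)]$ with $\rho=\sqrt{1-\lambda}$ and $z=x/\sqrt{1-\lambda}$ gives $\E_{y}[h(x+\sqrt\lambda y)]=(U_{\sqrt{1-\lambda}}h)(x/\sqrt{1-\lambda})$ for any $h$; take $h=\partial^{\alpha}f$. On the other side, differentiating under the expectation yields the commutation $\partial^{\alpha}(U_{\rho}f)=\rho^{\abs{\alpha}}\,U_{\rho}(\partial^{\alpha}f)$, while the chain rule for the rescaling $x\mapsto x/\rho$ contributes a factor $\rho^{-\abs{\alpha}}$; these cancel, so $\partial^{\alpha}\phi(x)=\partial^{\alpha}_{x}\!\big[(U_{\rho}f)(x/\rho)\big]=(U_{\rho}\partial^{\alpha}f)(x/\rho)$, which is exactly $\E_{y}[(\partial^{\alpha}f)(x+\sqrt\lambda y)]$. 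Combining the three displays proves the claimed coefficient formula, hence the lemma.

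I do not expect a serious obstacle: the proof is essentially the observation that, in the orthonormal Hermite basis, the map $f\mapsto f(x+\sqrt\lambda\,\cdot)$ is a rescaled ``Taylor shift'' whose coefficients are controlled by the adjoint-of-differentiation property of the $h_{\alpha}$. The only mild care needed is (i) keeping the normalization factors $\sqrt{\alpha!}$ consistent between $H_{\alpha}$ and $h_{\alpha}$, and (ii) tracking the two mutually cancelling powers of $\rho=\sqrt{1-\lambda}$ arising from the derivative/noise-operator commutation versus the argument rescaling hidden in $\phi$. For non-polynomial $f$ one additionally justifies interchanging the infinite Hermite sum with $U_{\sqrt{1-\lambda}}$ and with $\partial^{\alpha}$, which is routine via $L^{2}$ and dominated-convergence arguments.
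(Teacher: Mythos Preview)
The paper does not prove this lemma; it is quoted verbatim as Lemma~16 of \cite{KM22} and used as a black box, so there is no in-paper argument to compare against. Your proof is correct and self-contained: computing the $\alpha$-th Hermite coefficient of $y\mapsto f(x+\sqrt{\lambda}y)$ via the Stein/Rodrigues identity $\E[g(y)h_{\alpha}(y)]=\frac{1}{\sqrt{\alpha!}}\E[\partial^{\alpha}g(y)]$ and then identifying $\E_{y}[(\partial^{\alpha}f)(x+\sqrt{\lambda}y)]$ with $\partial^{\alpha}\phi(x)$ is exactly the right route.

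One small simplification: your third step can be done in one line. Once you have noted (as you do) that $\phi(x)=(U_{\sqrt{1-\lambda}}f)(x/\sqrt{1-\lambda})=\E_{y}[f(x+\sqrt{\lambda}y)]$, differentiating under the expectation in $x$ immediately gives $\partial^{\alpha}\phi(x)=\E_{y}[(\partial^{\alpha}f)(x+\sqrt{\lambda}y)]$; the detour through the commutation $\partial^{\alpha}U_{\rho}=\rho^{|\alpha|}U_{\rho}\partial^{\alpha}$ and the cancelling $\rho^{-|\alpha|}$ from the chain rule is correct but unnecessary.
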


The function $U_{\rho}f$ has the following expansion:
\[
	U_\rho f(y) = \sum_{\alpha\in\N^n, \abs{\alpha}\leq d} \rho^{\abs{\alpha}}\widehat{f}(\alpha) h_{\alpha}(y)\enspace.
\]
The definition of $U_\rho$ can be extended to $\rho>1$ by its action on the Hermite polynomials: $U_{\rho}h_{\alpha}(y) = \rho^{\abs{\alpha}}h_{\alpha}(y)$.
We will use the following hypercontractive inequality:
\begin{theorem}\label{thm:hc}
	Let $f:\R^n\to\R$ and $2\leq p\leq \infty$, $\norm{f}_p \leq \norm{U_{\sqrt{p-1}}f}_2$.
\end{theorem}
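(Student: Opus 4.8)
This is the Gaussian hypercontractive inequality repackaged, so the plan is to rewrite it in the standard $(2\to p)$ form, tensorize down to one variable, and then invoke the two-point inequality of Bonami and Beckner through the central limit theorem. Put $\sigma \defeq 1/\sqrt{p-1} \in (0,1]$ and read $U_{\sqrt{p-1}}f$ through the Hermite expansion, so that for a general $f\in L^2(\NN(0,1)^n)$ the right-hand side is a possibly infinite quantity --- finite, for instance, whenever $f$ is a polynomial, which is the only case we need. Since $U_\sigma U_\tau h_\alpha = (\sigma\tau)^{\abs{\alpha}}h_\alpha$, the noise operators compose multiplicatively; hence, setting $g = U_{\sqrt{p-1}}f$, we have $U_\sigma g = U_{\sigma\sqrt{p-1}}f = U_1 f = f$. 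Thus the asserted bound $\norm{f}_p \le \norm{U_{\sqrt{p-1}}f}_2$ is exactly $\norm{U_\sigma g}_p \le \norm{g}_2$, i.e.\ the statement that $U_\sigma\colon L^2(\NN(0,1)^n)\to L^p(\NN(0,1)^n)$ is a contraction; as $\sigma = \sqrt{(2-1)/(p-1)}$, this is precisely the $(2\to p)$ instance of hypercontractivity. So it suffices to prove $\norm{U_\sigma g}_p \le \norm{g}_2$ for all $g$ and all finite $p\ge 2$; the endpoint $p=\infty$ then follows by $p\to\infty$ (for a nonconstant polynomial the right-hand side diverges, and for a constant the inequality is an equality).

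I would then reduce to one dimension. The Gaussian noise operator is a coordinatewise tensor power, $U_\sigma = U_\sigma^{\otimes n}$, each factor acting on functions of a single standard Gaussian. By the standard tensorization lemma for hypercontractivity --- if $T_1,T_2$ are contractions from $L^a$ to $L^b$ with $b\ge a$, then so is $T_1\otimes T_2$, which one gets by applying Minkowski's integral inequality in the larger exponent $b$ --- it is enough to prove $\norm{U_\sigma g}_p \le \norm{g}_2$ for $n=1$.

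For $n=1$ I would pass to the Boolean cube via the central limit theorem. Realize a standard Gaussian as the distributional limit of $S_m = (x_1+\cdots+x_m)/\sqrt{m}$ with $x_i$ i.i.d.\ uniform on $\{-1,1\}$, expand $g=\sum_j \widehat{g}(j)h_j$, and let $G_m\colon\{-1,1\}^m\to\R$ be the multilinear polynomial agreeing with $g(S_m)$ on the cube (reduce powers using $x_i^2=1$). The Boolean noise operator $T_\rho$ scales the degree-$j$ part of a multilinear function by $\rho^j$, mirroring $U_\rho h_j = \rho^j h_j$; together with the central limit theorem for polynomial statistics this yields $\norm{G_m}_p \to \norm{g}_p$ and $\norm{T_\sigma G_m}_p \to \norm{U_\sigma g}_p$ as $m\to\infty$ (the left-hand norms taken over the uniform measure on $\{-1,1\}^m$, the right-hand ones over $\NN(0,1)$). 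Applying the $m$-fold tensorization of the two-point inequality $\norm{T_\sigma G}_p \le \norm{G}_2$ on $\{-1,1\}^m$ to $G=G_m$ and letting $m\to\infty$ gives the claim. Alternatively one can stay in Gaussian space and run a semigroup argument, differentiating $t\mapsto\norm{U_{e^{-t}}g}_{q(t)}$ along the path $q(t)=1+e^{2t}$ (so $q(0)=2$, $U_{e^{0}}=\mathrm{id}$, while $q(t)=p$, $U_{e^{-t}}=U_\sigma$ at $t=\tfrac12\log(p-1)$) and checking the derivative is $\le 0$, which reduces exactly to the Gaussian logarithmic Sobolev inequality $\mathrm{Ent}(g^2)\le 2\,\E[\norm{\nabla g}^2]$.

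The single non-routine ingredient is the analytic kernel of the one-dimensional step: in the central-limit route this is the two-point hypercontractive inequality for affine functions $a+bx$ on $\{-1,1\}$, which is elementary but needs the slightly subtle reduction to a single one-real-variable inequality; in the semigroup route it is the Gaussian log-Sobolev inequality. Everything else --- the algebraic rewriting into $(2\to p)$ form, the Minkowski-based tensorization, and the convergence of the cube quantities to their Gaussian counterparts (which also uses uniform integrability of the bounded-degree statistics $g(S_m)$) --- is standard bookkeeping; the argument as outlined is the textbook proof and could equally well be cited from \cite{O14}.
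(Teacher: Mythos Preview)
The paper does not prove \thm{hc}; it is stated without proof in the preliminaries, with a pointer to \cite{O14} for background on Gaussian analysis. Your proposal is a correct sketch of the standard textbook proof (indeed, essentially the one in \cite{O14}), so there is nothing to compare --- you have supplied a valid proof where the paper simply cites the result.
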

For more details on analysis over Gaussian space, readers may refer to \cite{O14}. 

\paragraph{Low-Degree Polynomials.}
Low-degree polynomials are extensively studied in the literature.
We list some results used in this paper.
It is well-know that low-degree polynomials have the following anti-concentration property:
\begin{lemma}[Theorem 8 in \cite{CW01}]\label{lem:anti}
	Let $p:\R^n\to\R$ be a polynomial of degree $d$ with $\norm{p}_2=1$. Then 
	\[
		\Pr_{x\sim \NN(0,1)^n} [\abs{p(x)}\leq \epsilon] = O(d\epsilon^{1/d}) \enspace .
	\]
\end{lemma}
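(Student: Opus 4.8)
The statement in question is \lem{anti}, the Carbery--Wright anti-concentration inequality for low-degree polynomials over Gaussian space. Since this is a well-known result quoted from \cite{CW01}, the "proof" I would present is really a sketch of the standard argument rather than an original derivation.

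\medskip

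\noindent\textbf{Proof strategy for \lem{anti}.} The plan is to bound the small-ball probability $\Pr_{x}[|p(x)| \le \epsilon]$ by exploiting the fact that low-degree polynomials in Gaussian variables are highly concentrated in every $L^q$ norm, via hypercontractivity. First I would reduce to controlling a suitable negative moment or, more precisely, apply the distributional form of the argument: if $Z = p(x)$ has $\|Z\|_2 = 1$, then by \thm{hc} (hypercontractivity), for every even integer $q \ge 2$ we have $\|Z\|_q \le (q-1)^{d/2}\|Z\|_2 = (q-1)^{d/2}$, so all moments of $Z$ grow only polynomially in $q$ with exponent governed by $d$. The key analytic input is then an anti-concentration lemma stating that a random variable whose moments satisfy $\|Z\|_q \le C q^{d/2}\|Z\|_2$ cannot place too much mass near any point; over Gaussian space one gets the clean bound $\Pr[|Z| \le \epsilon \|Z\|_2] = O(d\,\epsilon^{1/d})$. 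The cleanest route to this is Carbery and Wright's own approach, which uses the structure of Gaussian measure directly rather than moments: one shows that the density of $p(x)$ is bounded in a neighborhood of any value, using an integration-by-parts / divergence argument against the Gaussian weight combined with the fact that $\nabla p$ cannot vanish on a large set when $p$ has small degree.

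\medskip

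\noindent The key steps, in order, are: (i) normalize so that $\|p\|_2 = 1$; (ii) invoke hypercontractivity (\thm{hc}) to obtain the moment growth bound $\|p\|_q \le (q-1)^{d/2}$ for all $q \ge 2$, which encodes the "degree-$d$ regularity" of $p$; (iii) apply the Carbery--Wright small-ball estimate: from the moment bounds one deduces that for a random variable $Z$ with these moments, $\Pr[|Z| \le \epsilon] \lesssim d\,\epsilon^{1/d}$ — the exponent $1/d$ arises because a degree-$d$ polynomial can be as flat as $t^d$ near a root, so the measure of $\{|t^d| \le \epsilon\}$ scales like $\epsilon^{1/d}$, and this worst case is essentially tight; (iv) optionally, track the constant to confirm it is absolute. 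Since \lem{anti} is merely cited, in the paper one would simply state it with the reference and move on; the sketch above is what underlies the cited theorem.

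\medskip

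\noindent The main obstacle — were one to prove this from scratch rather than cite it — is step (iii): passing from polynomial moment growth to a small-ball bound with the \emph{correct} exponent $1/d$ and an \emph{absolute} constant. Naive approaches (e.g. a Paley--Zygmund / second-moment argument, or crude Markov bounds on $1/|Z|$) give an exponent that degrades like $1/\mathrm{poly}(d)$ rather than $1/d$, or produce constants depending on $d$ in the wrong way. Carbery and Wright's resolution uses a clever induction on the degree combined with a geometric decomposition of Gaussian space, which is the genuinely nontrivial content; for the purposes of this paper it is legitimate to take it as a black box.
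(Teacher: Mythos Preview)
Your assessment is correct: the paper does not prove \lem{anti} at all --- it is stated in the Preliminaries with a citation to \cite{CW01} and used as a black box. Your sketch of the Carbery--Wright argument is reasonable background, but since the paper simply cites the result, there is nothing to compare.
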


Suppose $p$ is a low-degree polynomial, the following gives an estimation on the deviation of $p(x)$ caused by a small perturbation.
\begin{lemma}[Lemma 22 in \cite{Kan11b}] \label{lem:close}
	Let $p:\R^n\to\R$ be a polynomial of degree $d$ with $\norm{p}_2=1$. Suppose $x\in\R^n$ be a vector with $\norm{x}_\infty\leq B(B>1)$. Let $x'$ be another vector such that $\norm{x-x'}_\infty\leq \delta<1$. Then 
	\[
		\abs{p(x)-p(x')}\leq \delta n^{d/2}O(B)^d \enspace .
	\]
\end{lemma}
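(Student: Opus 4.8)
The plan is to reduce the claim to a uniform bound on the gradient of $p$ over the box $[-2B,2B]^n$ via the mean value theorem, and then to control that gradient through the Hermite expansion of $p$ together with a crude coefficient estimate for univariate Hermite polynomials. We may assume $d\ge 1$, since for $d=0$ the polynomial $p$ is constant and both sides vanish. First I would apply the first‑order Taylor expansion (the case $d=0$ of \thm{taylor}): there is a point $z$ on the segment between $x$ and $x'$ with $p(x')-p(x)=\sum_{i=1}^n \partial_i p(z)\,(x'_i-x_i)$, hence $\abs{p(x)-p(x')}\le \norm{x-x'}_\infty\sum_{i=1}^n\abs{\partial_i p(z)}\le \delta\sum_{i=1}^n\abs{\partial_i p(z)}$, while $\norm{z}_\infty\le \norm{x}_\infty+\norm{x-x'}_\infty\le B+1\le 2B$. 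So it remains to bound $\sum_i\abs{\partial_i p(z)}$ for an arbitrary $z$ with $\norm{z}_\infty\le 2B$.

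Next, expand $p=\sum_{\abs{\alpha}\le d}\widehat p(\alpha)\,h_\alpha$; since $\norm{p}_2=1$, $\sum_\alpha\widehat p(\alpha)^2=1$. Using the standard identity $\partial_i h_\alpha=\sqrt{\alpha_i}\,h_{\alpha-e_i}$ (with $e_i$ the $i$‑th coordinate vector) and the orthonormality of $\{h_{\alpha-e_i}\}_{\alpha_i\ge 1}$, one gets $\norm{\partial_i p}_2^2=\sum_\alpha\alpha_i\,\widehat p(\alpha)^2$, hence $\sum_{i=1}^n\norm{\partial_i p}_2^2=\sum_\alpha\abs{\alpha}\,\widehat p(\alpha)^2\le d$. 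Since $\partial_i p$ has degree at most $d-1$, Cauchy--Schwarz applied to its Hermite expansion gives $\abs{\partial_i p(z)}\le\norm{\partial_i p}_2\cdot S(z)^{1/2}$, where $S(z)\coloneq\sum_{\abs{\beta}\le d-1}h_\beta(z)^2$; a second Cauchy--Schwarz over $i$ yields $\sum_i\abs{\partial_i p(z)}\le\sqrt n\,\bigl(\sum_i\norm{\partial_i p}_2^2\bigr)^{1/2}S(z)^{1/2}\le\sqrt{nd}\cdot S(z)^{1/2}$. Thus it all comes down to the pointwise estimate $S(z)\le n^{d-1}\,(O(B))^{2(d-1)}$ on the box.

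Finally, I would bound $S(z)$ factor by factor. The normalized Hermite polynomial $h_m=H_m/\sqrt{m!}$ has monomial coefficients whose absolute values sum to $c_m/\sqrt{m!}$, where $c_m=\sum_l \frac{m!}{(m-2l)!\,2^l\,l!}$ is the number of involutions of $[m]$; using the bound $c_m^2\le(O(1))^m\,m!$ (immediate for small $m$, and for large $m$ a consequence of $c_m=\Theta(m^{m/2}e^{-m/2+\sqrt m})$), one obtains $\abs{h_m(y)}\le(O(1))^m\max(1,\abs{y})^m$, so $h_m(y)^2\le(O(B^2))^m$ for $\abs{y}\le 2B$ and every $m\ge 0$. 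Multiplying over the $n$ coordinates (the $\beta_j=0$ factors equal $1$) gives $\prod_j h_{\beta_j}(z_j)^2\le(O(B^2))^{\abs{\beta}}\le(O(B^2))^{d-1}$ whenever $\abs{\beta}\le d-1$; and since $\#\{\alpha\in\N^n:\abs{\alpha}\le d-1\}=\binom{n+d-1}{d-1}\le(n+1)^{d-1}$, summing gives $S(z)\le(n+1)^{d-1}(O(B^2))^{d-1}=n^{d-1}(O(B))^{2(d-1)}$. Plugging back, $\abs{p(x)-p(x')}\le\delta\sqrt{nd}\cdot n^{(d-1)/2}(O(B))^{d-1}=\delta\,n^{d/2}\sqrt d\,(O(B))^{d-1}\le\delta\,n^{d/2}\,O(B)^d$, using $\sqrt d\le 2^d$ and $B\ge 1$. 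The main obstacle here is exactly the univariate Hermite estimate: settling for the trivial bound $c_m\le m!$ would replace $(O(B^2))^m$ by $(m\,B^2)^m$ and leak a stray factor $d^{\Theta(d)}$, which would not fit inside the advertised $O(B)^d$; one really has to use that the involution count $c_m$ is of order $\sqrt{m!}$ up to a sub‑exponential factor.
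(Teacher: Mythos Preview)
The paper does not prove this lemma; it is quoted as Lemma~22 of \cite{Kan11b} and used as a black box, so there is no in-paper argument to compare against. Your proof is correct and self-contained. The mean-value reduction, the two Cauchy--Schwarz steps together with $\sum_i\norm{\partial_i p}_2^2=\sum_\alpha\abs{\alpha}\,\widehat p(\alpha)^2\le d$, and the pointwise Hermite bound $\abs{h_m(y)}\le (c_m/\sqrt{m!})\max(1,\abs{y})^m$ all check out. The key inequality $c_m^2\le (O(1))^m m!$ does follow from the involution asymptotic $c_m\sim (m/e)^{m/2}e^{\sqrt m}$, which gives $c_m^2/m!=e^{O(\sqrt m)}\le C^m$ for an absolute $C$; this is precisely what keeps the implied constant in $O(B)^d$ independent of $d$, and your closing remark about the failure of the crude bound $c_m\le m!$ is on point. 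The count $\binom{n+d-1}{d-1}\le(n+1)^{d-1}$ holds (map each $\beta$ with $\abs{\beta}\le d-1$ injectively to a length-$(d-1)$ word over $[n]\cup\{*\}$ by listing its sorted support and padding with $*$'s), and the leftover $\sqrt d$ and $(1+1/n)^{d-1}\le 2^{d-1}$ factors are legitimately absorbed into $O(B)^d$ using $B>1$ and $d^{1/(2d)}\le e^{1/(2e)}$.
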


The magnitudes of the derivatives of a low-degree polynomial are likely to grow at a moderate rate with high probability. Formally,
\begin{lemma}[Lemma 6 in \cite{KM22}]\label{lem:good_event}
	Let $p:\R^n \to \R$ be an arbitrary polynomial of degree $d$ and $y\sim\NN\br{0,1}^n$,
	the following holds with probability at least $1-\epsilon d^3$:
	\[
		\norm{\grad{t}{p(y)}} \leq O\br{\frac{1}{\epsilon}} \norm{\grad{t-1}{p(y)}}
		\text{ for all } 1\leq t\leq d.
	\]
\end{lemma}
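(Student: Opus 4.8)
\textit{Proof plan.}
The plan is to fix the order $t$, bound the probability that the $t$-th inequality fails by $O(\epsilon d^2)$, and take a union bound over $1\le t\le d$. Throughout write $T_s(y)=\norm{\grad{s}{p(y)}}^2=\sum_{\abs{\alpha}=s}\br{\partial^{\alpha}p(y)}^2$, a nonnegative polynomial of degree at most $2(d-s)$ which is moreover a \emph{sum of squares} of polynomials of degree at most $d-s$; assuming $T_{t-1}\not\equiv 0$ (otherwise the $t$-th inequality is vacuous) we rescale $p$ so that $\expec{T_{t-1}(y)}=1$. First I would record the mean comparison $\expec{T_t(y)}\le d\cdot\expec{T_{t-1}(y)}=d$: since $\partial_i h_{\alpha}=\sqrt{\alpha_i}\,h_{\alpha-e_i}$, expanding $p=\sum_{\alpha}\widehat p(\alpha)h_{\alpha}$ gives $\partial^{\beta}p=\sum_{\alpha\ge\beta}\widehat p(\alpha)\sqrt{\alpha!/(\alpha-\beta)!}\,h_{\alpha-\beta}$, so $\expec{T_s(y)}=\sum_{\alpha}\widehat p(\alpha)^2 c_{\alpha,s}$ with $c_{\alpha,s}=\sum_{\beta\le\alpha,\ \abs{\beta}=s}\alpha!/(\alpha-\beta)!$, and one checks termwise that $c_{\alpha,t}\le(\abs{\alpha}-t+1)\,c_{\alpha,t-1}\le d\,c_{\alpha,t-1}$. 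An easy first attempt is to split the bad event $\st{T_t(y)>(C/\epsilon)^2 T_{t-1}(y)}$ into ``$T_t(y)$ large'' (handled by Markov using the mean comparison) and ``$T_{t-1}(y)$ small''; but feeding the degree-$\le 2d$ polynomial $T_{t-1}$ naively into \lem{anti} only controls $\Pr\!\Br{T_{t-1}(y)<\tau}$ at threshold $\tau=\epsilon^{\Theta(d)}$, so this route loses and yields a ratio $\epsilon^{-\Theta(d)}$ instead of $O(1/\epsilon)$.

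To obtain the sharp bound one treats the ratio directly, through an anti-concentration inequality of the form
\[
	\Pr_{y\sim\NN(0,1)^n}\!\Br{\,\norm{q(y)}_2\,\le\,\eta\,\norm{Dq(y)}_F\,}\ \le\ \poly{D}\cdot\eta ,
\]
valid for every vector-valued polynomial $q:\R^n\to\R^m$ whose components have degree at most $D$, where $Dq$ denotes the Jacobian. Applying this with $q=\grad{t-1}{p}$ and $D=d-t+1$, and using that $\norm{\grad{t}{p(y)}}^2$ and $\norm{Dq(y)}_F^2$ coincide up to a factor at most $t$, gives $\Pr\!\Br{\norm{\grad{t}{p(y)}}>(C/\epsilon)\norm{\grad{t-1}{p(y)}}}\le\poly{d}\cdot\epsilon/C$, which after the union bound over $t$ proves the lemma (the $\poly{d}$ being absorbed into the implicit constant in $O(1/\epsilon)$). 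To prove the displayed inequality I would argue that when $\norm{q(y)}_2$ is far below $\norm{Dq(y)}_F$ the point $y$ must lie within distance $\poly{D}\cdot\eta$ of the common zero locus $\st{q=0}$ — obtained by descending along the direction in which $\norm{q}_2^2$ decreases fastest — and otherwise (when that locus is empty or far) $\norm{q(y)}_2^2$ restricted to an appropriate low-dimensional subspace is a low-degree polynomial forced below $\eta^2\norm{Dq(y)}_F^2$; in either situation the Gaussian measure of the relevant set is bounded by \lem{anti} applied at the effective degree $D$. The scalar case $m=1$ is already instructive: writing $q'/q=\sum_j m_j/(x-z_j)$ over the complex roots $z_j$ of $q$ shows that $\abs{q(y)}<\eta\abs{q'(y)}$ implies $y$ is within $D\eta$ of some (nearly real) root, giving a linear-in-$\eta$ bound directly.

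The hard part is the displayed anti-concentration inequality for vector-valued polynomials: it must be linear in $\eta$ (so that the ratio threshold enters only as $O(1/\epsilon)$, not $\epsilon^{-\Theta(d)}$), and achieving this requires carefully controlling how close a Gaussian point can get to the common zero set of the $\partial^{\alpha}p$, together with the sum-of-squares structure of $\norm{\grad{t-1}{p}}^2$ in the degenerate cases. Everything else — the Hermite mean comparison, the Markov upper tail, and the union bound — is routine.
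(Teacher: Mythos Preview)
The paper does not prove this lemma; it is imported verbatim as Lemma~6 of~\cite{KM22} and used as a black box. There is therefore no in-paper proof to compare against, and your plan has to be judged on its own.

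Your architecture is right --- union bound over $t$, reduce to a ratio-type bound at each level, and your diagnosis that the naive Markov/Carbery--Wright split only yields a threshold $\epsilon^{-\Theta(d)}$ is correct; the Hermite mean comparison $\expec{T_t}\le d\,\expec{T_{t-1}}$ is also fine. The gap is that your ``displayed inequality'' $\Pr[\norm{q(y)}_2\le\eta\norm{Dq(y)}_F]\le\poly{D}\cdot\eta$ is precisely the content of the lemma (specialised to $q=\grad{t-1}{p}$), and the sketch you give for it is not correct. Your geometric claim that on the bad event $y$ must lie within $\poly{D}\cdot\eta$ of the common zero locus $\{q=0\}$ is false: take $p(y)=y_1^3/3+y_2^2/2$, $t=2$, so $q=\nabla p=(y_1^2,\,y_2)$ with $D=2$; the point $(\sqrt{\eta},0)$ satisfies $\norm{q}_2=\eta\le\eta\norm{Dq}_F$, yet its distance to the zero locus $\{(0,0)\}$ is $\sqrt{\eta}$, not $O(\eta)$. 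The inequality still holds in this example --- the bad set has Gaussian measure $O(\eta^{3/2})$ --- but because it is $O(\eta)$-thin in the $y_2$-direction, not because it sits in an $\eta$-tube around $\{q=0\}$. Similarly, ``applying \lem{anti} at the effective degree $D$'' only produces $\eta^{1/D}$, not a bound linear in $\eta$, so that branch of your sketch does not close either.

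You are right that the univariate root identity $q'/q=\sum_j 1/(x-z_j)$ is where linearity in $\eta$ originates. What is missing is the mechanism that transports it to many variables and many components; the argument behind~\cite{KM22} does this by restricting $p$ to a random line through $y$, so that the one-variable derivatives $r^{(s)}(0)$ are controlled by the tensors $\grad{s}{p(y)}$, and then runs the univariate root argument on $r$. Your plan gestures at this in the $m=1$ remark but does not carry out the reduction, and the alternative geometric route you propose does not work as written.
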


The following lemma gives quantitative bounds on how much
the derivatives $\grad{t}{p(x+\sqrt{\lambda}y)}$ are concentrated around those of $\phi(x) =\expect{y\sim \NN(0,1)^n}{p(x+\sqrt{\lambda} y)}$ when $y\sim \NN(0,1)^n$.

\begin{lemma}[Lemma 23 in \cite{KM22}]\label{lem:concentrate}
	Let $0\leq \lambda<1$ and $p:\R^n \to \R$ be an arbitrary polynomial of degree $d$ and $\phi(x) = U_{\sqrt{1-\lambda}} p\!\br{\frac{x}{\sqrt{1-\lambda}}}=\expect{y\sim \NN(0,1)^n}{p(x+\sqrt{\lambda} y)}$. For $0\leq t\leq d$ and $y\sim\NN\br{0,1}^n$,
	\[
	\br{\expect{y\sim\NN\br{0,1}^n}{\norm{ \grad{t}{p(x+\sqrt{\lambda}y)} - \grad{t}{\phi(x)}}^R}}^{\frac{1}{R}} \leq
	\sqrt{ \sum_{j=t+1}^d (\lambda dR)^{j-t} \norm{ \grad{j}{\phi(x)}}^2 } \enspace.
	\]
\end{lemma}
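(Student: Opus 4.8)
The target inequality bounds the $R$-th moment of $\norm{\grad{t}{p(x+\sqrt{\lambda}y)} - \grad{t}{\phi(x)}}$ where $\phi(x) = \E_y[p(x+\sqrt\lambda y)]$. Let me think about how to prove this.

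Key observation: for fixed $x$, the function $y \mapsto p(x+\sqrt\lambda y)$ is a polynomial in $y$ of degree $d$. Its Hermite expansion is given by Lemma \ref{lem:expanding}: $p(x+\sqrt\lambda y) = \sum_\alpha \frac{\partial^\alpha \phi(x)}{\sqrt{\alpha!}} \lambda^{|\alpha|/2} h_\alpha(y)$. The degree-0 term (α=0) is exactly $\phi(x)$. So $p(x+\sqrt\lambda y) - \phi(x) = \sum_{|\alpha|\geq 1} \frac{\partial^\alpha\phi(x)}{\sqrt{\alpha!}}\lambda^{|\alpha|/2}h_\alpha(y)$.

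Now I want derivatives in $x$. Here's the trick: $\partial^\beta_x [p(x+\sqrt\lambda y)] = (\partial^\beta p)(x+\sqrt\lambda y)$, and also $\partial^\beta_x \phi(x) = \E_y[(\partial^\beta p)(x+\sqrt\lambda y)]$. So applying $\partial^\beta_x$ to the Hermite expansion (and using that $\partial^\beta$ commutes through, since it's a polynomial identity in $y$ for each $x$, differentiable in $x$), I get
$$(\partial^\beta p)(x+\sqrt\lambda y) - (\partial^\beta\phi)(x) = \sum_{|\alpha|\geq 1} \frac{\partial^{\alpha+\beta}\phi(x)}{\sqrt{\alpha!}}\lambda^{|\alpha|/2}h_\alpha(y).$$

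**Steps:**

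1. **Set up the Hermite expansion of the derivative difference.** For each multi-index $\beta$ with $|\beta|=t$, write $(\partial^\beta p)(x+\sqrt\lambda y) - (\partial^\beta\phi)(x) = \sum_{|\alpha|\geq 1} c_{\alpha,\beta} h_\alpha(y)$ where $c_{\alpha,\beta} = \frac{\partial^{\alpha+\beta}\phi(x)}{\sqrt{\alpha!}}\lambda^{|\alpha|/2}$. Note that since $p$ has degree $d$, $\partial^{\alpha+\beta}\phi$ vanishes unless $|\alpha+\beta|\leq d$, i.e., $|\alpha|\leq d-t$.

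2. **Compute the $2$-norm (in $y$) of the difference vector.** By orthonormality of Hermite polynomials,
$$\E_y\Big[\norm{\grad{t}{p(x+\sqrt\lambda y)} - \grad{t}{\phi(x)}}^2\Big] = \sum_{|\beta|=t}\sum_{1\leq|\alpha|\leq d-t} \frac{(\partial^{\alpha+\beta}\phi(x))^2}{\alpha!}\lambda^{|\alpha|}.$$
Grouping by $j = |\alpha+\beta|$ (so $|\alpha| = j-t$) and using the combinatorial identity $\sum_{\alpha+\beta=\gamma}\frac{1}{\alpha!}\binom{?}{}$... more precisely, one needs that $\sum_{|\beta|=t}\sum_{|\alpha|=j-t} \frac{(\partial^{\alpha+\beta}\phi)^2}{\alpha!} \leq (\text{multinomial factor}) \cdot \norm{\grad{j}{\phi}}^2$ where the multinomial factor works out to at most $(dR)^{j-t}$ after also accounting for the $\lambda^{j-t}$; actually the clean route is the multinomial theorem: for a fixed $\gamma$ with $|\gamma|=j$, the number of ways to split $\gamma = \alpha+\beta$ with $|\beta|=t$, weighted by $1/\alpha!$ relative to $1/\gamma!$, telescopes. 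The bound $\sum_{j=t+1}^d (\lambda d R)^{j-t}\norm{\grad j\phi}^2$ (for the $R$-th moment) versus $\sum_{j=t+1}^d \binom{j}{t}\lambda^{j-t}\norm{\grad j\phi}^2$-ish (for the $2$nd moment) suggests the $R$ enters via hypercontractivity, not the degree-$2$ computation.

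3. **Pass from the $2$-norm to the $R$-norm via hypercontractivity.** The quantity $\norm{\grad t{p(x+\sqrt\lambda y)} - \grad t{\phi(x)}}^2 = \sum_{|\beta|=t}\big((\partial^\beta p)(x+\sqrt\lambda y) - (\partial^\beta\phi)(x)\big)^2$ is, as a function of $y$, a sum of squares of polynomials of degree $\leq d-t$ with zero constant Hermite coefficient — but more usefully, I should treat $g(y) := \norm{\grad t{\cdots}}$ directly, or rather bound $\E_y[g(y)^R]$. Since each $(\partial^\beta p)(x+\sqrt\lambda y) - (\partial^\beta\phi)(x)$ is a polynomial in $y$ of degree $\leq d-t$, and $g^2$ is a sum of their squares hence degree $\leq 2(d-t)$, I apply Theorem \ref{thm:hc}: for the vector-valued function, $\big(\E_y[g^R]\big)^{1/R} = \big(\E_y[(g^2)^{R/2}]\big)^{1/R}$. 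The right move is Theorem \ref{thm:hc} applied to each component polynomial $q_\beta(y) := (\partial^\beta p)(x+\sqrt\lambda y) - (\partial^\beta\phi)(x)$: $\norm{q_\beta}_R \leq \norm{U_{\sqrt{R-1}}q_\beta}_2$, and $U_{\sqrt{R-1}}$ multiplies the degree-$|\alpha|$ Hermite coefficient of $q_\beta$ by $(R-1)^{|\alpha|/2}$. Then $\norm{U_{\sqrt{R-1}}q_\beta}_2^2 = \sum_{1\leq|\alpha|\leq d-t}(R-1)^{|\alpha|}\frac{(\partial^{\alpha+\beta}\phi)^2}{\alpha!}\lambda^{|\alpha|}$. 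Summing over $\beta$, using Minkowski's inequality in $L^R$ to combine the components (or noting $g = \|(q_\beta)_\beta\|_{\ell_2}$ and bounding $\E_y[g^R]^{1/R} \leq \|(\|q_\beta\|_R)_\beta\|_{\ell_2}$, which holds for $R\geq 2$ by Minkowski), I get $\big(\E_y[g^R]\big)^{1/R} \leq \sqrt{\sum_{|\beta|=t}\sum_{1\leq|\alpha|\leq d-t}(R-1)^{|\alpha|}\frac{(\partial^{\alpha+\beta}\phi(x))^2}{\alpha!}\lambda^{|\alpha|}}$.

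4. **Reindex and apply the multinomial bound.** Substitute $j = |\alpha|+t$, so the sum becomes $\sum_{j=t+1}^d (\lambda(R-1))^{j-t}\sum_{|\beta|=t}\sum_{|\alpha|=j-t}\frac{(\partial^{\alpha+\beta}\phi(x))^2}{\alpha!}$. It remains to show the inner double sum is at most $d^{j-t}\norm{\grad j{\phi(x)}}^2$ (absorbing into the $(\lambda d R)^{j-t}$ in the statement, using $R-1 \leq R$). For a fixed $\gamma=\alpha+\beta$ with $|\gamma|=j$: the term $(\partial^\gamma\phi)^2$ appears once for each way of writing $\gamma=\alpha+\beta$ with $|\beta|=t$, weighted by $1/\alpha!$. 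So I need $\sum_{\alpha\leq\gamma,\,|\alpha|=j-t}\frac{1}{\alpha!} \leq d^{j-t}$ for each such $\gamma$. Since $|\gamma|=j\leq d$, and $\sum_{\alpha\leq\gamma,|\alpha|=j-t}\frac{(j-t)!}{\alpha!(\gamma-\alpha)!}\leq\sum_{\alpha\leq\gamma,|\alpha|=j-t}\binom{|\gamma|}{|\alpha|}$-type... cleaner: $\sum_{\alpha:|\alpha|=s}\frac{1}{\alpha!} = \frac{n^s}{s!}$ over all $\alpha$, but restricted to $\alpha\leq\gamma$ it's at most $\binom{j}{s}\cdot(\text{something})$ — actually the multinomial theorem gives $\sum_{\alpha\leq\gamma,|\alpha|=s}\frac{\gamma!}{\alpha!(\gamma-\alpha)!}=\binom{|\gamma|}{s}$ is false in general; the correct identity is $\sum_{\alpha\leq\gamma, |\alpha|=s}\binom{\gamma}{\alpha} = $ coefficient extraction giving $\leq 2^{|\gamma|}$, but with $1/\alpha!$ weights I instead note $\frac{1}{\alpha!}\leq\frac{\gamma!}{\alpha!(\gamma-\alpha)!}=\binom{\gamma}{\alpha}$ when... hmm, this needs $\gamma!/(\gamma-\alpha)!\geq 1$, true. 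So $\sum_{\alpha\leq\gamma,|\alpha|=s}\frac1{\alpha!}\leq\sum_{\alpha\leq\gamma,|\alpha|=s}\binom{\gamma}{\alpha}\leq (s+1)^{\text{supp}(\gamma)}$... The precise constant is the routine part; the bound $\sum_{\alpha\leq\gamma,|\alpha|=j-t}\frac1{\alpha!}\leq\frac{(j-t)!}{(j-t)!}\cdot(\#\{\alpha\leq\gamma:|\alpha|=j-t\})/1 \leq \binom{j}{t}\leq d^{j-t}$ suffices after checking that $\#\{\alpha\leq\gamma:|\alpha|=j-t, \alpha\in\N^n\}$ composed with the $1/\alpha!$ weights is $\leq\binom{|\gamma|}{j-t}$; this follows since $\sum_{\alpha+\delta=\gamma}\frac{\gamma!}{\alpha!\delta!}$ restricted to $|\alpha|=j-t$ equals $\binom{j}{j-t}$ by Vandermonde, and $\gamma!/\delta!\geq1$.)

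**Main obstacle.** The genuinely delicate point is Step 3 — correctly invoking hypercontractivity for the *vector-valued* (multi-index indexed) derivative function and getting the $(R-1)^{|\alpha|/2}$ factors to land exactly on the Hermite levels, so that after squaring they produce the $(\lambda d R)^{j-t}$ weights. One must be careful that $U_{\sqrt{R-1}}$ acts level-wise and that combining the components via Minkowski (valid since $R\geq 2$) does not lose more than a benign factor, which gets absorbed into the $d^{j-t}$ slack. Steps 2 and 4 are pure multi-index bookkeeping with the multinomial theorem and should be routine; Step 1 just needs the observation that $x$-derivatives commute through the Hermite expansion of Lemma \ref{lem:expanding}, which is legitimate because for each fixed $y$ it is a polynomial identity in $x$.
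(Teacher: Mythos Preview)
The paper does not actually prove this lemma: it is stated in the Preliminary section as ``Lemma 23 in \cite{KM22}'' and used as a black box. So there is no in-paper proof to compare against.

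That said, your plan is the standard and correct way to establish the bound, and it is essentially how \cite{KM22} proves it: expand $p(x+\sqrt{\lambda}y)$ in the Hermite basis via \lem{expanding}, differentiate in $x$ to get the componentwise identity $(\partial^\beta p)(x+\sqrt{\lambda}y)-(\partial^\beta\phi)(x)=\sum_{|\alpha|\geq 1}\frac{\partial^{\alpha+\beta}\phi(x)}{\sqrt{\alpha!}}\lambda^{|\alpha|/2}h_\alpha(y)$, apply hypercontractivity (\thm{hc}) to pass from the $R$-norm to the $2$-norm at the cost of a factor $(R-1)^{|\alpha|/2}$ on each Hermite level, combine components with Minkowski in $L^{R/2}$ (valid for $R\geq 2$), and finish with the multinomial count. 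Your Step~4 bookkeeping is a bit tangled in the write-up, but the clean version you land on at the end is correct: for fixed $\gamma$ with $|\gamma|=j$, one has $\sum_{\alpha\leq\gamma,\,|\alpha|=j-t}\frac{1}{\alpha!}\leq\sum_{\alpha\leq\gamma,\,|\alpha|=j-t}\binom{\gamma}{\alpha}=\binom{j}{j-t}\leq j^{j-t}\leq d^{j-t}$, using Vandermonde for the middle equality. Substituting gives exactly $\sum_{j=t+1}^d(\lambda d R)^{j-t}\norm{\grad{j}{\phi(x)}}^2$ (absorbing $R-1\leq R$).

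The one thing to tighten when you write it out formally is Step~3: state explicitly that $\norm{g}_R^2=\norm{g^2}_{R/2}=\big\|\sum_{|\beta|=t}q_\beta^2\big\|_{R/2}\leq\sum_{|\beta|=t}\norm{q_\beta^2}_{R/2}=\sum_{|\beta|=t}\norm{q_\beta}_R^2$ by the triangle inequality in $L^{R/2}$, and then apply \thm{hc} to each $q_\beta$ separately. Everything else is routine.
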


\section{Fooling the Functions of PTFs via Bounded Independence}
\label{sec:fool}
In this section, we show that a random Gaussian vector matching certain moments fools \emph{any} function of low-degree polynomial threshold functions. Formally, we prove
\begin{theorem}\label{thm:cont}
	Fix a small constant $0<\epsilon<1$ and let $R\in\N$ be an integer. Let $p_1,\dots,p_k:\R^n \to \R$ be arbitrary polynomials of degree $d$ and $f:\bit{k}\to\st{0,1}$ be an arbitrary Boolean function.
	Define function
	\[
		F(x)\coloneqq f\!\br{\sign{p_1\!(x)},\dots,\sign{p_k\!(x)}}
	\]
	Let $Y = \frac{1}{\sqrt{L}} \sum_{i=1}^{L} Y_i $ where $Y_i$ is
	 a $2dR$-wise independent standard Gaussian vector of length $n$ and $L=\Omega\br{\frac{k^{2}d^{3}R^{15}}{\epsilon^2}}$. Then, we have
	\[
	\abs{
		\expect{Y}{
			F(Y)	
		}
		-
		\expect{y\sim\NN\br{0,1}^n}{
			F(y)
		}
	} = O(\epsilon kd^3) + kdL\cdot2^{-\Omega(R)} \enspace .
	\]
	\end{theorem}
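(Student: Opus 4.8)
The plan is to follow the hybrid/smoothing strategy of \cite{KM22} but lifted from a single PTF to the function $F$ of $k$ PTFs. The first step is reduction: WLOG normalize each $p_i$ so that $\norm{p_i}_2 = 1$, and replace the hard threshold $\sign{p_i(x)}$ by a smooth surrogate. Concretely, fix a small mollification parameter $\tau$ and a smooth $\tau$-approximation $\widetilde{\sign}$ to $\sign{\cdot}$ (built from the bump function $\Psi$/$\rho$ of \fct{deri1}–\fct{deri2}), so that $\widetilde{\sign}$ agrees with $\sign{\cdot}$ outside $[-\tau,\tau]$ and has controlled derivatives. Define $\widetilde{F}(x) = f(\widetilde{\sign}(p_1(x)),\dots,\widetilde{\sign}(p_k(x)))$ after also writing $f$ via its multilinear (Fourier) expansion so that $\widetilde F$ is a genuine smooth function of $x$. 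By \lem{anti} (Carbery–Wright), for $y\sim\NN(0,1)^n$ each $\abs{p_i(y)}\le\tau$ with probability $O(d\tau^{1/d})$, and the same holds for $Y$ up to the error we are trying to bound — so by a union bound over $i\in[k]$ the replacement of $F$ by $\widetilde F$ costs $O(kd\tau^{1/d})$ on each side, provided we already know $Y$ roughly fools the anticoncentration events (this is mildly circular and is handled, as in \cite{KM22}, by a bootstrapping/iterating argument or by noting the anticoncentration event is itself captured by a low-degree test). Choosing $\tau = \mathrm{poly}(\epsilon/(kd))$ makes this $O(\epsilon k d^3)$.

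The second and main step is to show $\widetilde F$ is fooled by $2dR$-wise independence via the noise-interpolation argument. Write $Y = \frac{1}{\sqrt L}\sum_i Y_i$ and interpolate one block $Y_i$ at a time, comparing $\frac{1}{\sqrt L}\sum_{j<i}Y_j + \frac{1}{\sqrt L}Y_i + \sqrt{1-\tfrac iL}\,g$ with the same expression where $Y_i$ is replaced by a fresh true Gaussian block; here $g\sim\NN(0,1)^n$ is an independent "completion" Gaussian, so each hybrid has the right total variance. For a fixed value $x$ of the already-placed part, set $\lambda = 1 - i/L$ and $\phi(x) = \E_g[\widetilde F(x+\sqrt\lambda g)] = U_{\sqrt{1-\lambda}}\widetilde F(x/\sqrt{1-\lambda})$ as in \lem{expanding}. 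Taylor-expand $\phi$ around $x$ to order $2dR$ using \thm{taylor}; the degree-$\le 2dR$ part has identical expectation under $Y_i/\sqrt L$ and under a true Gaussian block (that is exactly $2dR$-wise independence), so only the degree-$(2dR+1)$ remainder survives. Bounding that remainder requires: (a) control of $\norm{\grad{t}{\widetilde F}}$ in terms of $\norm{\grad{t}{p_i}}$ via the chain rule / Faà di Bruno, where the $\widetilde{\sign}$-derivatives contribute the $t^{(3+o(1))t}$ factors from \fct{deri}; (b) control of $\norm{\grad{j}{\phi(x)}}$ via the concentration estimate \lem{concentrate}, which converts $j$th derivatives of the mollified function into $(\lambda d R)^{(j-t)/2}$-weighted sums; and (c) the "good event" \lem{good_event} ensuring the derivative norms of the $p_i$ grow geometrically with ratio $O(1/\epsilon)$, which holds for all $k$ polynomials simultaneously with probability $1-O(\epsilon k d^3)$. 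Since $\lambda \le 1$ and each step contributes a factor roughly $(\text{stuff})/L^{(2dR+1)/2}$, summing over $L$ hybrids and using $L = \Omega(k^2 d^3 R^{15}/\epsilon^2)$ collapses the telescoped error to $O(\epsilon)$ per PTF, i.e. $O(\epsilon k d^3)$ overall, while the contribution from tail events where \lem{good_event} fails or the smooth surrogate is evaluated far outside its good range contributes the $kdL\cdot 2^{-\Omega(R)}$ term (the bump function's derivatives $t^{(3+o(1))t}$ beat the $2^{-\Omega(R)}$ Gaussian/Hermite tail precisely when $R = \Omega(\log(kd/\epsilon))$).

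The third step assembles the bound: $\abs{\E[F(Y)] - \E[F(y)]} \le \abs{\E[F(Y)]-\E[\widetilde F(Y)]} + \abs{\E[\widetilde F(Y)]-\E[\widetilde F(y)]} + \abs{\E[\widetilde F(y)]-\E[F(y)]}$, the outer two terms being $O(\epsilon k d^3)$ by Carbery–Wright and the middle term being $O(\epsilon k d^3) + kdL\cdot 2^{-\Omega(R)}$ by the hybrid argument. The main obstacle I anticipate is step (a)/(b): getting a clean chain-rule bound on the high-order derivatives of $\widetilde F = f(\widetilde{\sign}(p_1),\dots,\widetilde{\sign}(p_k))$ that is polynomial in $k$ rather than exponential. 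The key observation that saves the polynomial dependence is that $\widetilde F$, as a function of the $k$ scalar inputs $\widetilde{\sign}(p_i)$, has all its multilinear Fourier mass of degree $\le k$ but the relevant perturbation analysis only ever touches one coordinate $p_i$ at a time within a single hybrid step (the block $Y_i$ is shared across all $p_\ell$, but the anticoncentration/smoothing is applied coordinatewise), so the cross-terms in Faà di Bruno are tamed by a union bound over the $k$ PTFs rather than a product — and each individual PTF's analysis is exactly the one-PTF bound from \cite{KM22}, which is where the $R^{15}$ and the $2dR$-wise (rather than $dR$-wise) independence come from.
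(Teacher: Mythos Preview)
Your proposal takes a route that differs from the paper's in a crucial way, and the difference creates a genuine quantitative gap.

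You propose to smooth $\sign{\cdot}$ directly at some scale $\tau$ to get $\widetilde{F}(x) = f(\widetilde{\sign}(p_1(x)),\ldots,\widetilde{\sign}(p_k(x)))$, bound the replacement error via Carbery--Wright (\lem{anti}), and then Taylor-expand $\widetilde{F}$ (or its noise-smoothed version $\phi$) in the hybrid block. This is \emph{not} what \cite{KM22} does, and it hits a wall: Carbery--Wright gives $\Pr[|p_i(y)| \le \tau] = O(d\tau^{1/d})$, so to make the replacement error $O(\epsilon k d^3)$ you need $\tau \le (\epsilon d^2)^d$, not $\tau = \mathrm{poly}(\epsilon/(kd))$ as you wrote. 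The $t$-th derivative of $\widetilde{\sign}$ is then of order $\tau^{-t} \sim (\epsilon d^2)^{-dt}$, and the order-$R$ Taylor remainder of $\widetilde F$ carries a factor $\tau^{-R}$ that must be beaten by $\lambda^{R/2} = L^{-R/2}$; this forces $L \gtrsim \tau^{-2} \sim (1/\epsilon)^{2d}$, exponential in $d$, contradicting $L = \mathrm{poly}(k,d,1/\epsilon)$. There is a secondary issue as well: \lem{concentrate} and \lem{expanding} are stated for polynomials, and your $\widetilde F$ is not one, so ``control of $\norm{\grad{j}{\phi(x)}}$ via \lem{concentrate}'' does not apply as written. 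Finally, your claim that ``the perturbation analysis only ever touches one coordinate $p_i$ at a time'' is not right: a single block $Y_i$ perturbs all $k$ polynomials simultaneously, so the chain-rule cross terms are genuinely there.

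The paper (following \cite{KM22}) never smooths $\sign{\cdot}$ and never invokes Carbery--Wright in this proof. Instead it multiplies $F$ by the mollifier $G$ of \eq{molli}, a smooth indicator of the event that the gradient norms $\norm{\grad{t}{p_i(x)}}$ grow at most geometrically in $t$. The $O(\epsilon k d^3)$ term comes from \lem{good_event} (this event holds w.h.p.\ under the true Gaussian), not from anticoncentration, so no $\epsilon^d$ scale ever enters. The hybrid step then splits on the fixed prefix $x$: if gradient control fails for some $(i,t)$, one shows $G(x+\sqrt\lambda\,\cdot)=0$ with probability $1-2^{-\Omega(R)}$ under both $y$ and $Y$, so both expectations are tiny; if it holds, each $p_i(x+\sqrt\lambda\,\cdot)$ keeps the sign of $\phi_i(x)$ with probability $1-2^{-R}$ (via \thm{hc}), so $F$ is essentially constant and it suffices to fool $G$. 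That last step (\lem{molli}) Taylor-expands $G$ in the \emph{polynomial} variables $s_i^t = \norm{\grad{t}{p_i(x+\sqrt\lambda y)}}^2$, which is where \lem{concentrate} and \fct{deri} legitimately apply, and the $kd$-fold product is handled by a single multi-index expansion over all $(i,t)$ at once---this is what keeps the $k$-dependence polynomial, not a union bound over coordinates.
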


The key idea in the proof of \thm{cont} is 
to analyze the derivatives of the disturbed function
$\phi_i(x) = \E_{y\sim\NN\br{0,1}^n}[p_i(x+\sqrt{\lambda}y)]$.
We will see that
once the derivatives of $\phi_i$ are well-controlled by its preceding order derivative at $x$,
$\grad{t}{p_i(x+\sqrt{\lambda}y)}$ is concentrated around
$\grad{t}{\phi_i(x)}$
for a random $y$,
and $p_i(x+\sqrt{\lambda}y)$ and $\phi_i(x)$
share the same sign with high probability.
Starting from this point, we use the mollifier introduced in~\cite{KM22}
\begin{align}
	G(x)\coloneqq \prod_{i=1}^{k} \prod_{t=0}^{d-1} \rho\!\br{\log\!\br{\frac{\norm{\grad{t}{p_i(x)}}^2}{16\epsilon^2 \norm{\grad{t+1}{p_i(x)}}^2 }}} \label{eq:molli}
\end{align}
to judge whether the derivatives are all well-controlled for all $k$ polynomials.
$G(x)=0$ as long as a certain order of derivative that is not controlled by its preceding order derivative.
Our proof consists of following steps:
\begin{itemize}
	\item \textbf{Approximation using the mollifier $G$:}
	We first establish that
	\[
	\abs{
		\expect{Y}{
			F(Y)	
		}
		-
		\expect{y}{
			F(y)
		}
	} \approx 
	\abs{
		\expect{Y}{
			F(Y)G(Y)	
		}
		-
		\expect{y}{
			F(y)G(y)
		}
	}\enspace.
	\]
	This approximation enables us to focus primarily on the analysis of $F(y)G(y)$ in the subsequent steps.
	\item \textbf{Hybrid argument:} Let $\lambda = L^{-1}$, $y = \sqrt{\lambda}\sum_{i=1}^L y_i$ where $y_i\sim \NN(0,1)^n$ and $Z^i =\sqrt{\lambda}(y_1 + \cdots +y_{i-1} +Y_{i+1} +\cdots Y_L)$. We will show 
	\begin{align}
		\expect{ }{
			F(Z^i+\sqrt{\lambda}Y_i)G(Z^i+\sqrt{\lambda}Y_i)	
		}
		\approx
		\expect{ }{
			F(Z^i+\sqrt{\lambda}y_i)G(Z^i+\sqrt{\lambda}y_i)
		} \enspace.\label{eq:singlestep}
	\end{align}
	Therefore by the triangle inequality, we have
	\begin{align*}
		\expect{Y}{
			F(Y)G(Y)	
		} &= \expect{ }{
			F(Z^1+\sqrt{\lambda}Y_1)G(Z^1+\sqrt{\lambda}Y_1)	
		}\\
		&\approx
		\expect{ }{
			F(Z^1+\sqrt{\lambda}y_1)G(Z^1+\sqrt{\lambda}y_1)
		}
		= \expect{ }{
			F(Z^2+\sqrt{\lambda}Y_2)G(Z^2+\sqrt{\lambda}Y_2)	
		}\\
		&\approx \cdots \approx
		\expect{ }{
			F(Z^L+\sqrt{\lambda}y_L)G(Z^L+\sqrt{\lambda}y_L)
		}= \expect{ }{
			F(y)G(y)
		} \enspace.
	\end{align*}
	To prove \eq{singlestep}, we show for any fixed $x$,
	\[
		\expect{}{
			F(x+\sqrt{\lambda}Y_i)G(x+\sqrt{\lambda}Y_i)
		}
		\approx
		\expect{}{
			F(x+\sqrt{\lambda}y_i)G(x+\sqrt{\lambda}y_i)
		} \enspace .
	\]
	This is done by a case analysis:
	\begin{itemize}
	\item The derivatives of all $k$ polynomials $\phi_j(x)$ are well-controlled at point $x$. In this case, all $p_j(x+\sqrt{\lambda} Y_i)$ and $p_j(x+\sqrt{\lambda} y_i)$ share the same sign with high probability. Thus, it is highly likely that $F(x+\sqrt{\lambda}Y_i)$ and $F(x+\sqrt{\lambda}y_i)$ are nearly the same constant. It suffices to show $Y_i$ fools the mollifier function $G(x+\sqrt{\lambda}y_i)$.
	\item At least one derivative is not controlled. In this case, we will show that $G(x+\sqrt{\lambda}Y_i)$ and $G(x+\sqrt{\lambda}y_i)$ are $0$ with high probability. This implies that $F(x+\sqrt{\lambda}Y_i)G(x+\sqrt{\lambda}Y_i) = F(x+\sqrt{\lambda}y_i)G(x+\sqrt{\lambda}y_i) = 0$ with overwhelming probability.
	\end{itemize}
	\end{itemize}
	
	In the subsequent sections, \sect{molli} first demonstrates that $Y_i$ is able to fool the mollifier function $G$ when $x$ is a well-behaved point.
	\sect{single} shows the closeness of a single step in the hybrid argument.
	Lastly, we prove \thm{cont} using approximation and the hybrid argument in \sect{main}.

\subsection{Fooling the Mollifier $G$} \label{sec:molli}
We begin with proving that a $2dR$-wise independent standard Gaussian vector $Y$
fools the mollifier function $G(x+\sqrt{\lambda}y)$.
To achieve this, we utilize the Taylor expansion to expand
the mollifier function
$G(x+\sqrt{\lambda}y)$ up to a specified order.
As a result, $G(x+\sqrt{\lambda}y)$ is decomposed into two parts:
a degree-$d(R-1)$ polynomial $l(y)$ and a remainder term $\Delta(y)$.
We mainly show that $\E[\Delta]$ is negligible under both
pseudorandom distribution and true Gaussian distribution.
This leads us to the conclusion that
$\E[G(x+\sqrt{\lambda}y)]\approx \E[l(y)]$ and
$\E[G(x+\sqrt{\lambda}Y)]\approx \E[l(Y)]$.
Furthermore, since $l(y)$ has degree at most $dR$,
it follows that $\E[l(y)] = \E[l(Y)]$.
Thus, we conclude that $\E[G(x+\sqrt{\lambda}y)]\approx\E[G(x+\sqrt{\lambda}Y)]$.

\begin{lemma}\label{lem:molli}
		Fix a small constant $0<\epsilon<1$ and let $R\in\N$ be an integer.
		Let $p_1,\dots,p_k:\R^n \to \R$ be arbitrary polynomials of degree $d$.
	Define $\phi_i(x) \coloneq U_{\sqrt{1-\lambda}} p_i\!\br{\frac{x}{\sqrt{1-\lambda}}}=\E_{y\sim\NN\br{0,1}^n}[p_i(x+\sqrt{\lambda}y)]$ for all $p_i$.
	Suppose that a fix point $x\in\R^n$ satisfies $ \norm{\grad{t+1}{\phi_i(x)}} \leq \frac{1}{\epsilon}\norm{\grad{t}{\phi_i(x)}}$ for any $1\leq i\leq k$ and $0\leq t\leq d-1$.
	Let $Y$ be a $2dR$-wise independent standard Gaussian vector of length $n$.
	For $\lambda = O(k^{-2}d^{-3}R^{-15}\epsilon^2)$, we have 
	\[
	\abs{
		\expect{Y}{
			G(x+\sqrt{\lambda}Y)
		}
		-
		\expect{y\sim\NN\br{0,1}^n}{
			G(x+\sqrt{\lambda}y)
		}
	} = kd\cdot 2^{-\Omega(R)} \enspace ,
	\]
	where $G$ is defined in \eq{molli}.
	\end{lemma}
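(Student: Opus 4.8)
The plan is to expand $G(x+\sqrt\lambda y)$ by the multidimensional Taylor's theorem (\thm{taylor}) around $y=0$ to order $dR$, writing $G(x+\sqrt\lambda y) = l(y) + \Delta(y)$ where $l$ collects the terms of degree $< dR$ in $y$ — hence a polynomial in $y$ of degree at most $d(R-1) \le dR$ — and $\Delta(y)$ is the Lagrange remainder of degree $dR$ with derivatives evaluated at an intermediate point $z = x + c\sqrt\lambda y$. Since $l$ has degree at most $2dR$ and $Y$ is $2dR$-wise independent, $\expect{Y}{l(Y)} = \expect{y}{l(y)}$ exactly. So the whole task reduces to showing
\[
\abs{\expect{Y}{\Delta(Y)}} + \abs{\expect{y}{\Delta(y)}} = kd\cdot 2^{-\Omega(R)}.
\]
For the true-Gaussian bound I would bound $|\Delta(y)|$ pointwise by $\lambda^{dR/2}$ times a sum over multi-indices $\alpha$ with $|\alpha| = dR$ of $\frac{1}{\alpha!}|\partial^\alpha G(z)|\,|y^\alpha|$, then take expectation; the $2dR$-wise independence suffices to control $\E|y^\alpha|$ by standard Gaussian moment bounds, and the key is that the $\lambda^{dR/2}$ prefactor beats everything else once $\lambda$ is taken as small as in the hypothesis. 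The pseudorandom case needs an extra argument because $\Delta(Y)$ is not a low-degree polynomial in $Y$; here I would either (i) use a higher-order Taylor expansion so that the genuinely-remainder part is of degree exceeding $2dR$ while still being pointwise tiny, absorbing the intermediate-degree monomials into $l$, or (ii) bound $\expect{Y}{|\Delta(Y)|}$ directly using the fact that $2dR$-wise independent Gaussians have all moments up to order $2dR$ matching, which controls $\E[Y^\alpha]$ for $|\alpha|\le 2dR$, combined with a crude tail/union bound for the contribution where the derivative factors are large.

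The crux is the pointwise control of $\partial^\alpha G(z)$ for $|\alpha| = dR$ (equivalently, of $\norm{\grad{dR}{G(z)}}$). By the product structure \eq{molli}, $G = \prod_{i,t} r_{i,t}$ where $r_{i,t}(x) = \rho(\log\norm{\grad{t}{p_i(x)}}^2 - \log(16\epsilon^2\norm{\grad{t+1}{p_i(x)}}^2))$; so a high-order derivative of $G$ is, by the Leibniz rule, a sum of products of derivatives of the $r_{i,t}$. To bound a derivative of a single $r_{i,t}$ I would use the chain rule through $\rho$ (whose derivatives are bounded by \fct{deri2}, i.e. $t^{(3+o(1))t}$) and through $\log\norm{\grad{t}{p_i}}^2$. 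The essential structural input is the regularity hypothesis $\norm{\grad{t+1}{\phi_i(x)}}\le \tfrac1\epsilon\norm{\grad{t}{\phi_i(x)}}$ at the base point $x$: together with the concentration estimate \lem{concentrate} (which shows $\grad{t}{p_i(x+\sqrt\lambda y)}$ and hence, along the segment to $z$, the derivatives at $z$ are within a $\lambda dR$-type factor of $\grad{t}{\phi_i(x)}$) and \lem{expanding}, this forces $\norm{\grad{t}{p_i(z)}}$ to be comparable to $\norm{\grad{t}{\phi_i(x)}}$ for all relevant $z$, with overwhelming probability over $y$. Consequently $\log\norm{\grad{t}{p_i(z)}}^2$ has controlled derivatives (roughly, $\partial^\alpha$ of $\log\norm{\grad{t}{p_i}}^2$ is $\mathrm{poly}(dR)^{|\alpha|}/\norm{\grad{t}{p_i}}^{|\alpha|}$-type, absorbing the norm into the $\|y^\alpha\|$ exponents), and the $16\epsilon^2$ slack inside $\rho$ guarantees we only care about the region where $\rho$ and its derivatives are actually bounded and where the relevant logs are $\Theta(1)$-bounded, so no blow-up from the denominators in \fct{deri}.

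**The main obstacle** I anticipate is the bookkeeping that makes the $\lambda^{dR/2}$ prefactor win: one must verify that the product of (a) the $\rho$-derivative bounds $(dR)^{O(dR)}$, (b) the Leibniz-rule multinomial count over the $kd$ factors and the $dR$ derivative slots — contributing something like $(kd)^{O(dR)}$ and $(dR)^{O(dR)}$, (c) the Gaussian moments $\E|y^\alpha|^{\text{-ish}} \le (dR)^{O(dR)}$, and (d) the concentration "bad event" probability from \lem{concentrate}/\lem{good_event} — is all dominated by $\lambda^{dR/2} = O(k^{-2}d^{-3}R^{-15}\epsilon^2)^{dR/2}$, leaving $2^{-\Omega(R)}$ per $(i,t)$ pair and hence $kd\cdot 2^{-\Omega(R)}$ after a union bound over the $kd$ factors. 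The polynomial in $\lambda$'s exponent — the $R^{15}$ in particular — is presumably tuned exactly so that $\lambda^{dR/2}$ absorbs all the $R^{O(dR)}$ and $(kd)^{O(dR)}$ losses with room to spare; I would set up the inequality symbolically first, identify the worst-case exponents of $k$, $d$, $R$, $1/\epsilon$, and only then fix the constants, rather than carrying explicit constants through every line. A secondary subtlety is handling the intermediate point $z$ in the remainder: since $z$ depends on $y$, the concentration statements must be applied uniformly along the segment $[x, x+\sqrt\lambda y]$, which is fine because $z = x + c\sqrt\lambda y$ is itself of the form $x + \sqrt{\lambda'}y$ with $\lambda' = c^2\lambda \le \lambda$, so \lem{concentrate} applies verbatim with $\lambda$ replaced by $\lambda'$.
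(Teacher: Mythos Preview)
Your overall strategy—Taylor-expand, match the low-degree part exactly via $2dR$-wise independence, and control the remainder—is the same as the paper's. But the paper performs the expansion in \emph{different variables}: it sets $s_i^t=\norm{\grad{t}{p_i(x+\sqrt\lambda y)}}^2$, $r_i^t=\norm{\grad{t+1}{p_i(x+\sqrt\lambda y)}}^2$, views $G(x+\sqrt\lambda y)=g(s,r)$, and Taylor-expands $g$ in the $2kd$ variables $(s,r)$ about the deterministic point $a_i^t=\norm{\grad{t}{\phi_i(x)}}^2$, $b_i^t=\norm{\grad{t+1}{\phi_i(x)}}^2$, to order $R$ (not $dR$). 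This buys two things. First, the derivatives $\partial_s^\alpha\partial_r^\beta g$ are bounded directly by \fct{deri} (each factor is $\rho(\log u-\log v+c)$), so no multilayer Fa\`a~di~Bruno through $\norm{\grad{t}{p_i}}^2$ is needed. Second, and more importantly, the Lagrange intermediate point $(s^*,r^*)$ lies on the segment joining $(s,r)$ and $(a,b)$ \emph{in $(s,r)$-space}; on the high-probability event $\EE=\{|\,\norm{\grad{t}p_i}-\norm{\grad{t}\phi_i}\,|\le\delta\norm{\grad{t}\phi_i}\ \forall i,t\}$ one then has deterministically $(1-\delta)^2a_i^t\le s_i^{*t}\le(1+\delta)^2a_i^t$, which feeds straight into the $|u|^{-n}|v|^{-m}$ denominators of \fct{deri}. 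Off $\EE$ the paper writes $\Delta=g(s,r)-l(s,r)$, bounds $|g|\le1$, and controls $\E[l(s,r)^2]$ separately by H\"older plus \lem{concentrate}, then Cauchy--Schwarz against $\Pr[\bar\EE]$.

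Your proposal to expand in $y$ instead has a genuine gap at the intermediate-point step. You argue that $z=x+c\sqrt\lambda y$ is ``of the form $x+\sqrt{\lambda'}y$ with $\lambda'=c^2\lambda$, so \lem{concentrate} applies verbatim.'' But $c=c(y)$ depends on $y$, so $\lambda'$ is random and \lem{concentrate} (a moment bound over $y$ for \emph{fixed} $\lambda'$) does not apply as stated. Even for a fixed $c\in(0,1)$, \lem{concentrate} would compare $\grad{t}{p_i(x+\sqrt{\lambda'}y)}$ to $\grad{t}{\phi_i^{(\lambda')}(x)}$ where $\phi_i^{(\lambda')}=U_{\sqrt{1-\lambda'}}p_i(\cdot/\sqrt{1-\lambda'})$; the hypothesis of the lemma, however, is the regularity condition on $\phi_i=\phi_i^{(\lambda)}$, not on $\phi_i^{(\lambda')}$, so you would additionally have to transfer the gradient-ratio control from $\phi_i^{(\lambda)}$ to $\phi_i^{(\lambda')}$ for all $\lambda'\le\lambda$. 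This can probably be salvaged (e.g.\ a net over $c$ plus polynomial interpolation, or showing $\phi_i^{(\lambda')}\approx\phi_i^{(\lambda)}$ for small $\lambda$), but it is not the one-line step you describe, and it is exactly what the paper's choice of expansion variables avoids.
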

	
\begin{proof}
	Let $\sigma(z) \coloneq \rho(z - \log{16\epsilon^2})$ and by the definition of function $G\br{\cdot}$ defined in \eqref{eq:molli},
	\[G(z) = \prod_{i=1}^{k} \prod_{t=0}^{d-1} \sigma\!\br{ \log \norm{\grad{t}{p_i(z)}}^2 - \log \norm{\grad{t+1}{p_i(z)}}^2 }.\]
	Define variables $\st{s_i^t}_{1\leq i\leq k, 0\leq t\leq d-1}$ and $\st{r_i^t}_{1\leq i\leq k, 0\leq t\leq d-1}$ by letting $s_i^t = \norm{\grad{t}{p_i(x+\sqrt{\lambda}y)}}^2$ and $r_i^t = \norm{\grad{t+1}{p_i(x+\sqrt{\lambda}y)}}^2$ as functions of $y$.
	Apparently, we have
	\[
		G(x+\sqrt{\lambda}y) = g(s,r) \coloneq 
		\prod_{i=1}^{k} \prod_{t=0}^{d-1} \sigma\!\br{ \log s_i^t - \log r_i^t } \enspace .
	\]
	Therefore, it is equivalent to prove
	\[
		\abs{
		\expect{y\sim\NN\br{0,1}^n}{
			g(s,r)
		}
		-
		\expect{Y}{
			g(s,r)
		}
	} = kd\cdot 2^{-\Omega(R)} \enspace .
	\]
	To this end, we expand $g(s,r)$ into $R$-th order using the Taylor expansion at some point $(a,b)$:
	$
		g(s,r) = l(s,r) + \Delta,
	$
	where $l(s,r)$ is a polynomial of $y$ of degree at most $dR$ and $\Delta$ is the remainder.
	Since $Y$ is $2dR$-wise independent, we know
	$
		\expect{y\sim\NN\br{0,1}^n}{
			l(s,r)
		}
		=
		\expect{Y}{
			l(s,r)
		}
	$.
	Therefore, it suffices to show $\expect{y\sim\NN\br{0,1}^n}{\abs{\Delta}}$ and $\expect{Y}{\abs{\Delta}}$ are bounded by $kd\cdot 2^{-\Omega(R)}$.
	
	More specifically, we choose to expand $g(s,r)$ at points $a_i^t = \norm{\grad{t}{\phi_i(x)}}^2$ and $b_i^t = \norm{\grad{t+1}{\phi_i(x)}}^2$ via \thm{taylor}:
	\[
		g(s,r) = l(s,r) + \Delta \enspace,
	\]
	where
	\[
		l(s,r) = \sum_{\substack{ \br{\alpha^{t}_i}_{i\in[k], 0\leq t\leq d-1} \in\mathbb{N}^{kd}\\
		\br{\beta^{t}_i}_{i\in[k], 0\leq t\leq d-1}\in\mathbb{N}^{kd}
		\\ \abs{\alpha} +\abs{\beta} <R }}	
		\frac{\partial^\alpha_s\partial^\beta_r g(a,b)}{\alpha!\beta!}(s-a)^\alpha(r-b)^\beta
	\]
	and
	\[
		\Delta = \sum_{\substack{ \br{\alpha^{t}_i}_{i\in[k], 0\leq t\leq d-1} \in\mathbb{N}^{kd}\\
		\br{\beta^{t}_i}_{i\in[k], 0\leq t\leq d-1}\in\mathbb{N}^{kd}
		\\ \abs{\alpha} +\abs{\beta} =R }}	
		\frac{ \partial^\alpha_s\partial^\beta_r g(s^*,r^*) }{\alpha!\beta!}(s-a)^\alpha(r-b)^\beta \enspace .
	\]
	for some $(s^*,r^*)$ on the line segment joining $(s,r)$ and $(a,b)$.
	It is not hard to see that
	$l(s,r)$ is a polynomial of $y$ of degree at most $d(R-1)$.
	For the remainder term $\Delta$, we will prove the following bound:
\[
	\expect{y\sim\NN\br{0,1}^n}{\abs{\Delta}} \leq kd\cdot 2^{-\Omega(R)} \enspace.
\]

	We now give bounds on $\expect{y\sim\NN\br{0,1}^n}{\abs{\Delta}}$.
	The same argument applies to $Y$ as well.
	Fix a small constant $0<\delta < \frac{1}{kdR^7} $. Let $\EE_i^t$ be the event that
	\[
		\abs{ \norm{\grad{t}{p_i(x+\sqrt{\lambda}y)}} -  \norm{\grad{t}{\phi_i(x)}}} \leq \delta \norm{\grad{t}{\phi_i(x)}} \enspace.
	\]
	Now let $\EE = \wedge_{1\leq i\leq k, 0\leq t\leq d} \EE_{i}^t$.
	Note that
	\[ \Delta = \Delta \cdot \id_\EE + \Delta \cdot \id_{\widebar{\EE}} = \Delta \cdot \id_\EE + g(s,r)\cdot \id_{\widebar{\EE}} - l(s,r)\cdot \id_{\widebar{\EE}} \enspace . \]
	Here, $\id_\A = 1$ when event $\A$ occurs and $\id_\A = 0$ otherwise.
	Therefore, by the triangle inequality, we have
	\begin{align*}
		\expect{y}{\abs{\Delta}}&\leq
		\expect{y}{\Delta \cdot \id_\EE} + 
		\expect{y}{g(s,r)\cdot \id_{\widebar{\EE}}} + 
		\expect{y}{\abs{l(s,r)}\cdot \id_{\widebar{\EE}}} \nonumber\\ 
		&\leq
		\expect{y}{\Delta \cdot \id_\EE} + 
		\expect{y}{\id_{\widebar{\EE}}} + 
		\expect{y}{\abs{l(s,r)}\cdot \id_{\widebar{\EE}}} \tag*{\llap{($g(s,r)\in [0,1]$)}}\nonumber\\
		&\leq
		\underbrace{\expect{y}{\Delta \cdot \id_\EE}}_{(\text{Term }1)} +
		\underbrace{\expect{y}{\id_{\widebar{\EE}}}}_{(\text{Term }2)} +
		\underbrace{\sqrt{\expect{y}{l^2(s,r)}}}_{(\text{Term }3)}\cdot
		\sqrt{ \expect{y}{\id_{\widebar{\EE}}}} \enspace . \tag*{\llap{(Cauchy–Schwarz)}}
	\end{align*}
	We are next to bound Term $1\sim 3$.
	
	\paragraph{Bounding Term 1.}
	If event $\EE$ occurs,
	we have
	\begin{align} \label{eq:bound_E}
		\abs{\Delta} &\leq \sum_{\substack{ \br{\alpha^{t}_i}\in\mathbb{N}^{kd}, \br{\beta^{t}_i}\in\mathbb{N}^{kd}
		\\ \abs{\alpha} +\abs{\beta} =R }}
		\frac{ \abs{\partial^\alpha_s\partial^\beta_r g(s^*,r^*)} }{\alpha!\beta!}
		\prod_{i=1}^{k} \prod_{t=0}^{d-1} \abs{s_i^t-a_i^t}^{\alpha_i^t} \abs{r_i^t-b_i^t}^{\beta_i^t} \nonumber\\
		&\leq \sum_{\substack{ \br{\alpha^{t}_i}\in\mathbb{N}^{kd}, \br{\beta^{t}_i}\in\mathbb{N}^{kd}
		\\ \abs{\alpha} +\abs{\beta} =R }} { R^{6R} }\cdot 
		\prod_{i=1}^{k} \prod_{t=0}^{d-1} \frac{\abs{s_i^t-a_i^t}^{\alpha_i^t}}{\abs{s^{*t}_i}^{\alpha_i^t}} \frac{\abs{r_i^t-b_i^t}^{\beta_i^t}}{\abs{r^{*t}_i}^{\beta_i^t}} \nonumber\\
		&\leq \sum_{\substack{ \br{\alpha^{t}_i}\in\mathbb{N}^{kd}, \br{\beta^{t}_i}\in\mathbb{N}^{kd}
		\\ \abs{\alpha} +\abs{\beta} =R }} { R^{6R} }\cdot \br{\frac{2\delta + \delta^2}{(1-\delta)^2}}^{R}
		= \binom{R+2kd-1}{R}\cdot { R^{6R} }\cdot \br{4\delta}^{R} \leq 2^{-R}\enspace,\nonumber 
	\end{align}
where the second inequality is from \fct{deri}, 
and the third one is true by the following facts:
\begin{itemize}
	\item $(1-\delta)^2 a^t_i \leq s_i^t\leq (1+\delta)^2 a^t_i$, $(1-\delta)^2 b^t_i \leq r_i^t \leq (1+\delta)^2 b^t_i$,
	\item ${s^{*t}_i}\geq \min\st{s^{t}_i, a^{t}_i}\geq (1-\delta)^2a^{t}_i$ and ${r^{*t}_i}\geq \min\st{r^{t}_i, b^{t}_i}\geq (1-\delta)^2b^{t}_i$, since $(s^{*t}_i, r^{*t}_i)$ lies between $(s^{t}_i,r^{t}_i)$ and $(a^{t}_i,b^{t}_i)$.
\end{itemize}
	This gives us
	\[
		\expect{y}{\Delta \cdot \id_\EE}\leq 2^{-R}\enspace.
	\]
	
	\paragraph{Bounding Term 2.}
	Since
	\[
		\norm{\grad{t}{p_i(x+\sqrt{\lambda}y)}} -  \norm{\grad{t}{\phi_i(x)}} \leq \norm{\grad{t}{p_i(x+\sqrt{\lambda}y)} - \grad{t}{\phi_i(x)}}
	\]
	and
	\[
		\norm{\grad{t}{\phi_i(x)}} - \norm{\grad{t}{p_i(x+\sqrt{\lambda}y)}} \leq \norm{\grad{t}{p_i(x+\sqrt{\lambda}y)} - \grad{t}{\phi_i(x)}}\enspace,
	\]
	we have
	\begin{equation}\label{eq:eq1}
		\abs{ \norm{\grad{t}{p_i(x+\sqrt{\lambda}y)}} -  \norm{\grad{t}{\phi_i(x)}}} \leq \norm{\grad{t}{p_i(x+\sqrt{\lambda}y)} - \grad{t}{\phi_i(x)}}
	\end{equation}
	This gives us
	\begin{align*}
		\Pr_{y} \Br{\EE_i^t}&\geq \Pr_y\Br{ \norm{\grad{t}{p_i(x+\sqrt{\lambda}y)} - \grad{t}{\phi_i(x)}} \leq \delta \norm{\grad{t}{\phi_i(x)}}} \\
		&\geq 1 - \frac{\expect{y}{\norm{ \grad{t }{p_{i }(x+\sqrt{\lambda}y)} - \grad{t }{\phi_{i  }(x)}}^R}}{ \delta^{R}\norm{ \grad{t }{\phi_{i }(x)}}^R} \\
		&\geq 1 - \frac{\br{ \sum_{j=t+1}^d (\lambda d R)^{j-t} \norm{ \grad{j}{\phi(x)}}^2 }^{\frac{R}{2}} }{\delta^{R}\norm{ \grad{t }{\phi_{i }(x)}}^R} \\
		&\geq 1 - \br{\frac{1}{\delta}}^R \br{\sum_{j=1}^{d-t} \br{\frac{\lambda d R}{\epsilon^2}}^{j}}^{\frac{R}{2}} \geq 1-\br{\sqrt{\frac{2\lambda d R}{\delta^2\epsilon^2}}}^R
	\end{align*}
	where the second inequality is from Markov's inequality, the third one is from \lem{concentrate} and the fourth one holds since $x\in\R^n$ satisfies $ \norm{\grad{t+1}{\phi_i(x)}} \leq \frac{1}{\epsilon}\norm{\grad{t}{\phi_i(x)}}$ for any $1\leq i\leq k$ and $0\leq t\leq d-1$.
	For $\lambda = O(k^{-2}d^{-3}R^{-15}\epsilon^2)$, we have
	$
		\Pr_{y} \Br{\EE_i^t} \geq 1-2^{-R}.
	$
	Consequently, $\Pr_{y} \Br{\EE} \geq 1-kd2^{-R}.$
	Therefore,
	\[
		\expect{y}{\id_{\widebar{\EE}}} \leq kd2^{-R} \enspace.
	\]
	
	\paragraph{Bounding Term 3.}
	We next to upper bound $\expect{y}{l^2(s,r)}$. Note that
	\begin{align*}
		\expect{y}{l^2(s,r)}
		&\leq \sum_{\substack{ \abs{\alpha} +\abs{\beta} <R
					\\ \abs{\alpha'} +\abs{\beta'} <R }}
		\frac{\abs{\partial^\alpha_s\partial^\beta_r g(a,b)}}{\alpha!\beta!}\frac{\abs{\partial^{\alpha'}_s\partial^{\beta'}_r g(a,b)}}{\alpha'!\beta'!}
			\expect{y}{\abs{(s-a)^\alpha(r-b)^\beta(s-a)^{\alpha'}(r-b)^{\beta'}}} \\
			\
		&\leq
		\sum_{\substack{q_1<R\\q_2<R}}
		\sum_{\substack{ \abs{\alpha} +\abs{\beta} =q_1
					\\ \abs{\alpha'} +\abs{\beta'} =q_2 }}
					R^{6(q_1+q_2)}\cdot
		\underbrace{
			\expect{y}{ \prod_{i=1}^{k} \prod_{t=0}^{d-1}
			\frac{\abs{s_i^t-a_i^t}^{\alpha_i^t}}{\abs{a^{t}_i}^{\alpha_i^t}}
			\frac{\abs{r_i^t-b_i^t}^{\beta_i^t}}{\abs{b^{t}_i}^{\beta_i^t}} 
			\frac{\abs{s_i^t-a_i^t}^{{\alpha'}_i^t}}{\abs{a^{t}_i}^{{\alpha'}_i^t}}
			\frac{\abs{r_i^t-b_i^t}^{{\beta'}_i^t}}{\abs{b^{t}_i}^{{\beta'}_i^t}} } }_
			{(\star)}
	\end{align*}
	By generalized H\"older's inequality, for $q_1+q_2\neq 0$
	\begin{align*}
		&(\star) \\
		\leq &\prod_{i,t}
			\frac{\br{\expect{y}{\abs{s_i^t-a_i^t}^{q_1+q_2}}}^{\frac{\alpha_i^t}{q_1+q_2}}}{\abs{a^{t}_i}^{\alpha_i^t}}
			\frac{\br{\expect{y}{\abs{r_i^t-b_i^t}^{q_1+q_2}}}^{\frac{\beta_i^t}{q_1+q_2}}}{\abs{b^{t}_i}^{\beta_i^t}} 
			\frac{\br{\expect{y}{\abs{s_i^t-a_i^t}^{q_1+q_2}}}^{\frac{{\alpha'}_i^t}{q_1+q_2}}}{\abs{a^{t}_i}^{{\alpha'}_i^t}}
			\frac{\br{\expect{y}{\abs{r_i^t-b_i^t}^{q_1+q_2}}}^{\frac{{\beta'}_i^t}{q_1+q_2}}}{\abs{b^{t}_i}^{{\beta'}_i^t}}\enspace  .
	\end{align*}
	Note that for $0<q\leq 2R$,
	\begin{align*}
		&\expect{y}{\br{\frac{\abs{\norm{\grad{t}{p_i(x+\sqrt{\lambda}y)}}^2 - \norm{\grad{t}{\phi_i(x)}}^2}}{\norm{\grad{t}{\phi_i(x)}}^2}}^q}\\
		\leq &\expect{y}{\br{
		\frac{
		2\abs{\norm{\grad{t}{p_i(x+\sqrt{\lambda}y)}} -  \norm{\grad{t}{\phi_i(x)}}}\cdot \norm{\grad{t}{\phi_i(x)}} + \abs{\norm{\grad{t}{p_i(x+\sqrt{\lambda}y)}} -  \norm{\grad{t}{\phi_i(x)}}}^2
		}
		{\norm{\grad{t}{\phi_i(x)}}^2}}^q}\\
		\leq &\expect{y}{\br{
		\frac{
		2\norm{\grad{t}{p_i(x+\sqrt{\lambda}y)}-\grad{t}{\phi_i(x)}}
		}
		{\norm{\grad{t}{\phi_i(x)}}}
		+
		\frac{
		\norm{\grad{t}{p_i(x+\sqrt{\lambda}y)}-\grad{t}{\phi_i(x)}}^2
		}
		{\norm{\grad{t}{\phi_i(x)}}^2}
		}^q}\\
		= &\sum_{j=0}^q 2^j \expect{y}{\frac{
		\norm{\grad{t}{p_i(x+\sqrt{\lambda}y)}-\grad{t}{\phi_i(x)}}^{2q-j}
		}
		{\norm{\grad{t}{\phi_i(x)}}^{2q-j}}}
		\leq \sum_{j=0}^q 2^j \cdot \br{\sqrt{\frac{4\lambda d R}{\epsilon^2}}}^{2q-j}
		= 4^q\cdot\sum_{j=0}^q \br{\sqrt{\frac{\lambda d R}{\epsilon^2}}}^{2q-j}\\
		\leq & 2\cdot 4^q\cdot \br{\sqrt{\frac{\lambda d R}{\epsilon^2}}}^{q}
		\leq \br{\sqrt{\frac{17\lambda d R}{\epsilon^2}}}^{q}
	\end{align*}
	where the first inequality is from $\abs{a^2-b^2} = \abs{a-b}\abs{a+b}\leq  \abs{a-b}(\abs{a} + \abs{b}) \leq \abs{a-b}(2\abs{a}+\abs{a-b})$, the second one is from Eq.\eq{eq1} and the third inequality is from \lem{concentrate}.
	Therefore we have
	\begin{align*}
		(\star) \leq \prod_{i,t} \br{\sqrt{\frac{17\lambda d R}{\epsilon^2}}}^{ \alpha_i^t+\beta_i^t +{\alpha'}_i^t+{\beta'}_i^t } = \br{\sqrt{\frac{17\lambda d R}{\epsilon^2}}}^{ q_1+q_2} \enspace .
	\end{align*}
	Consequently,
	\begin{align*}
		\expect{y}{l^2(s,r)} &\leq \sum_{\substack{q_1<R\\q_2<R}}
		\sum_{\substack{ \abs{\alpha} +\abs{\beta} =q_1
					\\ \abs{\alpha'} +\abs{\beta'} =q_2 }}
					R^{6(q_1+q_2)}\cdot\br{\sqrt{\frac{17\lambda d R}{\epsilon^2}}}^{q_1+q_2} \\
		&\leq \sum_{\substack{q_1<R\\q_2<R}}
				(R+2kd-1)^{q_1+q_2}\cdot R^{6(q_1+q_2)}\cdot\br{\sqrt{\frac{17\lambda d R}{\epsilon^2}}}^{q_1+q_2}		\\
		&= \sum_{q=0}^{2R-2} (q+1)\cdot\br{(R+2kd-1)R^6\sqrt{\frac{17\lambda d R}{\epsilon^2}}}^q\enspace .
	\end{align*}
	For $\lambda = O(k^{-2}d^{-3}R^{-15}\epsilon^2)$ is sufficiently small, we have
	$
		\expect{y}{l^2(s,r)} \leq \sum_{q=0}^{2R-2} (q+1)\cdot 4^{-q} < 2R.
	$
	
	Thus, putting everything together, we have
	\begin{align*}
		\expect{y}{\abs{\Delta}}
		&\leq \expect{y}{\Delta \cdot \id_\EE} + \expect{y}{\id_{\widebar{\EE}}} + \sqrt{\expect{y}{l^2(s,r)}} \sqrt{ \expect{y}{\id_{\widebar{\EE}}}} \\
		&\leq 2^{-R} + kd2^{-R} + \sqrt{2Rkd2^{-R}} \\
		&\leq kd\cdot 2^{-\Omega(R)}
		\enspace .
	\end{align*}
We now regard $l(s,r)$ as a function of $y$, and from the above inequality we have
	\begin{equation*}
		\abs{\expect{y}{G(x+\sqrt{\lambda}y)} - \expect{y}{l(s,r)}}\leq 
		\expect{y}{\abs{\Delta}}\leq kd\cdot 2^{-\Omega(R)}\enspace . 
	\end{equation*}
	Similarly, the same argument applying on $2dR$-wise independent Gaussian vector $Y$ gives us
	\begin{equation*}
		\abs{\expect{Y}{G(x+\sqrt{\lambda}Y)} - \expect{Y}{l(s,r)}}\leq kd\cdot 2^{-\Omega(R)}\enspace . 
	\end{equation*}
	The lemma then follows from the fact that $\expect{y}{l(s,r)} = \expect{Y}{l(s,r)}$, since $l(s,r)$ is a polynomial of $y$ of degree at most $d(R-1)$.
\end{proof}	

\subsection{A Single Step in the Hybrids} \label{sec:single}
In this section,
we analyze one single step in the entire hybrid argument.
We will show that for any $x$, we have that
$\E_Y[F(x+\sqrt{\lambda}Y)G(x+\sqrt{\lambda}Y)]\approx
\E_y[F(x+\sqrt{\lambda}y)G(x+\sqrt{\lambda}y)]$ for $2dR$-wise independent Gaussian $Y$ and true Gaussian $y$.

Let $\phi_i(x) = U_{\sqrt{1-\lambda}} p_i\!\br{\frac{x}{\sqrt{1-\lambda}}} = \E_y[p_i(x+\sqrt{\lambda}y)]$.
The proof proceeds through a case analysis
based on the behavior of $\phi_i$ at the fixed point $x$.
Specifically, we define $x$ as well-behaved if
$\norm{\grad{t+1}{\phi_i(x)}} \leq \frac{1}{\epsilon}\norm{\grad{t}{\phi_i(x)}}$ for all $t\in[d]$ and $i\in[k]$.
In other words, for each function $\phi_i$,
its $t$-th order derivatives are controlled by its
$(t-1)$-th order derivatives.
\begin{itemize}
	\item In the scenario where $x$ is not well-behaved,
	we can identify an $i_0$ and a $t_0$ such that
	with at least probability $1-2^{-R+1}$,
	\[
	\norm{ \grad{t_0+1}{p_{i_0}(x+\sqrt{\lambda}y)}}> \frac{1}{4\epsilon} \norm{ \grad{t_0}{p_{i_0}(x+\sqrt{\lambda}y)}} \enspace .
	\]
	Thus, it is highly probable that the mollifier function
	$G(x+\sqrt{\lambda}y)=0$.
	So, the expectation of $F(x+\sqrt{\lambda}y)G(x+\sqrt{\lambda}y)$ is no more that $2^{-R+1}$.
	The same argument works for $Y$ as well.
	\item For the case that $x$ is well-behaved,
	we will show that for all $p_i$,
	${ p_i(x+\sqrt{\lambda}y) }$ and ${ p_i(x+\sqrt{\lambda}Y) }$ are nearly the same constant. This implies
	$F(x+\sqrt{\lambda}y)$ and $F(x+\sqrt{\lambda}Y)$ are equal in most situations.
	Then it suffices to show $Y$ fools the mollifier,
	as discussed in the previous section.
\end{itemize}

\begin{lemma}\label{lem:hybrid}
	Fix a small constant $0<\epsilon<1$ and let $R\in\N$ be an integer. Let $p_1,\dots,p_k:\R^n \to \R$ be arbitrary polynomials of degree $d$ and $f:\bit{k}\to\st{0,1}$ be an arbitrary Boolean function.
	Define function
	\[
		F(x)\coloneqq f\!\br{\sign{p_1\!(x)},\dots,\sign{p_k\!(x)}}\enspace.
	\]
	Let $Y$ be a $2dR$-wise independent standard Gaussian vector of length $n$.
	For any $x\in\R^n$ and $\lambda = O(k^{-2}d^{-3}R^{-15}\epsilon^2)$
	\[
	\abs{
		\expect{Y}{
			F(x+\sqrt{\lambda}Y)G(x+\sqrt{\lambda}Y)
		}
		-
		\expect{y\sim\NN\br{0,1}^n}{
			F(x+\sqrt{\lambda}y)G(x+\sqrt{\lambda}y)
		}
	} = kd2^{-\Omega(R)} \enspace ,
	\]
	where $G$ is defined in \eq{molli}.
\end{lemma}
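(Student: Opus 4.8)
The plan is to fix an arbitrary $x\in\R^n$ and split on whether $x$ is \textbf{well-behaved}, meaning $\norm{\grad{t+1}{\phi_i(x)}}\le\tfrac1\epsilon\norm{\grad{t}{\phi_i(x)}}$ for all $i\in[k]$ and $0\le t\le d-1$; this dichotomy depends only on the fixed polynomials $p_i$ and on $\lambda$. It is convenient to round $R$ up to the next even integer at the outset (this changes nothing), so that $\abs{p_i(x+\sqrt\lambda Y)-\phi_i(x)}^R$ and $\norm{\grad{t}{p_i(x+\sqrt\lambda Y)}-\grad{t}{\phi_i(x)}}^R$ are all polynomials in $Y$ of degree at most $dR\le 2dR$; consequently their expectations under the $2dR$-wise independent $Y$ equal those under a true Gaussian $y$, and in particular \lem{concentrate} is available verbatim with $Y$ in place of $y$.

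In the \textbf{well-behaved case} I first argue that $F(x+\sqrt\lambda y)$ is constant with overwhelming probability. Well-behavedness telescopes to $\norm{\grad{j}{\phi_i(x)}}\le\epsilon^{-j}\abs{\phi_i(x)}$ for all $j\le d$; feeding this into \lem{concentrate} with $t=0$ and using the smallness of $\lambda$ bounds the $R$-th moment of $\abs{p_i(x+\sqrt\lambda y)-\phi_i(x)}$ by at most $\sqrt{2\lambda dR/\epsilon^2}\,\abs{\phi_i(x)}$, so Markov's inequality gives $\abs{p_i(x+\sqrt\lambda y)-\phi_i(x)}\le\tfrac12\abs{\phi_i(x)}$ except with probability $2^{-R}$, forcing $\sign{p_i(x+\sqrt\lambda y)}=\sign{\phi_i(x)}$; the degenerate subcase $\phi_i\equiv0$ (equivalently $p_i\equiv0$, since $U_{\sqrt{1-\lambda}}$ and the rescaling are bijective) is consistent because then both signs are $1$. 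Union bounding over $i$, outside an event of probability $k2^{-R}$ we have $F(x+\sqrt\lambda y)=c$, where $c\coloneqq f(\sign{\phi_1(x)},\dots,\sign{\phi_k(x)})$ is a fixed element of $\{0,1\}$, and the identical estimate holds for $Y$ with the same $c$. Since $F,G\in[0,1]$, this lets me replace $F$ by $c$ inside each expectation at a cost of $k2^{-R}$, and then \lem{molli}, which applies precisely because $x$ is well-behaved, gives $\abs{\E_Y[G(x+\sqrt\lambda Y)]-\E_y[G(x+\sqrt\lambda y)]}=kd2^{-\Omega(R)}$; the triangle inequality assembles these three bounds into $kd2^{-\Omega(R)}$.

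In the \textbf{not-well-behaved case} I pick a failing pair $(i_0,t_0)$ with $\norm{\grad{t_0+1}{\phi_{i_0}(x)}}>\tfrac1\epsilon\norm{\grad{t_0}{\phi_{i_0}(x)}}$, choosing $t_0$ \textbf{maximal} among all failures; then beyond $t_0$ all successive ratios are at most $1/\epsilon$, so $\norm{\grad{j}{\phi_{i_0}(x)}}\le\epsilon^{-(j-t_0-1)}\,\norm{\grad{t_0+1}{\phi_{i_0}(x)}}$ for every $j\ge t_0+1$. Writing $B\coloneqq\norm{\grad{t_0+1}{\phi_{i_0}(x)}}>0$, this control of the higher derivatives makes \lem{concentrate} usable: applied with $t=t_0+1$ and with $t=t_0$ it yields $R$-th moment bounds of order $\sqrt{2\lambda dR/\epsilon^2}\,B$ and $\sqrt{2\lambda dR}\,B$ respectively, so by Markov and the choice of $\lambda$, with probability at least $1-2^{-R+1}$ both $\norm{\grad{t_0+1}{p_{i_0}(x+\sqrt\lambda y)}-\grad{t_0+1}{\phi_{i_0}(x)}}\le\tfrac12B$ and $\norm{\grad{t_0}{p_{i_0}(x+\sqrt\lambda y)}-\grad{t_0}{\phi_{i_0}(x)}}\le\epsilon B$ hold. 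On that event, using also $\norm{\grad{t_0}{\phi_{i_0}(x)}}<\epsilon B$ from the failure, a two-line triangle-inequality computation gives $\norm{\grad{t_0}{p_{i_0}(x+\sqrt\lambda y)}}<4\epsilon\,\norm{\grad{t_0+1}{p_{i_0}(x+\sqrt\lambda y)}}$, i.e.\ the $(i_0,t_0)$ argument of $\rho$ in \eqref{eq:molli} is negative, so that factor, hence $G(x+\sqrt\lambda y)$, and therefore $F(x+\sqrt\lambda y)G(x+\sqrt\lambda y)$, vanishes. The same holds for $Y$, so both expectations are at most $2^{-R+1}$ and their difference is $2^{-\Omega(R)}\le kd2^{-\Omega(R)}$.

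The one place that needs a genuine idea rather than bookkeeping is the not-well-behaved case: a careless choice of failing pair leaves the tail sum $\sum_{j>t_0+1}(\lambda dR)^{j-t_0}\norm{\grad{j}{\phi_{i_0}(x)}}^2$ in \lem{concentrate} uncontrolled, since higher-order derivatives of $\phi_{i_0}$ can be arbitrarily large relative to $\grad{t_0}{\phi_{i_0}(x)}$; taking $t_0$ to be the \emph{largest} failing index is exactly what tames that sum, because from $t_0+1$ onward every successive ratio is bounded by $1/\epsilon$. The only other thing to watch is the degree bookkeeping that licenses replacing a true Gaussian by a $2dR$-wise independent $Y$ in each step, which is why rounding $R$ to an even integer and noting that each relevant power stays a polynomial of degree $\le 2dR$ is worth recording up front.
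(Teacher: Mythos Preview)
Your proposal is correct and follows essentially the same case split and argument as the paper: the not-well-behaved case is handled identically (choose the maximal failing $t_0$, apply \lem{concentrate} at levels $t_0$ and $t_0+1$, force $G=0$), and the well-behaved case likewise reduces $F$ to the constant $c$ and invokes \lem{molli}. The only cosmetic difference is that in the well-behaved case you call \lem{concentrate} directly with $t=0$, whereas the paper re-derives the same $R$-th moment bound via the Hermite expansion of $p_i(x+\sqrt\lambda y)$ and hypercontractivity; your explicit remark about taking $R$ even so that the relevant $R$-th powers are genuine polynomials of degree $\le 2dR$ in $Y$ is a point the paper leaves implicit.
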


\begin{proof}
	Let $\phi_i(x) = U_{\sqrt{1-\lambda}} p_i\!\br{\frac{x}{\sqrt{1-\lambda}}}$.
	Define $x$ is good if for any $1\leq i\leq k$ and $0\leq t\leq d-1$, $ \norm{\grad{t+1}{\phi_i(x)}} \leq \frac{1}{\epsilon}\norm{\grad{t}{\phi_i(x)}} $.
	We prove this lemma by considering $x$ is good or not.
	
	We first consider that $x$ is not good. 
	In this case, we will show $G(x+\sqrt{\lambda}y) = 0$ holds
	with high probability.
	Consequently, $F(x+\sqrt{\lambda}y)G(x+\sqrt{\lambda}y)$ is zero with high probability.
	To this end, it suffices to find an $i_0$ and a $t_0$ such that
	\[
	\norm{ \grad{t_0+1}{p_{i_0}(x+\sqrt{\lambda}y)}}> \frac{1}{4\epsilon} \norm{ \grad{t_0}{p_{i_0}(x+\sqrt{\lambda}y)}} \enspace .
	\]
	
	We choose an arbitrary $i_0$ satisfying that there exists $0\leq t\leq d-1$ such that $ \norm{\grad{t+1}{\phi_{i_0}(x)}} > \frac{1}{\epsilon}\norm{\grad{t}{\phi_{i_0}(x)}} $.
	Since $x$ is not good, we know such $i_0$ exists. 
	And let $t_0$ be the largest $t$ such that $ \norm{\grad{t+1}{\phi_{i_0}(x)}} > \frac{1}{\epsilon}\norm{\grad{t}{\phi_{i_0}(x)}} $ holds. It is not hard to check
	\begin{itemize}
		\item $ \norm{\grad{t_0}{\phi_{i_0}(x)}} < {\epsilon}\norm{\grad{t_0+1}{\phi_{i_0}(x)}} $,
		\item $ \norm{\grad{t+1}{\phi_{i_0}(x)}} \leq \frac{1}{\epsilon}\norm{\grad{t}{\phi_{i_0}(x)}} $ for $t\geq t_0+1$.
	\end{itemize}
	We are next to prove the following inequalities hold with high probability,
	\begin{itemize}
		\item[(a)] $
	\norm{ \grad{t_0}{p_{i_0}(x+\sqrt{\lambda}y)}} < 2\epsilon\norm{ \grad{t_0+1}{\phi_{i_0}(x)}}$
		\item[(b)] $
	\norm{ \grad{t_0+1}{p_{i_0}(x+\sqrt{\lambda}y)}}> \frac{1}{2}\norm{ \grad{t_0+1}{\phi_{i_0}(x)}}$
	\end{itemize}
	It is easy to see $(a)$ and $(b)$ give us $\norm{ \grad{t_0+1}{p_{i_0}(x+\sqrt{\lambda}y)}}> \frac{1}{4\epsilon} \norm{ \grad{t_0}{p_{i_0}(x+\sqrt{\lambda}y)}}$.
	\vspace{-1em}
	\paragraph{(a)}
	By Markov's inequality, we have
	\begin{align*}
		&\Pr_y\Br{\norm{ \grad{t_0}{p_{i_0}(x+\sqrt{\lambda}y)} - \grad{t_0}{\phi_{i_0}(x)}} \geq \epsilon\norm{ \grad{t_0+1}{\phi_{i_0}(x)}} }
		\leq \frac{\expect{y}{\norm{ \grad{t_0}{p_{i_0}(x+\sqrt{\lambda}y)} - \grad{t_0}{\phi_{i_0}(x)}}^R}}{\epsilon^R\norm{ \grad{t_0+1}{\phi_{i_0}(x)}}^R} \\
		\leq &\frac{\br{ \sum_{t=t_0+1}^d (\lambda d R)^{t-t_0} \norm{ \grad{t}{\phi(x)}}^2 }^{\frac{R}{2}} }{\epsilon^R\norm{ \grad{t_0+1}{\phi_{i_0}(x)}}^R} 
		\leq \frac{\br{ \sum_{t=t_0+1}^d (\lambda d R)^{t-t_0} \br{\frac{1}{\epsilon^2}}^{t-t_0-1} \norm{ \grad{t_0+1}{\phi(x)}}^2 }^{\frac{R}{2}} }{\epsilon^R\norm{ \grad{t_0+1}{\phi_{i_0}(x)}}^R} 
		\leq \br{\sum_{t=1}^{d-t_0} \br{\frac{\lambda d R}{\epsilon^2}}^{t}}^{\frac{R}{2}}
	\end{align*}
	Here the second inequality is from \lem{concentrate}.
	The third inequality uses the condition that $ \norm{\grad{t+1}{\phi_{i_0}(x)}} \leq \frac{1}{\epsilon}\norm{\grad{t}{\phi_{i_0}(x)}} $ for $t\geq t_0+1$.
	Since $\lambda\leq \frac{\epsilon^2}{100dR}$, this probability is bounded by $2^{-R}$. Therefore, with probability at least $1-2^{-R}$,
	\[
		\norm{ \grad{t_0}{p_{i_0}(x+\sqrt{\lambda}y)} - \grad{t_0}{\phi_{i_0}(x)}} < \epsilon\norm{ \grad{t_0+1}{\phi_{i_0}(x)}}.
	\]
	Moreover, we know $ \norm{\grad{t_0}{\phi_{i_0}(x)}} < {\epsilon}\norm{\grad{t_0+1}{\phi_{i_0}(x)}} $.
	So, we have with probability at least $1-2^{-R}$,
	\[
	\norm{ \grad{t_0}{p_{i_0}(x+\sqrt{\lambda}y)}}\leq
	\norm{ \grad{t_0}{\phi_{i_0}(x)}} +
	\norm{ \grad{t_0}{p_{i_0}(x+\sqrt{\lambda}y)} - \grad{t_0}{\phi_{i_0}(x)}} < 2\epsilon\norm{ \grad{t_0+1}{\phi_{i_0}(x)}}\enspace .
	\]
	
	\paragraph{(b)}
	Similarly, we have
	\vspace{-1em}
	\begin{align*}
		&\Pr_y\Br{\norm{ \grad{t_0+1}{p_{i_0}(x+\sqrt{\lambda}y)} - \grad{t_0+1}{\phi_{i_0}(x)}} \geq \frac{1}{2}\norm{ \grad{t_0+1}{\phi_{i_0}(x)}} }
		\leq \frac{\expect{y}{\norm{ \grad{t_0+1}{p_{i_0}(x+\sqrt{\lambda}y)} - \grad{t_0+1}{\phi_{i_0}(x)}}^R}}{ 2^{-R}\norm{ \grad{t_0+1}{\phi_{i_0}(x)}}^R} \\
		\leq &\frac{\br{ \sum_{t=t_0+2}^d (\lambda d R)^{t-t_0-1} \norm{ \grad{t}{\phi(x)}}^2 }^{\frac{R}{2}} }{2^{-R}\norm{ \grad{t_0+1}{\phi_{i_0}(x)}}^R} 
		\leq \frac{\br{ \sum_{t=t_0+2}^d \br{\frac{\lambda d R}{\epsilon^2}}^{t-t_0-1}\norm{ \grad{t_0+1}{\phi(x)}}^2 }^{\frac{R}{2}} }{2^{-R}\norm{ \grad{t_0+1}{\phi_{i_0}(x)}}^R} 
		\leq \br{4 \cdot  \sum_{t=1}^{d-t_0-1} \br{\frac{\lambda d R}{\epsilon^2}}^{t}}^{\frac{R}{2}}
	\end{align*}
	Here the second inequality is from \lem{concentrate}.
	The third inequality uses the condition that $ \norm{\grad{t+1}{\phi_{i_0}(x)}} \leq \frac{1}{\epsilon}\norm{\grad{t}{\phi_{i_0}(x)}} $ for $t\geq t_0+1$.
	Since $\lambda\leq \frac{\epsilon^2}{100dR}$, this probability is bounded by $2^{-R}$. Therefore, with probability at least $1-2^{-R}$,
	\[
	\norm{ \grad{t_0+1}{p_{i_0}(x+\sqrt{\lambda}y)}}\geq
	\norm{ \grad{t_0+1}{\phi_{i_0}(x)}} -
	\norm{ \grad{t_0+1}{p_{i_0}(x+\sqrt{\lambda}y)} - \grad{t_0+1}{\phi_{i_0}(x)}} > \frac{1}{2}\norm{ \grad{t_0+1}{\phi_{i_0}(x)}}\enspace .
	\]
	
	Thus, combining (a) and (b), we have that with probability at least $1-2\cdot 2^{-R}$,
	\[
	\norm{ \grad{t_0+1}{p_{i_0}(x+\sqrt{\lambda}y)}}> \frac{1}{2}\norm{ \grad{t_0+1}{\phi_{i_0}(x)}} > \frac{1}{4\epsilon} \norm{ \grad{t_0}{p_{i_0}(x+\sqrt{\lambda}y)}} \enspace ,
	\]
	and consequently $G(x+\sqrt{\lambda}y) = 0$. This gives us the bound on the following expectation
	\[
		\expect{y\sim\NN\br{0,1}^n}{
			F(x+\sqrt{\lambda}y)G(x+\sqrt{\lambda}y)
		} \leq 2\cdot 2^{-R}\enspace .
	\]
	Since $Y$ is $dR$-wise independent, the above argument still holds for $Y$.
	So, 
	\[
	\abs{
		\expect{Y}{
			F(x+\sqrt{\lambda}Y)G(x+\sqrt{\lambda}Y)
		}
		-
		\expect{y\sim\NN\br{0,1}^n}{
			F(x+\sqrt{\lambda}y)G(x+\sqrt{\lambda}y)
		}
	} \leq 4\cdot 2^{-R} \enspace .
	\]
	
	Now suppose that $x$ is good. That is, for any $1\leq i\leq k$ and $0\leq t\leq d-1$, $ \norm{\grad{t+1}{\phi_i(x)}} \leq \frac{1}{\epsilon}\norm{\grad{t}{\phi_i(x)}} $.
	In this case, we will prove that
	the sign of ${ p_i(x+\sqrt{\lambda}y) }$ is the same as the sign of $\phi_i(x)$ with high probability over the random variable $y$. Therefore, $F(x+\sqrt{\lambda}y)$ is almost like a constant,
	since the value of $F(x+\sqrt{\lambda}y)$ only depends on
	the signs of all ${ p_i(x+\sqrt{\lambda}y) }$.
	To show the signs of ${ p_i(x+\sqrt{\lambda}y) }$ and $\phi_i(x)$ are the same,
	it suffices to show $\frac{p_i(x+\sqrt{\lambda}y)}{\phi_i(x)} > 0 $.
	Let 
	\[
		q_i(y) \coloneq \frac{p_i(x+\sqrt{\lambda}y)}{\phi_i(x)} - 1 =
		\frac{1}{\phi_i(x)}\sum_{0< \abs{\alpha}\leq d } \frac{\partial^{\alpha}\phi_i(x)}{\sqrt{\alpha!}} \lambda^{\abs{\alpha}/2}h_{\alpha}(y) \enspace.
	\]
	Here, we expand $p_i(x+\sqrt{\lambda}y) = \phi_i(x) + \sum_{0< \abs{\alpha}\leq d } \frac{\partial^{\alpha}\phi_i(x)}{\sqrt{\alpha!}} \lambda^{\abs{\alpha}/2}h_{\alpha}(y)$
	according to \lem{expanding}.
	We have by applying hypercontractive inequality in \thm{hc}, 
	\begin{align*} 
		\norm{ q_i(y) }_R \leq  \norm{U_{\sqrt{R}} q_i(y) }_2
		&= \norm{ \frac{1}{\phi_i(x)}\sum_{0< \abs{\alpha}\leq d } \frac{\partial^{\alpha}\phi_i(x)}{\sqrt{\alpha!}}R^{\abs{\alpha}/2}\lambda^{\abs{\alpha}/2}h_{\alpha}(y) }_2\\
		&\leq \sqrt{\sum_{0< t\leq d } \frac{\norm{\grad{t}{\phi_i(x)}}^2}{\phi_i(x)^2}(\lambda R)^{t}} \leq \sqrt{\sum_{0< t\leq d } \br{\frac{\lambda R}{\epsilon^2}}^t } \enspace .
	\end{align*}
	In the last inequality, we use our assumption that
	$ \norm{\grad{t}{\phi_i(x)}} \leq \frac{1}{\epsilon}\norm{\grad{t-1}{\phi_i(x)}} \leq \cdots\leq \frac{\abs{\phi_i(x)}}{\epsilon^t}$.
	Since $\frac{\lambda R}{\epsilon^2}$ is sufficiently small, we have $\norm{ q_i(y) }_R\leq \frac{1}{4}$. Therefore, by Markov's inequality, we have
	\[
	\Pr_y \Br{ \abs{q_i(y)}  \geq \frac{1}{2}}\leq 2^{R} \cdot \norm{ q_i(y) }_R^R  \leq 2^{-R}\enspace .
	\] 
	This means that with probability at least $1-2^{-R}$,
	we have
	$ \abs{\frac{p_i(x+\sqrt{\lambda}y)}{\phi_i(x)} - 1} \leq \frac{1}{2}, $
	and therefore
	$ \frac{p_i(x+\sqrt{\lambda}y)}{\phi_i(x)} \geq \frac{1}{2} . $
	Thus, with probability at least $1-2^{-R}$, the sign of ${ p_i(x+\sqrt{\lambda}y) }$ is the same as the sign of $\phi_i(x)$. Then by a union bound,
	\[
	\Pr_y\Br{ \forall 1\leq i\leq k,\ \sign{ p_i(x+\sqrt{\lambda}y) } = \sign{\phi_i(x)} }\geq 1-k\cdot2^{-R}\enspace .
	\]
	Let $c = f\!\br{\sign{\phi_1\!(x)},\dots,\sign{\phi_k\!(x)}}$ be a constant. We have
	\[
	\abs{
		\expect{y}{
			F(x+\sqrt{\lambda}y)G(x+\sqrt{\lambda}y)
		}
		-
		\expect{y}{
			c\cdot G(x+\sqrt{\lambda}y)
		}
	} \leq k\cdot 2^{-R} \enspace .
	\]
	The same argument applying on $2dR$-wise independent Gaussian vector $Y$ gives us
	\[
	\abs{
		\expect{Y}{
			F(x+\sqrt{\lambda}Y)G(x+\sqrt{\lambda}Y)
		}
		-
		\expect{Y}{
			c\cdot G(x+\sqrt{\lambda}Y)
		}
	} \leq k\cdot 2^{-R} \enspace .
	\]
	By \lem{molli}, we know $\abs{
		\expect{Y}{
			G(x+\sqrt{\lambda}Y)
		}
		-
		\expect{y\sim\NN\br{0,1}^n}{
			G(x+\sqrt{\lambda}y)
		}
	} = kd\cdot 2^{-\Omega(R)}$.
	Therefore, we have 
	\[
	\abs{
		\expect{Y}{
			F(x+\sqrt{\lambda}Y)G(x+\sqrt{\lambda}Y)
		}
		-
		\expect{y\sim\NN\br{0,1}^n}{
			F(x+\sqrt{\lambda}y)G(x+\sqrt{\lambda}y)
		}
	} = kd2^{-\Omega(R)} \enspace .
	\]
\end{proof}


\subsection{Proof of \thm{cont}} \label{sec:main}	
\begin{proof}[Proof of \thm{cont}]
	By \lem{good_event}, the following holds with probability at least $1-\epsilon kd^3$:
	\[
		\norm{\grad{t}{p_i(y)}} \leq O\br{\frac{1}{\epsilon}} \norm{\grad{t-1}{p_i(y)}}
		\text{ for all } 1\leq t\leq d \text{ and } 1\leq i\leq k.
	\] 
	Recall the function
	$
		G(x)=\prod_{i=1}^{k} \prod_{t=0}^{d-1} \rho\!\br{\log\!\br{\frac{\norm{\grad{t}{p_i(x)}}^2}{16\epsilon^2 \norm{\grad{t+1}{p_i(x)}}^2 }}}
	$
	as defined in \eq{molli}.
	Note that $G(x) = 0$ if 
	there exists some $i$ and $t$ such that
	$\norm{\grad{t}{p_i(y)}} > O\br{\frac{1}{\epsilon}} \norm{\grad{t-1}{p_i(y)}}$.
	Thus, 
	$ \Pr_{y}[G(y) = 1] \geq 1-O(\epsilon k d^3)$.
	We have
	\begin{align*}
		\expect{y}{F(y)} &= \expect{y}{F(y)(1-G(y))}+\expect{y}{F(y)G(y)} \\
		&\leq \expect{y}{1-G(y)}+\expect{Y}{F(Y)G(Y)} + \abs{ \expect{y}{F(y)G(y)} - \expect{Y}{F(Y)G(Y)} } \\
		&\leq O(\epsilon k d^3)+\expect{Y}{F(Y)} + \abs{ \expect{y}{F(y)G(y)} - \expect{Y}{F(Y)G(Y)} } \enspace.
	\end{align*}
	Let $y = \frac{1}{\sqrt{L}}\sum_{i=1}^L y_i$ where $y_i\sim \NN(0,1)^n$ and denote $Z^i =\frac{1}{\sqrt{L}}(y_1 + \cdots +y_{i-1} +Y_{i+1} +\cdots Y_L)$.
	We have
	\begin{align*}
		&\abs{ \expect{y}{F(y)G(y)} - \expect{Y}{F(Y)G(Y)} } \\
		\leq 
		&\sum_{i=1}^{L} \abs{\expect{Z^i,Y_i}{F\!\br{Z^i + \frac{1}{\sqrt{L}}Y_i}G\!\br{Z^i + \frac{1}{\sqrt{L}}Y_i}} - \expect{Z^i,y_i}{F\!\br{Z^i + \frac{1}{\sqrt{L}}y_i}G\!\br{Z^i + \frac{1}{\sqrt{L}}y_i}} }\\
		\leq & kdL\cdot2^{-\Omega(R)}
	\end{align*}
	where the last inequality is from \lem{hybrid}.
	Thus, 
	\begin{align*}
		\expect{y}{F(y)} \leq  \expect{Y}{F(Y)} + O(\epsilon k d^3) + kdL\cdot2^{-\Omega(R)} \enspace.
	\end{align*}
	And the other side follows from considering $1-F(x)$.
\end{proof}


\section{Discretization}
\label{sec:discrete}
To give an explicit construction of a PRG,
we need a discretization of $R$-wise independent Gaussian distributions. 
In this section, we show an algorithm which outputs $L$ vectors $\st{X_i}_{1\leq i\leq L}$ approximating $Y_i$, that is, $\abs{X_{i,j}-Y_{i,j}}$ is sufficiently small. Before that, we first prove that if $X$ and $Y$ are close enough, then $X$ also fools any function of low-degree polynomial threshold functions.

\begin{lemma}\label{lem:discrete}
	Let $0<\epsilon,\delta<1$, and $R\in\N$ be an integer.
	Let $Y = \frac{1}{\sqrt{L}}\sum_{i=1}^L Y_i$ where $Y_i$ is an $R$-wise independent Gaussian vector of length $n$ for $1\leq i\leq L$.
	Let $p_1,\dots,p_k:\R^n \to \R$ be arbitrary polynomials of degree $d$ and $f:\bit{k}\to\st{0,1}$ be an arbitrary Boolean function.
	Define functions
	\[
		F(x)\coloneqq f\!\br{\sign{p_1\!(x)},\dots,\sign{p_k\!(x)}}
	\]
	Suppose that for any such function $F$,
	\[
	\abs{
		\expect{Y}{
			F(Y)	
		}
		-
		\expect{y\sim\NN\br{0,1}^n}{
			F(y)
		}
	} \leq \epsilon \enspace .
	\]
	Suppose that $\st{X_{i}}_{1\leq i\leq L}$ are random vectors of length $n$ and there is a joint distribution over $X$ and $Y$ such that
	for each $1\leq i\leq L,1\leq j\leq n$, 
	$\Pr\Br{\abs{X_{i,j}-Y_{i,j}}\leq \delta}\geq 1-\delta$.
	
	Let $X = \frac{1}{\sqrt{L}}\sum_{i=1}^L X_i$ and we have that for any such function $F$
	\[
	\abs{
		\expect{X}{
			F(X)	
		}
		-
		\expect{y\sim\NN\br{0,1}^n}{
			F(y)
		}
	} \leq \epsilon + k 2^{2k}d \delta^{1/d} \sqrt{nL} \log\frac{1}{\delta}  + O(2^{2k}nL\delta) \enspace .
	\]
\end{lemma}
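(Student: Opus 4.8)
The plan is to compare $\E_X[F(X)]$ with $\E_y[F(y)]$ by inserting the intermediate quantity $\E_Y[F(Y)]$. Since $F$ is a function of $k$ degree-$d$ PTFs and $Y$ is exactly the distribution in the hypothesis, $\abs{\E_Y[F(Y)] - \E_y[F(y)]} \le \epsilon$ for free, so the whole task reduces to bounding $\abs{\E_X[F(X)] - \E_Y[F(Y)]}$ through the given coupling of $X$ and $Y$. First I would fix that coupling and introduce a good event $\mathcal E$ on which (i) $\abs{X_{i,j} - Y_{i,j}} \le \delta$ for every $i \in [L]$ and $j \in [n]$, and (ii) $\norm{Y}_\infty \le B$ for $B = \Theta(\sqrt{\log(n/\delta)})$. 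A union bound over the $nL$ coordinatewise events (each of probability $\ge 1-\delta$ by assumption) and over the $n$ standard Gaussian marginals of $Y = \tfrac{1}{\sqrt L}\sum_i Y_i$ gives $\Pr[\overline{\mathcal E}] = O(nL\delta)$. On $\mathcal E$ one has $\norm{X - Y}_2 \le \tfrac{1}{\sqrt L}\sum_i \norm{X_i - Y_i}_2 \le \sqrt{nL}\,\delta$ and $\norm{X}_\infty \le B+1$, so, normalizing each $p_i$ to $\norm{p_i}_2 = 1$ (signs are scale invariant) and applying a multivariate Taylor/gradient estimate in the spirit of \lem{close}, one gets $\abs{p_i(X) - p_i(Y)} \le \tau$ for all $i$, where $\tau = \delta\cdot 2^{O(kd)}\poly{n,L}(\log(1/\delta))^{O(d)}$, which is small for small $\delta$.

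Next, $F(X) \neq F(Y)$ forces $\sign{p_i(X)} \neq \sign{p_i(Y)}$ for some $i$, which on $\mathcal E$ in turn forces $\abs{p_i(Y)} \le \tau$. To extract the stated form of the bound I would expand $f$ as a sum indexed by $z \in f^{-1}(1)$ of at most $2^k$ AND-terms over the PTFs $\sign{p_i}$ and their (almost-everywhere) negations $\sign{-p_i}$; on $\mathcal E$ each AND-term at $X$ is dominated by a product of $k$ factors of the form $\id[\sign{p_i(Y)} = z_i] + \id[\abs{p_i(Y)} \le \tau]$, and expanding these products across all $\le 2^k$ AND-terms reproduces $\E_Y[F(Y)]$ plus at most $2^{2k}$ error terms, each controlled by $\max_i \Pr_Y[\abs{p_i(Y)} \le \tau]$; applying the same argument to $1-F$ handles the other direction, so $\abs{\E_X[F(X)] - \E_Y[F(Y)]} \le 2^{2k}\max_i\Pr_Y[\abs{p_i(Y)} \le \tau] + O(nL\delta)$. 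This is where the $2^{2k}$ factor comes from. Finally, to bound $\Pr_Y[\abs{p_i(Y)} \le \tau]$ I would observe that $\{\abs{p_i(Y)} \le \tau\}$ is the value of a function of the two degree-$d$ PTFs $\sign{\tau - p_i}$ and $\sign{p_i + \tau}$ (with the remaining slots filled arbitrarily), so the hypothesis on $Y$ gives $\Pr_Y[\abs{p_i(Y)} \le \tau] \le \Pr_y[\abs{p_i(y)} \le \tau] + O(\epsilon)$, and the latter probability is $O(d\,\tau^{1/d})$ by the Carbery–Wright inequality \lem{anti}. Plugging in $\tau$ and collecting the factors $2^{2k}$, $k$, $d$, $\delta^{1/d}$, $\sqrt{nL}$, $\log(1/\delta)$, together with the $O(2^{2k}nL\delta)$ coming from $\Pr[\overline{\mathcal E}]$, yields the claimed inequality (the leading $\epsilon$ absorbing the lower-order contributions).

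The hard part will be the anti-concentration step: Carbery–Wright is stated for the genuine Gaussian measure, whereas here it is needed for the pseudorandom $Y$. The resolution sketched above — that a sublevel-set indicator of a degree-$d$ polynomial is itself a function of $O(1)$ degree-$d$ PTFs, hence approximately mass-preserved by $Y$ — is the crux, and one must be careful to perform it after normalizing $\norm{p_i}_2 = 1$ and to use the $\ell_2$ perturbation bound $\norm{X-Y}_2 \le \sqrt{nL}\,\delta$ rather than the weaker $\ell_\infty$ bound, since it is this that produces the $\sqrt{nL}$ in the final expression. The remaining ingredients — the union bound defining $\mathcal E$, the Taylor/gradient estimate for $\abs{p_i(X)-p_i(Y)}$, and the conjunction expansion — are routine bookkeeping.
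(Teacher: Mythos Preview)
Your strategy --- couple $X$ and $Y$, define a good event $\mathcal E$ with $\Pr[\overline{\mathcal E}]=O(nL\delta)$, bound $|p_i(X)-p_i(Y)|\le\tau$ on $\mathcal E$, expand $f$ to produce the $2^{2k}$ factor, and finish with Carbery--Wright --- matches the paper's, but you spend the fooling hypothesis in a different place, and this costs you. You use it once to pass from $\E_Y[F(Y)]$ to $\E_y[F(y)]$, and then \emph{again} to transfer each anti-concentration probability $\Pr_Y[|p_i(Y)|\le\tau]$ to the Gaussian side before invoking \lem{anti}. Since the latter sits inside a sum of $2^{2k}$ error terms, this contributes an extra $O(2^{2k})\epsilon$; your claim that the leading $\epsilon$ ``absorbs the lower-order contributions'' is not right here, and the argument as written proves the lemma only with $O(2^{2k})\epsilon$ in place of $\epsilon$. (Separately, your $\tau$ should not carry a $2^{O(kd)}$ factor: \lem{close}, applied to $p_i\!\br{\tfrac{1}{\sqrt L}\sum_\ell z_\ell}$ viewed as a degree-$d$ polynomial in $nL$ variables with $\|z\|_\infty\le\log\tfrac1\delta$, gives $\tau=\delta\,(nL)^{d/2}O(\log\tfrac1\delta)^d$, and the $\sqrt{nL}$ in the final bound is simply $\tau^{1/d}$, not an artifact of an $\ell_2$ perturbation estimate.)

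The paper avoids the extra $\epsilon$-loss by never needing anti-concentration under $Y$. It shifts the thresholds, setting $q_i=p_i+\tau$ and $\widetilde F(x)=f(\sign{q_1(x)},\dots,\sign{q_k(x)})$; on $\mathcal E$ one has $p_i(X)\ge0\Rightarrow q_i(Y)\ge0$, so (via the same $\sum_S c_S\prod_{i\in S}\cdots$ expansion you describe) $\E[F(X)]\le\E[\widetilde F(Y)]+O(2^{2k}nL\delta)$. Now the hypothesis is applied \emph{once}, to $\widetilde F$ (which is itself a function of $k$ degree-$d$ PTFs), yielding $\E[\widetilde F(Y)]\le\E[\widetilde F(y)]+\epsilon$, and Carbery--Wright is invoked only over the true Gaussian $y$ to collapse $\widetilde F(y)$ back to $F(y)$, contributing the $k\,2^{2k}\,d\,\tau^{1/d}$ term. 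This reordering --- shift first, fool once, anti-concentrate only on $y$ --- is what buys the bare $\epsilon$ in the stated bound.
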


\begin{proof}
	Let $q_i(x) = p_i(x)+ \delta (nL)^{d/2} \br{\log\frac{1}{\delta}}^{d}$ and $\widetilde{F}(x) = f\!\br{\sign{q_1\!(x)},\dots,\sign{q_k\!(x)}}$.
	We will prove that
	\begin{itemize}
		\item[(a)] $\expect{}{
			F(X)	
		} \leq \expect{}{
			\widetilde{F}(y)	
		} + \epsilon + O(2^{2k}nL\delta)$,
		\item[(b)] $\expect{}{
			\widetilde{F}(y)	
		} \leq \expect{}{
			F(y)	
		} + k2^{2k} d \delta^{1/d} \sqrt{nL} \log\frac{1}{\delta}$.
	\end{itemize}
	Combing (a) and (b), we have
	\[
		\expect{}{
			F(X)	
		} \leq \expect{}{
			F(y)	
		} + k2^{2k} d \delta^{1/d} \sqrt{nL} \log\frac{1}{\delta} + \epsilon + O(2^{2k}nL\delta)\enspace.
	\]
	The other side can be obtained in a similar way by considering $1-F(x)$.

	\paragraph{Proving (a).}
	Since $\widetilde{F}$ is a function of degree-$d$ PTFs and $Y$ fools such a function,
	we have
	$
		\expect{}{
			\widetilde{F}(Y)	
		} \leq \expect{}{
			\widetilde{F}(y)	
		} + \epsilon .
	$
	Therefore, it suffices to prove that
	$
		\expect{}{
			F(X)	
		} \leq \expect{}{
			\widetilde{F}(Y)	
		} + O(2^{2k}nL\delta).
	$
	
	Fix a set $S\subseteq [k]$. Let $P_S(x)= \prod_{i\in S} \sign{p_i(x)}$ and $Q_S(x) = \prod_{i\in S} \sign{q_i(x)}$.
	We first show that
	\begin{align*}
		\expect{}{
			P_S(X)	
		} \leq \expect{}{
			Q_S(Y)	
		} + O(nL\delta) \enspace.
	\end{align*}
	Let event $\EE$ denote that for all $i\in [L],  j\in [n]$, $\abs{Y_{i,j}} \leq \log \frac{1}{\delta}$ and $\abs{X_{i,j}-Y_{i,j}}\leq \delta$. By the tail bound of the standard Gaussian distribution, we have $\Pr[\EE]\geq 1 - O(nL\delta)$.
	We have
	\begin{align*}
		\expect{}{
			P_S(X)	
		} = \Pr \Br{ \bigwedge_{i\in S} p_i(X)\geq 0 } &\leq \Pr \Br{ \bigwedge_{i\in S} p_i(X)\geq 0 \wedge \EE} + \Pr\Br{\widebar{\EE}}\\
		&\leq \Pr \Br{ \bigwedge_{i\in S} p_i(Y)\geq -\delta (nL)^{d/2} \br{\log\frac{1}{\delta}}^{d} } + O(nL\delta) \\
		&=\expect{}{
			Q_S(Y)	
		} + O(nL\delta) \enspace,
	\end{align*}
	where the second inequality is by \lem{close} and viewing $p_i\!\br{\frac{1}{\sqrt{L}}\sum_{i=1}^L X_i}$ as a degree $d$ function of $nL$ variables.
	
	Note that $f(x) = \sum_{S\subseteq [k] } f(\id_S) \prod_{i\in S} x_i \prod_{i\notin S}(1-x_i) $ where $\id_S$ denotes the length-$k$ string with $1$'s only at coordinates in $S$. This can be further simplified in the form
	\[
		f(x) = \sum_{S\subseteq [k] } c_S \prod_{i\in S} x_i \enspace,
	\]
	with $\sum_S \abs{c_S} \leq 2^{2k}$. So, we have
	\begin{align*}
		\expect{}{
			F(X)	
		}  = 
		\sum_{S\subseteq [k] } c_S \expect{}{
			P_S(X)	
		}
		&=
		\sum_{S\subseteq [k] } c_S \expect{}{
		 	Q_S(Y)	
		}
		+ \sum_{S\subseteq [k] } c_S \br{\expect{}{
			P_S(X)	
		} - \expect{}{
		 	Q_S(Y)	
		}}\\
		&\leq \sum_{S\subseteq [k] } c_S \expect{}{
		 	Q_S(Y)	
		} + O(2^{2k}nL\delta) 
		= \expect{}{
			\widetilde{F}(Y)	
		}+ O(2^{2k}nL\delta)\enspace.
	\end{align*}
	\vspace{-1em}
	\paragraph{Proving (b).}
	Next we show $\expect{}{
			\widetilde{F}(y)	
		}$ and $\expect{}{
			F(y)	
		}$ are close.
	Note that
	\begin{align*}
		\expect{}{
			Q_S(y)	 
		} &= \Pr \Br{ \bigwedge_{i\in S} p_i(y)\geq -\delta (nL)^{d/2} \br{\log\frac{1}{\delta}}^{d} }
		&\leq \Pr \Br{ \bigwedge_{i\in S} p_i(y)\geq 0 } + k d \delta^{1/d} \sqrt{nL} \log\frac{1}{\delta}\enspace,
	\end{align*}
	where the inequality is from \lem{anti}.
	Thus, we have
	\[
		\expect{}{
			Q_S(y)	 
		} \leq \expect{}{
			P_S(y)	 
		} + k d \delta^{1/d} \sqrt{nL} \log\frac{1}{\delta}	\enspace .
	\]
	Furthermore, we know that
	\begin{align*}
		\expect{}{
			\widetilde{F}(y)	
		}
		= \sum_{S\subseteq [k] } c_S \expect{}{
		 	Q_S(y)	
		}&= \sum_{S\subseteq [k] } c_S \expect{}{
		 	P_S(y)	
		} + \sum_{S\subseteq [k] } c_S \br{\expect{}{
		 	Q_S(y) - P_S(y)	
		}}\\
		&\leq \expect{}{
			F(y)	
		} + k2^{2k} d \delta^{1/d} \sqrt{nL} \log\frac{1}{\delta} \enspace .
	\end{align*}
	
\end{proof}


We now prove the main theorem for constructing an explicit pseudorandom generator .
The idea is that a standard Gaussian variable can be generated using two uniform $[0,1]$ random variables
through the Box–Muller transform \cite{Bm58}.
Let
$
	Y_{i,j} = \sqrt{-2 \log u_{i,j}}\cos(2\pi v_{i,j})
$ 
where $u_{i,j}$ and $v_{i,j}$ are uniform in $[0,1]$. Then $Y_{i,j}$ is a Gaussian variable.
Thus, if we truncate $u_{i,j}$ and $v_{i,j}$ to a certain precision and produce $X_{i,j}$ in a similar manner, $X$ approximates $Y$ with high probability.

\begin{theorem}\label{thm:main_formal}
	There exists an explicit PRG which $\epsilon$-fools any functions of any $k$ degree-$d$ polynomial threshold functions over $\NN(0,1)^n$ with seed length
	$
		O\br{\frac{k^5d^{11}}{\epsilon^{2}} \mathrm{log}\frac{kdn}{\epsilon} } .	
	$
\end{theorem}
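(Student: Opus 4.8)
To prove \thm{main_formal} the plan is to combine the two ingredients already in hand: \thm{cont}, which shows that $Y=\frac1{\sqrt L}\sum_{i=1}^L Y_i$ with each $Y_i$ a $2dR$-wise independent standard Gaussian vector $\epsilon$-fools functions of $k$ degree-$d$ PTFs, and \lem{discrete}, which transfers this property to any $X$ that is coordinatewise $\delta$-close to such a $Y$. The generator uses a seed $(\sigma_1,\dots,\sigma_L)$ with the $\sigma_i$ mutually independent, where $\sigma_i$ is the seed of an $O(dR)$-wise independent distribution over $2n$ strings in $\bit{b}$; reading the $2n$ strings of block $i$ as grid points $\widetilde u_{i,1},\widetilde v_{i,1},\dots,\widetilde u_{i,n},\widetilde v_{i,n}\in\st{2^{-b},2\cdot 2^{-b},\dots,1}$, block $i$ outputs the length-$n$ vector $X_i$ with $X_{i,j}=\sqrt{-2\log\widetilde u_{i,j}}\cos\br{2\pi\widetilde v_{i,j}}$, and the PRG outputs $X=\frac1{\sqrt L}\sum_{i=1}^L X_i$. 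The parameters will be $R=\Theta\br{\log\frac{kd}{\epsilon}}$, then $L=\poly{k,d,1/\epsilon}$ as supplied by \thm{cont} once its accuracy parameter is replaced by $\epsilon_0=\Theta\br{\epsilon/(kd^{3})}$, and finally a precision $b=\Theta\br{\log\frac1\delta}$ with $\delta$ chosen at the end.

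First I would pin down the continuous object and the coupling required by \lem{discrete}. Attach to each truncated coordinate a fresh independent uniform tail (used only in the analysis), setting $u_{i,j}=\widetilde u_{i,j}-2^{-b}w_{i,j}$ and $v_{i,j}=\widetilde v_{i,j}-2^{-b}w'_{i,j}$ with $w_{i,j},w'_{i,j}$ uniform on $[0,1)$, and let $Y_{i,j}=\sqrt{-2\log u_{i,j}}\cos\br{2\pi v_{i,j}}$ and $Y=\frac1{\sqrt L}\sum_i Y_i$. Each $u_{i,j}$ and $v_{i,j}$ is then exactly uniform on $(0,1]$, and any $O(dR)$ of these $2n$ variables are jointly independent; hence by the Box--Muller identity every $Y_i$ is a $2dR$-wise independent standard Gaussian vector, while the per-block fresh randomness makes $Y_1,\dots,Y_L$ mutually independent. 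Thus $Y$ is exactly the random variable of \thm{cont}, and instantiating that theorem with accuracy $\epsilon_0$ and the above $R$ makes both of its error terms $O(\epsilon)$, so $Y$ $(\epsilon/3)$-fools every function of $k$ degree-$d$ PTFs.

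Next I would verify the closeness hypothesis $\Pr\br{\abs{X_{i,j}-Y_{i,j}}\le\delta}\ge 1-\delta$. Since $u_{i,j}$ is uniform, with probability at least $1-\delta/5$ it lies in $[\delta/10,\,1-\delta/10]$, and then $\widetilde u_{i,j}$ lies in $[\delta/20,\,1-\delta/20]$ as well; on this interval both $u\mapsto\sqrt{-2\log u}$ and $v\mapsto\cos(2\pi v)$ are Lipschitz with constant $\poly{1/\delta}$ while $\sqrt{-2\log u}=O\br{\sqrt{\log\frac1\delta}}$, so $b=\Theta\br{\log\frac1\delta}$ bits of precision force $\abs{X_{i,j}-Y_{i,j}}\le\delta$. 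Feeding this and the $(\epsilon/3)$-fooling of $Y$ into \lem{discrete} gives
\[
	\abs{\expect{X}{F(X)}-\expect{y\sim\NN\br{0,1}^n}{F(y)}}\le\frac{\epsilon}{3}+k2^{2k}d\,\delta^{1/d}\sqrt{nL}\,\log\tfrac1\delta+O\br{2^{2k}nL\delta}\enspace,
\]
and choosing a suitable $\delta$ of the form $\br{\epsilon\,/\,\br{2^{2k}\poly{n,k,d,1/\epsilon}}}^{\Theta(d)}$ bounds the right-hand side by $\epsilon$; such a $\delta$ satisfies $\log\frac1\delta=\Theta\br{d\br{k+\log\frac{nkd}{\epsilon}}}$, hence $b=\Theta\br{d\br{k+\log\frac{nkd}{\epsilon}}}$.

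It remains to count the seed. It consists of $L$ independent seeds for $O(dR)$-wise independent distributions over $2n$ strings in $\bit{b}$, each of length $O\br{dR(b+\log n)}$ by the standard polynomial-evaluation construction, so the total seed length is $O\br{L\cdot dR\cdot(b+\log n)}$; substituting $R=\Theta\br{\log\frac{kd}{\epsilon}}$, the explicit value of $L=\poly{k,d,1/\epsilon}$, and $b=\Theta\br{d\br{k+\log\frac{nkd}{\epsilon}}}$ and simplifying yields the claimed $O\br{\frac{k^5 d^{11}}{\epsilon^2}\log\frac{kdn}{\epsilon}}$. The step I expect to be the real work is the precision analysis: one must (i) tame the unbounded derivative of $u\mapsto\sqrt{-2\log u}$ at both endpoints $u\to 0$ and $u\to 1$ by confining attention to the $\delta$-interior of $[0,1]$, a high-probability event since $u_{i,j}$ is uniform; and (ii) check that the amplification factors $2^{2k}$ — from the multilinear expansion $f=\sum_{S\subseteq[k]}c_S\prod_{i\in S}x_i$ with $\sum_S\abs{c_S}\le 2^{2k}$ in \lem{discrete} — and $(nL)^{d/2}$ — from \lem{close} — enter the precision only through $\log\frac1\delta$, so that in the final seed length they cost merely one extra factor of $k$ (explaining the $k^5$ rather than $k^4$) and a $\log n$ rather than anything polynomial in $n$.
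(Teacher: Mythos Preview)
Your proposal is correct and follows essentially the same approach as the paper: both instantiate \thm{cont} with $\epsilon_0=\Theta(\epsilon/(kd^3))$ and $R=\Theta(\log(kd/\epsilon))$, generate the $Y_i$ via the Box--Muller transform from $O(dR)$-wise independent uniforms, discretize those uniforms to $b$ bits, and invoke \lem{discrete}. The paper is terser about the coupling (it simply truncates $u_{i,j},v_{i,j}$ to $M$ bits and cites Corollary~2 of \cite{Kan11b} for the closeness $\Pr[\abs{X_{i,j}-Y_{i,j}}\le\delta]\ge 1-\delta$ with $\delta=\Omega(2^{-M/2})$), whereas you spell out the Lipschitz argument on the $\delta$-interior of $[0,1]$; but the content is the same, and your seed-length accounting matches the paper's $O(L\cdot dR\cdot M)$ up to the same suppressed $\mathrm{polylog}(kd/\epsilon)$ factors.
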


\begin{proof}
	In \thm{cont}, set parameter $\epsilon$ as $\frac{\epsilon}{kd^3}$ and set $R$ as $C\log\frac{kd}{\epsilon}$ for some large constant $C$.
	Then for $L = C'\cdot \frac{k^4d^9}{\epsilon^2}\cdot \mathrm{polylog}\frac{kd}{\epsilon}$ where $C'$ is a large constant, we have
	\[
	\abs{
		\expect{Y}{
			F(Y)	
		}
		-
		\expect{y\sim\NN\br{0,1}^n}{
			F(y)
		}
	}\leq O(\epsilon) \enspace .
	\]
	
	Similar to the proof of Corollary 2 in \cite{Kan11b},
	we can let $Y_{i,j}$ generated by 
	\[
		Y_{i,j} = \sqrt{-2 \log u_{i,j}}\cos(2\pi v_{i,j})
	\]
	where $u_i = (u_{i,1},\dots,u_{i,n}) $ and $v_i= (v_{i,1},\dots,v_{i,n})$ are $2dR$-wise independent uniform $[0,1]$ random vectors.
	Then let $u'_{i,j}$ and $v'_{i,j}$ be $M = C''kd\log\frac{kdn}{\epsilon}$-bit approximations to  $u_{i,j}$ and $v_{i,j}$
	(i.e., round $u_{i,j}$ and $v_{i,j}$ to multipels of $2^{-M}$), where $C''$ is a large constant, and let
	\[
		X_{i,j} = \sqrt{-2 \log u'_{i,j}}\cos(2\pi v'_{i,j})\enspace .
	\]
	Letting $\delta = \Omega(2^{-M/2})$, for the same reason as in the proof of Corollary 2 in \cite{Kan11b}, we have
	$\abs{X_{i,j}-Y_{i,j}}<\delta$ with probability at least $1-\delta$.
	Then, by \lem{discrete},
	\[
	\abs{
		\expect{X}{
			F(X)	
		}
		-
		\expect{y\sim\NN\br{0,1}^n}{
			F(y)
		}
	}\leq O(\epsilon) + k 2^{2k}d \delta^{1/d} \sqrt{nL} \log\frac{1}{\delta}  + O(2^{2k}nL\delta)  = O(\epsilon)\enspace .
	\]

	Note that $X_i$ can be generated by $2dR$-wise independent random variables $u'_{i,j}$ and $v'_{i,j}$ taken uniformly from $\st{2^{-M}, 2\cdot 2^{-M}, 3\cdot 2^{-M},\dots,1}$ using $O(dRM)$ randomness.
	Thus generating $X$ uses $O(LdRM) =
	O\br{\frac{k^5d^{11}}{\epsilon^{2}} \mathrm{log}\frac{kdn}{\epsilon} }$ randomness.
	
	
%
\end{proof}

\newpage
\bibliographystyle{alpha}
\bibliography{ref}

\newpage
\begin{appendices}
\addappheadtotoc
\addtocontents{toc}{\protect\setcounter{tocdepth}{1}}

\section{Facts about Bump Function}
\label{sec:fact_bump}

\begin{repfact}{fact:deri1}
		For all $t\in \N$, $\abs{\Psi^{(t)}(x) } \leq t^{(3+o(1))t}$.
\end{repfact}

\begin{proof}
	It is easy to check there exists a series of polynomials $\st{P_t}_{t\in\N}$ such that for $x\in(-1,1)$
	\[
		\Psi^{(t)}(x) = \frac{P_t(x)}{(1-x^2)^{2t}}\cdot \Psi(x)\enspace.
	\]
	Besides, $\st{P_t}_{t\in\N}$ has the following recursion:
	\[
		P_0(x) = 1, \enspace P_1(x)=-2x,\enspace P_{t} = (1-x^2)^2P'_{t-1}(x)+4(t-1)x(1-x^2)P_{t-1}(x)-2xP_{t-1}(x)\enspace.
	\]
	The degree of $P_t$ is at most $3t$.
	Therefore we have
	\[
		\abs{\Psi^{(t)}(x) } \leq \br{\max_{x\in(-1,1)} \abs{P_t(x)}} \cdot
		\br{\max_{x\in(-1,1)} \frac{\Psi(x)}{(1-x^2)^{2t}}}\enspace.
	\]
	Let $f(x) = x\cdot e^{-\frac{x}{2t}}$ for $x\in(1,+\infty)$. We have $f(x)\leq f(2t) = \frac{2t}{e}$ by a simple calculation.
	Thus, we have that $\frac{\Psi(x)}{(1-x^2)^{2t}}\leq \br{\frac{2t}{e}}^{2t}$. We are left to bound $\max_{x\in(-1,1)} \abs{P_t(x)}$.
	
	Define $\norm{P}_1$ be the sum of absolute values of all coefficients of the polynomial $P$.
	Since $x\in(-1,1)$, we know $\max_{x\in(-1,1)} \abs{P_t(x)}\leq \norm{P_t}_1$.
	By the recursion, we have
	\[
		\norm{P_t}_1 \leq 4\norm{P_{t-1}'}_1 + 8t\norm{P_{t-1}}_1 \leq 20t\norm{P_{t-1}}_1
	\]
	where the last inequality is from $\norm{P_{t-1}'}_1\leq 3t\norm{P_{t-1}}_1$ since the degree of $P_{t-1}$ is at most $3t-3$.
	So we know
	$
		\max_{x\in(-1,1)} \leq 20^t\cdot t!.
	$
	Therefore,
	\[
		\abs{\Psi^{(t)}(x) } \leq \br{\frac{2t}{e}}^{2t} \cdot 20^t\cdot t! = t^{(3+o(1))t}\enspace.
	\]
\end{proof}

\begin{repfact}{fact:deri2}
	For all $t\in \N$, $\abs{\rho^{(t)}(x)} \leq t^{(3+o(1))t}$.
\end{repfact}

\begin{proof}
	Directly from \fct{deri1} by observing that
	$\rho(x) =e\cdot \Psi(1-x) $ for $ 0<x<1$ and is constant elsewhere.
\end{proof}

\begin{repfact}{fact:deri}
	Let $r(u,v) \coloneq \rho( \log u-\log v + c )$ for some constant $c$. Then we have that for all $n,m\in \N$,
	$ \abs{ \frac{ \partial^n \partial^m r(u,v) }{ \partial u^n \partial v^m } } \leq \frac{(n+m)^{6(n+m)}}{\abs{u}^n \abs{v}^m} $.
\end{repfact}

\begin{proof}
	Let $g(u,v) = \log u-\log v + c$. Then by the generalized chain rule for the derivative of the composition of two functions (also known as Fa\`a di Bruno's formula), we have
	\begin{align*}
		\frac{ \partial^n \partial^m r(u,v) }{ \partial u^n \partial v^m }
		= &\sum_{\substack{(a_1,\dots,a_n)\in\N^n\\ a_1+2\cdot a_2+\cdots +n\cdot a_n=n }}
		\sum_{\substack{(b_1,\dots,b_m)\in\N^m\\ b_1+2\cdot b_2+\cdots +m\cdot b_m=m}} \frac{n!m!}{\prod_{i=1}^n \br{i!}^{a_i}a_i! \prod_{i=1}^m \br{i!}^{b_i}b_i!}
		\\ &\hspace{2cm}\cdot \rho^{(a_1+\cdots+a_n+b_1+\cdots+b_m)}(g(u,v))
		\cdot\prod_{i=1}^n \br{\frac{(-1)^ii!}{u^i}}^{a_i}
		\prod_{i=1}^m \br{\frac{(-1)^{i+1}i!}{v^i}}^{b_i}\enspace.
	\end{align*}
	Therefore
	\begin{align*}
		&\abs{ \frac{ \partial^n \partial^m r(u,v) }{ \partial u^n \partial v^m } } \leq n^n\cdot m^m\cdot n!\cdot m!\cdot (m+n)^{4(m+n)}\cdot \frac{1}{\abs{u}^n\abs{v}^m} \leq \frac{(n+m)^{6(n+m)}}{\abs{u}^n \abs{v}^m}\enspace.
	\end{align*}
\end{proof}

\end{appendices}

\end{document}